\newcommand{\todo}[1][]{%
  \ifx/#1/%
    \textcolor{red}{TODO!}%
  \else%
    \textcolor{red}{todo: #1}%
  \fi%
}
\spnewtheorem{observation}{Observation}{\bfseries}{\itshape}
\spnewtheorem{myclaim}{Claim}{\bfseries}{\itshape}
\newcommand{\claimqed}{\hfill$\lrcorner$}
\newcommand{\qedhere}{\qed}
\newenvironment{claimproof}[1][]{%
\ifx&#1&%
  \begin{proof}\renewcommand{\qed}{\claimqed}%
\else%
  \begin{proof}[#1]\renewcommand{\qed}{\claimqed}%
\fi%
}{\end{proof}\renewcommand{\qed}{\plainqed}}
\newcommand{\ONCF}{\textsc{ONCF-Coloring}\xspace}
\newcommand{\CNCF}{\textsc{CNCF-Coloring}\xspace}
\newcommand{\containment}{\ensuremath{\mathsf{NP  \subseteq coNP/poly}}\xspace}
\newcommand{\red}{\ensuremath{\text{\emph{red}}}\xspace}
\newcommand{\blue}{\ensuremath{\text{\emph{blue}}}\xspace}
\newcommand{\Oh}{\ensuremath{\mathcal{O}}\xspace}
\newcommand{\ba}{{\bf a}}
\newcommand{\bb}{{\bf b}}
\newcommand{\bi}{{\bf i}}
\newcommand{\bj}{{\bf j}}
\newcommand{\bu}{{\bf u}}
\newcommand{\bv}{{\bf v}}
\newcommand{\bw}{{\bf w}}
\newcommand{\br}{{\bf r}}
\newcommand{\drootCSP}{\textsc{$d$-Polynomial root CSP}\xspace}
\newcommand{\rootCSP}[1]{\textsc{\ensuremath{#1}-Polynomial root CSP}\xspace}
\newcommand{\defproblem}[3]{
 \vspace{1mm}
\noindent\fbox{
 \begin{minipage}{0.96\textwidth}
 \begin{tabular*}{\textwidth}{@{\extracolsep{\fill}}lr} \textsc{#1} &  \\ \end{tabular*}
 {\bf{Input:}} #2 \\
 {\bf{Question:}} #3
 \end{minipage}
 }
 \vspace{1mm}
}
\title{Parameterized Complexity of Conflict-free Graph Coloring
\thanks{This research was done with support by the NWO Gravitation grant NETWORKS. The research was partially done when the first and second author were associated with Eindhoven University of Technology.}}
\titlerunning{Conflict-free Graph Coloring} 
\author{Hans L. Bodlaender\inst{1} \and Sudeshna Kolay \inst{2} \and Astrid Pieterse \inst{3}}
\institute{Utrecht University, Netherlands.
\email{h.l.bodlaender@uu.nl}
\and
{Ben Gurion University of Negev, Israel.
\email{sudeshna.kolay@gmail.com}
}
\and
{Eindhoven University of Technology, Netherlands.
\email{astridpieterse@outlook.com}}
}
\date{}
\begin{document}
\maketitle
 \begin{abstract}
Given a graph $G$, a $q$-open neighborhood conflict-free coloring or $q$-ONCF-coloring is a vertex coloring $c\colon V(G) \rightarrow \{1,2,\ldots,q\}$ such that for each vertex $v \in V(G)$ there is a vertex in $N(v)$ that is uniquely colored from the rest of the vertices in $N(v)$. When we replace $N(v)$ by the closed neighborhood $N[v]$, then we call such a coloring a $q$-closed neighborhood conflict-free coloring or simply $q$-CNCF-coloring. In this paper, we study the NP-hard decision questions of whether for a constant $q$ an input graph has a $q$-ONCF-coloring or a $q$-CNCF-coloring.
We will study these two problems in the parameterized setting.
First of all, we study running time bounds on FPT-algorithms for these problems, when parameterized by treewidth. We improve the existing upper bounds, and also provide lower bounds on the running time under ETH and SETH. 
Secondly, we study the kernelization complexity of both problems, using vertex cover as the parameter.  We show that  both $(q \geq 2)$-ONCF-coloring and $(q \geq 3)$-CNCF-coloring cannot have polynomial kernels when parameterized by the size of a vertex cover unless \containment. On the other hand, we obtain a
polynomial kernel for  $2$-CNCF-coloring parameterized by vertex cover.
We conclude the study with some combinatorial results. Denote $\chi_{ON}(G)$ and $\chi_{CN}(G)$ to be the minimum number of colors required to ONCF-color and CNCF-color $G$, respectively. Upper bounds on $\chi_{CN}(G)$ with respect to structural parameters like minimum vertex cover size, minimum feedback vertex set size and treewidth are known. To the best of our knowledge only an upper bound on $\chi_{ON}(G)$ with respect to minimum vertex cover size was known. We provide tight bounds for $\chi_{ON}(G)$ with respect to minimum vertex cover size. Also, we provide the first upper bounds on $\chi_{ON}(G)$ with respect to minimum feedback vertex set size and treewidth. 
 \keywords{Conflict-free coloring, kernelization, fixed-parameter tractability, combinatorial bounds}
 \end{abstract}
 \section{Introduction}

Often, in frequency allocation problems for cellular networks, it is important to allot a unique frequency for each client, so that at least one frequency is unaffected by cancellation. Such problems can be theoretically formulated as a coloring problem on a set system, better known as conflict-free coloring~\cite{even}. Formally, given a set system $\mathcal{H} = (U, \mathcal{F})$, a $q$-conflict-free coloring $c \colon U \rightarrow \{1,2,\ldots,q\}$ is a function where for each set $f \in \mathcal{F}$, there is an element $v \in f$ such that for all $w \neq v \in f$, $c(v) \neq c(w)$. In other words, each set $f$ has at least one element that is uniquely colored in the set. This variant of coloring has also been extensively studied for set systems induced by various geometric regions~\cite{ajwani2012conflict,har,Smorodinsky2013}.

A natural step to study most coloring problems is to study them in graphs.  Given a graph $G$, $V(G)$ denotes the set of $n$ vertices of $G$ while $E(G)$ denotes the set of $m$ edges in $G$. A $q$-coloring of $G$, for $q \in \mathbb{N}$ is a function $c \colon V(G) \rightarrow \{1,2,\ldots,q\}$. The most well-studied coloring problem on graphs is proper-coloring. A $q$-coloring $c$ is called a proper-coloring if for each edge $\{u,v\} \in E(G)$, $c(u) \neq c(v)$.
In this paper, we study two specialized variants of $q$-conflict-free coloring on graphs, known as $q$-ONCF-coloring and $q$-CNCF-coloring, which are defined as follows.

\begin{definition}\label{def:oncf}
 Given a graph $G$, a $q$-coloring $c \colon V(G) \rightarrow \{1,2,\ldots,q\}$ is called a $q$-ONCF-coloring, if for every vertex $v \in V(G)$, there is a vertex $u$ in the open neighborhood $N(v)$ such that $c(u) \neq c(w)$ for all $w\neq u \in N(v)$. In other words, every open neighborhood in $G$ has a uniquely colored vertex.
\end{definition}

\begin{definition}\label{def:cncf}
 Given a graph $G$, a $q$-coloring $c\colon V(G) \rightarrow \{1,2,\ldots,q\}$ is called a $q$-CNCF-coloring, if for for every vertex $v \in V(G)$, there is a vertex $u$ in the closed neighborhood $N[v]$ such that $c(u) \neq c(w)$ for all $w\neq u \in N[v]$. In other words, every closed neighborhood in $G$ has a uniquely colored vertex.
\end{definition}

Observe that by the above definitions, the $q$-ONCF-coloring (or $q$-CNCF-coloring) problem is a special case of the conflict-free coloring of set systems. Given a graph $G$, we can associate it with the set system $\mathcal{H} = (V(G),\mathcal{F})$, where $\mathcal{F}$ consists of  the sets given by open neighborhoods $N(v)$ (respectively, closed neighborhoods $N[v]$) for $v \in V(G)$. A $q$-ONCF-coloring (or $q$-CNCF-coloring) of $G$ then corresponds to a $q$-conflict-free coloring of the associated set system.


Notationally, let $\chi_{CF}(\mathcal{H})$ denote the minimum number of colors required for a conflict-free coloring of a set system $\mathcal{H}$. Similarly, we denote by $\chi_{ON}(G)$ and $\chi_{CN}(G)$ the minimum number of colors required for an ONCF-coloring and a CNCF-coloring of a graph $G$, respectively. The study of conflict-free coloring  was initially restricted to combinatorial studies. This was first explored in~\cite{even} and~\cite{Smoro2003}. Pach and Tardos~\cite{pach} gave an upper bound of $\Oh(\sqrt{m})$ on $\chi_{CF}(\mathcal{H})$ for a set system $\mathcal{H} = (U,\mathcal{F})$ when the size of $\mathcal{F}$ is $m$. In~\cite{pach}, it was also shown that for a graph $G$ with $n$ vertices $\chi_{CN}(G) = \Oh(\log^2 n)$. This bound was shown to be tight in~\cite{glebov2014conflict}. Similarly,~\cite{Cheilaris2009} showed that $\chi_{ON}(G) = \Theta(\sqrt{n})$.

However, computing $\chi_{ON}(G)$ or $\chi_{CN}(G)$ is NP-hard. This is because deciding whether a $2$-ONCF-coloring or a $2$-CNCF-coloring of $G$ exists is NP-hard~\cite{GarganoRR15}. This motivates the study of the following decision problems under the lens of parameterized complexity.

\defproblem{$q$-\ONCF}{A graph $G$.}{Is there a $q$-ONCF-coloring of $G$?}

\noindent The $q$-\CNCF problem is defined analogously.

 Note that because of the NP-hardness for $q$-\ONCF or $q$-\CNCF even when $q=2$, the two problems are para-NP-hard under the natural parameter $q$. Thus, the problems were studied under structural parameters. Gargano and Rescigno~\cite{GarganoRR15} showed that both $q$-\ONCF and $q$-\CNCF have FPT algorithms when parameterized by (i) the size of a vertex cover of the input graph $G$, (ii) and the neighborhood diversity of the input graph.
Gargano and Rescigno
also mention that due to Courcelle's theorem, for a non-negative constant $q$, the two decision problems are FPT with the treewidth of the input graph as the parameter.

\paragraph{Our Results and Contributions.}
In this paper, we extend the parameterized study of the above two problems with respect to structural parameters.
Our first objective is to provide both upper and lower bounds for FPT algorithms when using treewidth as the parameter (Section~\ref{sec:alg}). We show that both $q$-\ONCF and $q$-\CNCF parameterized by treewidth~$t$ can be solved in time $(2q^2)^tn^{\Oh(1)}$. On the other hand, for $q\geq 3$, both problems cannot be solved in time $(q-\epsilon)^tn^{\Oh(1)}$ under Strong Exponential Time Hypothesis  (SETH). For $q=2$, both problems cannot be solved in time $2^{o(t)}n^{\Oh(1)}$ under Exponential Time Hypothesis (ETH). 

We also study the polynomial kernelization question (Section~\ref{sec:kernel-lb}).  Observe that both $q$-\ONCF and $q$-\CNCF cannot have polynomial kernels under treewidth as the parameter, as there are straightforward \textsc{and}-cross-compositions from each problem to itself.\footnote{This is true for a number of graph problems when parameterized by treewidth. For more information, see \cite[Theorem 15.12]{book}  and the example given for \textsc{Treewidth} (parameterized by solution size)  in \cite[page 534]{book}.} Therefore, we will study the kernelization question by a larger parameter, namely the size of a vertex cover in the input graph. The kernelization complexity of the \textsc{$q$-Coloring} problem (asking for a proper-coloring of the input graph) is very well-studied for this parameter, the problem admits a kernel of size $\widetilde{\Oh}(k^{q-1})$ \cite{JansenP17GraphColoringKernel} which is known to be tight unless \containment \cite{JansenK13GraphColoring}. From this perspective however, $q$-\CNCF and $q$-\ONCF turn out to be much harder: $q$-\CNCF for $q \geq 3$ and $q$-\ONCF for $q \geq 2$ do not have polynomial kernels under the standard complexity assumptions, when parameterized by the size of a vertex cover.
Interestingly, $2$-\CNCF parameterized by vertex cover size \emph{does} have a polynomial kernel and we obtain an explicit polynomial compression for the problem. Although this does not lead to a polynomial kernel of reasonable size, we study a restricted version called $2$-\CNCF-\textsc{VC-Extension} (Section~\ref{sec:extension-main}) and show that this problem has a $\Oh(k^2\log k)$ kernel where $k$ is the vertex cover size. Therefore, $2$-\CNCF behaves significantly differently from the other problems.

Finally, we obtain a number of combinatorial results regarding ONCF-colorings of graphs.
Denote by $\chi(G)$ the minimum $q$ for which a $q$-proper-coloring for $G$ exists. While $\chi_{CN}(G) \leq \chi(G)$, the same upper bound does not hold for $\chi_{ON}(G)$~\cite{GarganoRR15}. For a graph $G$, let ${\sf vc}(G)$, ${\sf fvs}(G)$ and ${\sf tw}(G)$ denote the size of a minimum vertex cover, the size of a minimum feedback vertex set and the treewidth of $G$, respectively. From the known result that $\chi(G) \leq {\sf tw}(G)+1 \leq {\sf fvs}(G) +1 \leq {\sf vc}(G) +1$, we could immediately obtain the fact that the same behavior holds for $\chi_{CN}(G)$. However, to show that $\chi_{ON}(G)$ behaves similarly more work needs to be done. To the best of our knowledge no upper bounds on $\chi_{ON}(G)$ with respect to ${\sf fvs}(G)$ and ${\sf tw}(G)$ were known, while a loose upper bound was provided with respect to ${\sf vc}(G)$ in~\cite{GarganoRR15}. We give a tight upper bound on $\chi_{ON}(G)$ with respect to ${\sf vc}(G)$ and also provide the first upper bounds  on $\chi_{ON}(G)$ with respect to ${\sf fvs}(G)$ and ${\sf tw}(G)$ (Section~\ref{sec:cb}).

Our main contributions in this work are structural results for the conflict-free coloring problem, which we believe gives more insight into the decision problems on graphs.  Firstly, the gadgets we build for the ETH-based lower bounds
could be useful for future lower bounds, but are also useful to understand difficult examples for conflict-free coloring which have not been known in abundance so far. We are able to reuse these gadgets in the constructions needed to prove the kernelization lower bounds. 
Secondly, our combinatorial results also give constructible conflict-free colorings of graphs and therefore provide more insight into conflict-free colored graphs. Finally, the kernelization dichotomy we obtain for $q$-\ONCF and $q$-\CNCF under vertex cover size as a parameter is a very surprising one.

\section{Preliminaries}\label{sec:prelim}

For a positive integer $n$, we denote the set $\{1,2,\ldots,n\}$ in short with $[n]$. For a graph $G$, given a $q$-coloring $c\colon V(G) \rightarrow [q]$ and a subset $S \subseteq V(G)$, we denote by $c \vert_{S}$ the restriction of $c$ to the subset $S$. For a graph $G$ that is $q$-ONCF-colored by a coloring $c$, for a vertex $v \in V(G)$, suppose $w\in N(v)$ is such that $c(w) \neq c(w')$ for each $w' \neq w \in N(v)$; then $c(w)$ is referred to as the ONCF-color of $v$. Similarly, for a graph $G$ that is $q$-CNCF-colored by a coloring $c$, for a vertex $v \in V(G)$, a unique color in $N[v]$ is referred to as the CNCF-color of $v$.

An \emph{edge-star graph} is a generalization of a star graph where there is a central edge $\{u,v\}$ and all other vertices $w$ have $N(w) = \{u,v\}$. A triangle is an example of an edge-star graph.

\subsection{Tree decompositions and treewidth}
\label{ssec:tree-decompositions-treewidth}
We define treewidth and tree decompositions.
\begin{definition}[Tree Decomposition ~\cite{book}]
A tree decomposition of a (undirected or directed) graph $G$ is a tuple $\mathcal{T} = (T,\{X_\bu\}_{\bu \in V(T)})$, where $T$ is a
tree in which each vertex $\bu \in V(T)$ has an assigned set of vertices $X_\bu \subseteq V(G)$ (called a bag) such
that the following properties hold:
\begin{itemize}
\item $\bigcup_{\bu \in V(T)} X_\bu = V(G)$.
 \item For any $xy \in E(G)$, there exists a $\bu \in V(T)$ such that $x, y \in X_\bu$.
\item If $x \in X_\bu$ and $x \in X_\bv$, then $x \in X_\bw$ for all $\bw$ on the path from $\bu$ to $\bv$ in $T$.
\end{itemize}
In short, we denote $\mathcal{T} = (T,\{X_\bu\}_{\bu \in V(T)})$ as $T$.
\end{definition}

The \emph{treewidth} $tw_{\mathcal{T}}$ of a tree decomposition $\mathcal{T}$ is the size of the largest bag of $\mathcal{T}$ minus one. A graph may have several distinct tree decompositions. The treewidth $tw(G)$ of
a graph $G$ is defined as the minimum of treewidths over all possible tree decompositions of $G$. Note that for the tree $T$ of a  tree decomposition, we denote a vertex of $V(T)$ in bold font. If $T$ is rooted at a vertex $\br$, for a vertex $\bu \in V(T)$, $V_{\bu} = \bigcup_{\bv \in T'} X_{\bv}$, where $T'$ is the subtree rooted at $\bu$.

A tree decomposition  ${\mathcal{T}}=(T,\{X_\bu\}_{\bu \in V(T)}))$ is called a {\em nice tree decomposition} if $T$ is a tree rooted at some node $\br$ where $ X_{\br}=\emptyset$, each node of $T$ has at most two children, and each node is of one of the following kinds:
\begin{itemize}
\item {\bf Introduce node}: a node $\bu$ that has only one child $\bu'$ where $X_{\bu}\supset X_{\bu'}$ and  $|X_{\bu}|=|X_{\bu'}|+1$.

\item {\bf  Forget vertex node}: a node $\bu$ that has only one child $\bu'$  where $X_{\bu}\subset X_{\bu'}$ and  $|X_{\bu}|=|X_{\bu'}|-1$.

\item {\bf Join node}:  a node  $\bu$ with two children $\bu_{1}$ and $\bu_{2}$ such that $X_{\bu}=X_{\bu_{1}}=X_{\bu_{2}}$.

\item {\bf Leaf node}: a node $\bu$ that is a leaf of $T$, and $X_{\bu}=\emptyset$.
\end{itemize}

One can  show that  a tree decomposition of width $w$ can be transformed into a nice tree decomposition of the same width $w$ and  with $\Oh(w |V(G)|)$ nodes, see~e.g.~\cite{book}.

We modify the definition of a nice tree decomposition slightly by ensuring that no bag in the tree decomposition is empty. This can easily be done by adding an arbitrary vertex $z \in V(G)$ to all bags of the current nice tree decomposition. This will ensure the non-emptiness property. Note that our nice tree decomposition will have width $w+1$.

\subsection{Parameterized complexity}
Let $\Sigma$ be a finite alphabet. A parameterized problem $\mathcal{Q}$ is a subset of $\Sigma^* \times \mathbb{N}$.
\begin{definition}[Kernelization]
Let $\mathcal{Q},\mathcal{Q}'$ be two parameterized problems and let $h \colon \mathbb{N} \to \mathbb{N}$ be some computable function. A \emph{generalized kernel} from $\mathcal{Q}$ to $\mathcal{Q}'$ of size $h(k)$ is an  algorithm that given an instance $(x,k)\in\Sigma^*\times\mathbb{N}$, outputs $(x',k')\in\Sigma^*\times\mathbb{N}$ in time $\text{poly}(|x| + k)$ such that
(i) $(x,k) \in \mathcal{Q}$ if and only if $(x',k') \in \mathcal{Q}'$, and
(ii) $|x'| \leq h(k)$ and $k' \leq h(k)$. \\
The algorithm is a \emph{kernel} if $\mathcal{Q} = \mathcal{Q}'$. It is a \emph{polynomial (generalized) kernel} if $h(k)$ is  a polynomial in $k$.
\end{definition}

Next, we describe a few methods that can be used to rule out the existence of polynomial kernels. One such method is by a polynomial parameter transformation \cite{BodlaenderTY11PPT} from a problem that is known to not admit a polynomial kernel. We repeat the necessary information here for completeness.
\begin{definition}[{Polynomial parameter transformation \cite{BodlaenderTY11PPT}}]
Let $\mathcal{Q}$ and $\mathcal{Q}'$ be parameterized problems. A \emph{polynomial parameter transformation} from $\mathcal{Q}$ to $\mathcal{Q}'$ is an algorithm that takes an input $(x,k)$ and outputs $(x',k')$ such that the following hold.
\begin{itemize}
\item $(x,k) \in \mathcal{Q}$ if and only if $(x',k') \in \mathcal{Q}'$, and
\item $k'$ is bounded by a polynomial in $k$.
\end{itemize}
We denote this as $\mathcal{Q} \leq_{\text{ppt}} \mathcal{Q}'$.
\end{definition}

The following Theorem follows from~\cite[Prop. 2.16]{BodlaenderJK14} and shows how to obtain lower bounds using polynomial parameter transformations.
\begin{theorem}[{\cite{BodlaenderJK14}}]\label{thm:ppt-works}
Let $\mathcal{Q}$ and $\mathcal{Q}'$ be parameterized problems with $\mathcal{Q} \leq_{\text{ppt}} \mathcal{Q}'$. If $\mathcal{Q}'$ admits a  polynomial generalized kernel, then $\mathcal{Q}$ admits a polynomial generalized kernel.
\end{theorem}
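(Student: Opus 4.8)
The plan is to compose the transformation with the kernel for $\mathcal{Q}'$ and to verify that the result meets the definition of a polynomial generalized kernel. First I would fix the two given algorithms. Let $\mathcal{A}$ be the polynomial parameter transformation witnessing $\mathcal{Q} \leq_{\text{ppt}} \mathcal{Q}'$: on input $(x,k)$ it outputs $(x',k')$ in time $\text{poly}(|x|+k)$, with $(x,k) \in \mathcal{Q} \iff (x',k') \in \mathcal{Q}'$ and $k' \leq p(k)$ for some polynomial $p$. Let $\mathcal{B}$ be the assumed polynomial generalized kernel for $\mathcal{Q}'$; by definition it is a generalized kernel from $\mathcal{Q}'$ to some parameterized problem $\mathcal{R}$, running in polynomial time and outputting on input $(x',k')$ an instance $(x'',k'')$ with $(x',k') \in \mathcal{Q}' \iff (x'',k'') \in \mathcal{R}$ and both $|x''| \leq h(k')$ and $k'' \leq h(k')$ for a polynomial $h$.

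Next I would define the composed algorithm: run $\mathcal{A}$ on $(x,k)$ to obtain $(x',k')$, then run $\mathcal{B}$ on $(x',k')$ to obtain $(x'',k'')$, and output the latter. Correctness of the equivalence follows by chaining the two biconditionals, $(x,k) \in \mathcal{Q} \iff (x',k') \in \mathcal{Q}' \iff (x'',k'') \in \mathcal{R}$, so this is a generalized kernel from $\mathcal{Q}$ to $\mathcal{R}$. For the size bound I would combine $|x''| \leq h(k')$, $k'' \leq h(k')$, and $k' \leq p(k)$; assuming without loss of generality that $h$ is non-decreasing (replace it by a dominating non-decreasing polynomial otherwise), this yields $|x''| \leq h(p(k))$ and $k'' \leq h(p(k))$. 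Since $h \circ p$ is a polynomial in $k$, both bounds are polynomial, so the composition is a polynomial generalized kernel for $\mathcal{Q}$, as required.

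The step needing the most care --- rather than a genuine obstacle --- is the interplay between running time and output size, which is also what makes the implication hold only in this direction. The intermediate instance $x'$ produced by $\mathcal{A}$ is bounded only by $\text{poly}(|x|+k)$ and may be large relative to the parameter $k$; what rescues the argument is that $\mathcal{B}$ then compresses it to size $h(k')$, a bound that depends purely on the parameter and not on $|x'|$. The running time of the composition stays polynomial because $\mathcal{B}$ runs in time polynomial in $|x'|+k'$, which is itself $\text{poly}(|x|+k)$. Thus the uncontrolled blow-up allowed in the transformation step is harmless precisely because the subsequent kernel bounds its output in terms of the polynomially bounded parameter alone.
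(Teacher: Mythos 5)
Your proof is correct and is exactly the standard composition argument; the paper does not reprove this statement but simply cites \cite[Prop.~2.16]{BodlaenderJK14}, whose proof proceeds the same way (compose the transformation with the kernel, chain the biconditionals, and bound the output size by $h(p(k))$ using monotonicity of a dominating polynomial). Your closing remark about why the blow-up in $|x'|$ is harmless is a correct and worthwhile observation, though not needed beyond what you already established.
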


Another way to rule out the existence of polynomial kernels is using the framework of cross-compositions \cite{BodlaenderJK14}. We start by providing the necessary definitions.

\begin{definition}[Polynomial equivalence relation {\cite{BodlaenderJK14}}]
\label{def:eqvr} An equivalence relation $\mathcal{R}$ on $\Sigma^*$ is called a \emph{polynomial equivalence relation} if the following two conditions hold:
\begin{itemize}
\item There is an algorithm that given two strings $x,y \in \Sigma^*$ decides whether $x$ and $y$ belong to the same equivalence class in time polynomial in $|x|+|y|$.
\item For any finite set $S\subseteq \Sigma^*$ the equivalence relation $\mathcal{R}$ partitions the elements of $S$ into a number of classes that is polynomially bounded in the size of the largest element of $S$.
\end{itemize}
\end{definition}

\begin{definition}[{Cross-composition \cite{BodlaenderJK14}}] \label{def:cross-composition}
Let $L \subseteq \Sigma^*$ be a language, let $\mathcal{R}$ be a polynomial equivalence relation on $\Sigma^*$, and let $\mathcal{Q} \subseteq \Sigma^* \times \mathbb{N}$ be a parameterized problem. An \emph{\textsc{or}-cross-composition} of $L$ into $\mathcal{Q}$ (with respect to $\mathcal{R}$) is an algorithm that, given $t$ instances $x_1,x_2,\ldots,x_t \in \Sigma^*$ of $L$ belonging to the same equivalence class of $\mathcal{R}$, takes time polynomial in $\sum_{i=1}^t |x_i|$ and outputs an instance $(y,k)\in \Sigma^* \times \mathbb{N}$ such that the following hold:
\begin{itemize}
\item The parameter value $k$ is polynomially bounded in $\max_{i=1}^t |x_i| + \log{t}$, and
\item The instance $(y,k)$ is a yes-instance for $\mathcal{Q}$ if and only if at least one instance $x_i$ is a yes-instance for $L$.
\end{itemize}
\end{definition}
The following theorem shows how cross-compositions are used to prove kernelization lower bounds.
\begin{theorem}[{\cite{BodlaenderJK14}}] \label{thm:cross-composition-implies-LB}
If an NP-hard language $L$ \textsc{or}-cross-composes into the parameterized problem $\mathcal{Q}$, then $\mathcal{Q}$ does not admit a (generalized) polynomial kernelization unless \containment.
\end{theorem}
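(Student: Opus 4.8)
The plan is to prove the contrapositive: assuming that $\mathcal{Q}$ \emph{does} admit a polynomial generalized kernelization, I derive \containment. The driving principle is the impossibility of compressing the logical \textsc{or} of many instances of an NP-hard problem into a single short instance: from the cross-composition together with the hypothetical kernel I will build an \textsc{or}-\emph{distillation} of $L$, and then invoke the theorem of Fortnow and Santhanam (in the generalized form that allows an arbitrary target language) that no NP-hard language admits such a distillation unless \containment holds.

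First I would compose the two maps on a single equivalence class. Let the \textsc{or}-cross-composition send $t$ same-class instances of size at most $s$ to an instance $(y,k)$ of $\mathcal{Q}$ with $k \le (s+\log t)^{\Oh(1)}$, and let the assumed kernel $K$ send any $(y,k)$ to an answer-preserving instance $(y',k')$ of some problem $\mathcal{Q}'$ with $|y'|, k' \le h(k) = k^{\Oh(1)}$. Their composition turns any batch of $t$ same-class instances of $L$ into one instance of $\mathcal{Q}'$ of total size $(s+\log t)^{\Oh(1)}$, computable in time polynomial in $\sum_i |x_i|$, whose membership in $\mathcal{Q}'$ equals the \textsc{or} of the answers for the $x_i$.

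Next I would discharge the same-class restriction. Given $t$ arbitrary instances, the polynomial equivalence relation $\mathcal{R}$ splits them into $r = s^{\Oh(1)}$ classes; running the composed map per class gives $r$ instances of $\mathcal{Q}'$, which I combine into a single instance of the \textsc{or}-language over $\mathcal{Q}'$. As $r$ is polynomial in $s$ and each piece has size $(s+\log t)^{\Oh(1)}$, the combined output still has size $(s+\log t)^{\Oh(1)}$ and is accepted exactly when some $x_i \in L$ --- i.e.\ it is an \textsc{or}-distillation of $L$ into a language whose unparameterized version lies in NP.

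The main obstacle is the $\log t$ term: a clean distillation argument wants output size $\mathrm{poly}(s)$ independent of $t$, whereas I only obtain $\mathrm{poly}(s + \log t)$. I would resolve this with the standard device of the framework: in the intended application one distills all instances of $L$ of length at most $s$, of which there are at most $2^{s+1}$, so $t \le 2^{s+1}$ and $\log t \le s+1$, collapsing the bound to $\mathrm{poly}(s)$. The distillation lower bound then converts this compression into polynomial advice together with a single nondeterministic witness certifying the \textsc{or}, placing the NP-hard language $L$ in $\mathsf{coNP/poly}$ and yielding \containment. The delicate points are thus the size bookkeeping across the composition and checking that the distillation machinery applies to the target $\mathcal{Q}'$ rather than to $L$ directly.
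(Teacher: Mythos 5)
The paper does not prove this statement at all --- it is imported verbatim from the cited cross-composition framework of Bodlaender, Jansen, and Kratsch --- and your sketch reproduces exactly the standard argument given there: compose the cross-composition with the hypothetical kernel per equivalence class, take the \textsc{or} over the polynomially many classes to obtain an \textsc{or}-distillation into some target language, bound $\log t \leq s+1$ (via deduplication of the at most $2^{s+1}$ distinct inputs of length at most $s$), and apply the Fortnow--Santhanam theorem in the Dell--van Melkebeek form that permits an arbitrary target set, yielding \containment. Your sketch is correct and matches the intended proof; the only cosmetic slip is the remark that the target's unparameterized version ``lies in NP,'' which is neither guaranteed for a generalized kernel nor needed once you invoke the arbitrary-target version of the distillation bound.
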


\subsection{Fast Subset Convolution Computation.}
Given a universe $U$ with $n$ elements, the subset convolution of two functions $f,g : 2^U \rightarrow \mathbb{Z}$ is a function $(f*g): 2^U \rightarrow \mathbb{Z}$ such that for every $Y \subseteq U$, $(f*g)(Y) = \Sigma_{X\subseteq Y} f(X)g(Y-X)$. Equivalently, $(f*g)(Y) = \Sigma_{A\uplus B=Y} f(A)g(B)$.

\begin{proposition}[{\cite{fomin2010exact}}]\label{prop:fast_subset_convolution}
 For two functions $f,g : 2^{U} \rightarrow \mathbb{Z}$, given all the $2^n$ values of $f$ and $g$ in the input, all the $2^n$ values of the subset convolution $f*g$ can be computed in $\Oh(2^n\cdot n^3)$ arithmetic operations.
\end{proposition}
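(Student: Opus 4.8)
The plan is to reduce the subset convolution to a combination of \emph{union products} (also called covering products), each of which can be evaluated efficiently through the fast zeta and M\"obius transforms over the subset lattice. Recall that the \emph{zeta transform} of a function $h\colon 2^U \to \mathbb{Z}$ is $\hat h(Y) = \sum_{X \subseteq Y} h(X)$, and its inverse is the \emph{M\"obius transform} $\check{h}(Y) = \sum_{X \subseteq Y} (-1)^{|Y \setminus X|} h(X)$. Both transforms can be computed on all $2^n$ sets in $\Oh(2^n \cdot n)$ arithmetic operations by processing one element of $U$ at a time (Yates' algorithm): for $i = 1,\dots,n$, update the value stored at each set containing the $i$-th element using the value at that set with the $i$-th element removed (adding for zeta, subtracting for M\"obius).

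First I would record the standard fact that the zeta transform turns the union product into pointwise multiplication. For functions $p, q$ define their union product by $(p \cdot_\cup q)(Y) = \sum_{A \cup B = Y} p(A)\, q(B)$; then $\widehat{p \cdot_\cup q} = \hat p \cdot \hat q$ pointwise, so $p \cdot_\cup q = \check{(\hat p \cdot \hat q)}$ is computable in $\Oh(2^n \cdot n)$ operations once $\hat p$ and $\hat q$ are known. The obstacle is that the union product \emph{overcounts}: it sums over all pairs with $A \cup B = Y$, including those with $A \cap B \neq \emptyset$, whereas the subset convolution sums only over disjoint unions $A \uplus B = Y$. The key observation that resolves this is that $A \cup B = Y$ together with $|A| + |B| = |Y|$ forces $A \cap B = \emptyset$.

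The main step is therefore to split $f$ and $g$ by cardinality. For each $k \in \{0,1,\dots,n\}$ define $f_k(X) = f(X)$ if $|X| = k$ and $0$ otherwise, and similarly $g_k$. Compute the zeta transforms $\hat{f_0},\dots,\hat{f_n}$ and $\hat{g_0},\dots,\hat{g_n}$; this is $2(n+1)$ transforms, costing $\Oh(2^n \cdot n^2)$. For each set $Y$ and each rank $k$, form the pointwise rank-convolution $H_k(Y) = \sum_{i+j=k} \hat{f_i}(Y)\, \hat{g_j}(Y)$; for fixed $Y$ all $n+1$ values $H_0(Y),\dots,H_n(Y)$ arise as a length-$(n+1)$ convolution in $\Oh(n^2)$ operations, for a total of $\Oh(2^n \cdot n^2)$. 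Finally apply the M\"obius transform to each $H_k$ to get $\check{H_k}$ in $\Oh(2^n \cdot n^2)$ total, and output $(f * g)(Y) = \check{H}_{|Y|}(Y)$.

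Correctness follows by unwinding the transforms: $\check{H_k}(Y) = \sum_{i+j=k} (f_i \cdot_\cup g_j)(Y) = \sum_{i+j=k}\, \sum_{A \cup B = Y,\, |A|=i,\, |B|=j} f(A)\, g(B)$, and specializing to $k = |Y|$ restricts the inner sum to pairs with $|A| + |B| = |Y| = |A \cup B|$, hence to disjoint pairs $A \uplus B = Y$, which is exactly $(f*g)(Y)$. Every stage runs in $\Oh(2^n \cdot n^2)$ arithmetic operations, comfortably within the claimed $\Oh(2^n \cdot n^3)$ bound. The part requiring the most care is the rank-bookkeeping, namely verifying that tracking cardinalities through the zeta/M\"obius machinery genuinely isolates the disjoint-union terms, since this cardinality trick is precisely what upgrades the elementary union product into the subset convolution.
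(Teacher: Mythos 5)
Your proposal is correct: the ranked zeta/M\"obius transform argument with the cardinality-splitting trick (using $|A|+|B|=|A\cup B|$ to force disjointness) is precisely the standard proof of this result from the cited reference, and your accounting even yields the sharper bound of $\Oh(2^n\cdot n^2)$ arithmetic operations, which is within the claimed $\Oh(2^n\cdot n^3)$. The paper itself imports this proposition from \cite{fomin2010exact} without proof, so there is nothing to compare against beyond noting that your argument matches the source.
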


In fact, this result can be extended to subset convolution of functions that map to any ring, instead of $(\mathbb{Z},+,\times)$~\cite{book}.
Consider the set $\mathbb{Z} \cup \{-\infty\}$, with the added relation that $\forall z \in \mathbb{Z},\{-\infty\} < z$. The $max$ operator takes two elements from this set and outputs the maximum of the two elements. Notice that $\mathbb{Z} \cup \{-\infty\}$, along with $max$ as an additive operator and $+$ as a multiplicative operator, forms a semi-ring~\cite{book}. We will call this semi-ring the integer max-sum semi-ring.  The subset convolution of two functions $f,g : 2^U \rightarrow \mathbb{Z}\cup \{-\infty\}$, with $max$ and $+$ as the additive and multiplicative operators, becomes $(f*g)(Y) = max_{A\uplus B=Y} f(A)+g(B)$.

\begin{proposition}[{\cite{fomin2010exact}}]\label{prop:minsum_subset_convolution}
 Given two functions $f,g: 2^U \rightarrow \{-M, \ldots,M\}$, all the $2^n$ values of $f$ and $g$ in the input, and all the $2^n$ values of the subset convolution $(f*g)$ over the integer max-sum semiring can be computed in time $2^n n^{\Oh(1)} \cdot \Oh(M\log M \log \log M)$.
\end{proposition}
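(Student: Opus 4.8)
The plan is to reduce the max-sum (tropical) subset convolution to the ordinary integer subset convolution of Proposition~\ref{prop:fast_subset_convolution} through an exponential encoding that turns the $+$ operator into integer multiplication and turns $\max$ into the selection of the most significant digit. First I would remove negative values by replacing $f$ with $f' := f + M$ and $g$ with $g' := g + M$, so that both range over $\{0,1,\ldots,2M\}$. Since $f'(A)+g'(B) = f(A)+g(B)+2M$ for every split $A \uplus B = Y$, we have $\max_{A \uplus B = Y}(f'(A)+g'(B)) = (f*g)(Y) + 2M$, and the shift can be undone at the very end.

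Next I would fix an integer base $z := 2^{n+1}$ and define $\hat f,\hat g : 2^U \rightarrow \mathbb{Z}$ by $\hat f(X) := z^{f'(X)}$ and $\hat g(X) := z^{g'(X)}$. Running the ordinary integer subset convolution on $\hat f,\hat g$ yields, for each $Y$,
\[
(\hat f * \hat g)(Y) = \sum_{A \uplus B = Y} z^{f'(A) + g'(B)} = \sum_{j} c^Y_j\, z^{j},
\]
where $c^Y_j$ counts the splits $A \uplus B = Y$ with $f'(A)+g'(B) = j$. Because every such count satisfies $c^Y_j \le 2^{|Y|} \le 2^n < z$, the right-hand side is already the base-$z$ expansion of the integer $(\hat f * \hat g)(Y)$ with no carrying between digits. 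Hence the position of the most significant nonzero digit equals $\max_{A\uplus B=Y}(f'(A)+g'(B))$, from which I recover $(f*g)(Y)$ by subtracting $2M$; this read-off is a single scan of the binary representation. Note that every $Y$ admits at least the split $A=\emptyset,\,B=Y$, so no entry is $-\infty$ and the maximum is always attained by a genuine split.

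It then remains to account for the running time. All exponents are at most $2M$ in the inputs and at most $4M$ in the output, so every integer manipulated by the algorithm of Proposition~\ref{prop:fast_subset_convolution} — in its zeta transform, its pointwise products, and its M\"obius inversion — has magnitude at most $2^{\Oh(n)} z^{4M}$, i.e. bit-length $\Oh(Mn)$. Proposition~\ref{prop:fast_subset_convolution} performs $\Oh(2^n n^3)$ arithmetic operations; realizing each addition and multiplication on $\Oh(Mn)$-bit integers with a fast multiplication routine costs $\Oh(Mn \log(Mn)\log\log(Mn))$ time. Multiplying these two bounds and absorbing the $n$-dependent and $\log n$ terms into $n^{\Oh(1)}$ leaves exactly the claimed $2^n n^{\Oh(1)} \cdot \Oh(M\log M \log\log M)$.

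The main obstacle I anticipate is the correctness of the digit read-off, namely guaranteeing that the tropical answer $\max_{A\uplus B=Y}(f'(A)+g'(B))$ is faithfully reflected as the top nonzero digit of the integer convolution. This is precisely why the base $z$ must exceed the largest possible coefficient $c^Y_j$: any carry could create or destroy a leading digit and thereby corrupt the recovered maximum. The bound $c^Y_j \le 2^n$ makes $z = 2^{n+1}$ safe, and that same bound is what keeps the manipulated integers at bit-length $\Oh(Mn)$, so that the factor contributed by big-integer arithmetic is only $\Oh(M\log M\log\log M)$ rather than something polynomial in $M$.
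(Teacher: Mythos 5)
Your argument is correct: the paper itself gives no proof of this proposition, importing it as a black box from \cite{fomin2010exact}, and your exponent-encoding reduction to the integer subset convolution of Proposition~\ref{prop:fast_subset_convolution} (shift by $M$, base $z=2^{n+1}$ so the split-counts $c^Y_j\le 2^n$ cannot carry, read off the top nonzero digit, and charge $\Oh(Mn\log(Mn)\log\log(Mn))$ per arithmetic operation on the $\Oh(Mn)$-bit integers) is precisely the standard proof given in that source. No gaps worth noting beyond the usual convention that $\log$ is read as $\max(1,\log)$ so the stated bound is meaningful for constant $M$.
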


For more details about subset convolutions and fast calculations of subset convolutions, please refer to \cite{book,fomin2010exact}.

\section{Algorithmic results parameterized by treewidth}\label{sec:alg}
In this section, we state the algorithmic results obtained for the \ONCF{} and \CNCF{} problems parameterized by treewidth.  On the algorithmic side, we have the following theorem.

\begin{theorem}
\label{thm:on-alg}
$q$-\ONCF and $q$-\CNCF parameterized by treewidth $t$ admits a $(2q^2)^tn^{\mathcal{O}(1)}$ time algorithm.
\end{theorem}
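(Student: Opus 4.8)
The plan is to design a dynamic programming algorithm over a nice tree decomposition in which the join nodes are evaluated by the fast subset convolution of Proposition~\ref{prop:fast_subset_convolution}; this is exactly what turns the naive running time into the claimed bound. First I would compute a nice tree decomposition $\mathcal{T}=(T,\{X_\bu\})$ of width $\Oh(t)$ with $\Oh(t\cdot n)$ nodes, rooted at $\br$, using the modified variant of Section~\ref{ssec:tree-decompositions-treewidth} so that every bag is non-empty. For each node $\bu$ I maintain a table indexed by a triple $(\sigma,\tau,S)$, where $\sigma\colon X_\bu\to[q]$ records the colors of the bag vertices, $\tau\colon X_\bu\to[q]$ guesses for each $v\in X_\bu$ the color of a vertex that is to be uniquely colored in $N(v)$ (resp.\ $N[v]$), and $S\subseteq X_\bu$ is the set of bag vertices for which exactly one \emph{already-forgotten} neighbour carries the guessed color $\tau(v)$. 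There are $q^{|X_\bu|}\cdot q^{|X_\bu|}\cdot 2^{|X_\bu|}=(2q^2)^{|X_\bu|}$ such triples, which is the source of the base $2q^2$. The value stored at $(\sigma,\tau,S)$ is the number of colorings of $V_\bu$ agreeing with $\sigma$ on $X_\bu$, satisfying the conflict-free constraint (under guess $\tau$) for every vertex already forgotten below $\bu$, and for which the set of bag vertices having seen exactly one forgotten $\tau(v)$-colored neighbour is precisely $S$ (a branch is discarded the moment some vertex would see a second such neighbour). A yes-instance corresponds to a positive entry at the root, and tracking counts suffices: every accepted pair $(\sigma,\tau)$ certifies a genuine coloring, and every valid coloring admits at least one consistent guess $\tau$.

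Next I would specify the four transitions. Leaf nodes are trivial. At an introduce node adding $x$, I extend each child entry by all $q^2$ choices of $(\sigma(x),\tau(x))$ and initialise $x\notin S$; crucially, no edges are charged here. Edges are charged only at forget nodes: when forgetting $x$, I first finalise the constraint of $x$ by adding to its forgotten-neighbour count the contributions of its bag-neighbours colored $\tau(x)$, together with, for \CNCF, the self-contribution (if $\sigma(x)=\tau(x)$), keeping the branch only if this total equals exactly one; then I update every bag-neighbour $y$ with $\tau(y)=\sigma(x)$ by moving it into $S$, discarding the branch if $y$ was already in $S$. Because a vertex and all its neighbours share a bag, every neighbour of $x$ has already appeared once $x$ is forgotten, so this accounts for each edge, and each closed-neighbourhood self term, correctly and exactly once. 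The root then performs the final check for the single never-forgotten vertex.

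The crux is the join node $\bu$ with children $\bu_1,\bu_2$ and $X_\bu=X_{\bu_1}=X_{\bu_2}$. Here the colorings and guesses must coincide, so for each fixed $(\sigma,\tau)$ I only combine the sets $S_1,S_2$. Since the forgotten vertices of the two subtrees are disjoint (by the connectivity property of tree decompositions), the forgotten $\tau(v)$-neighbour counts simply add, and the combined count stays at most one exactly when $S_1$ and $S_2$ are disjoint with union the target set. Hence the combined table is the subset convolution
\[
  DP_\bu[\sigma,\tau,S] \;=\; \sum_{S_1\uplus S_2=S} DP_{\bu_1}[\sigma,\tau,S_1]\cdot DP_{\bu_2}[\sigma,\tau,S_2],
\]
which by Proposition~\ref{prop:fast_subset_convolution} is computed in $\Oh(2^{|X_\bu|}\cdot|X_\bu|^3)$ arithmetic operations for each of the $q^{2|X_\bu|}$ pairs $(\sigma,\tau)$, i.e.\ in $(2q^2)^{|X_\bu|}n^{\Oh(1)}$ total. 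The entries are bounded by $q^n$, so each arithmetic operation is on polynomially many bits.

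I expect the join node to be the main obstacle: a direct combination of the two child tables would pair all states and cost $(2q^2)^{2t}n^{\Oh(1)}$, and the whole improvement hinges on recognising that, once the colorings and guesses are fixed, the remaining bookkeeping is precisely a disjoint-union (subset) convolution over the bag. The secondary subtleties to verify carefully are the edge-charging scheme (each edge accounted for exactly once, at the forget of its first-forgotten endpoint) and the \CNCF\ self-term, which must be added only at the forget node rather than at introduce, so that it is not double-counted across the two subtrees of a join. Summing over the $\Oh(t\cdot n)$ nodes, each of bag size $t+\Oh(1)$ and costing $(2q^2)^{t+\Oh(1)}n^{\Oh(1)}$, then yields the claimed $(2q^2)^t n^{\Oh(1)}$ bound.
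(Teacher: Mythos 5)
Your proposal is correct and follows essentially the same route as the paper: the same $(2q^2)^{|X_\bu|}$-size table (bag coloring, guessed unique neighborhood color, and a per-vertex seen-it indicator) evaluated over a nice tree decomposition, with fast subset convolution on the indicator sets at join nodes supplying the claimed base. The only differences are cosmetic --- you count colorings over the ordinary sum-product ring and charge edges at forget nodes, whereas the paper stores boolean values, uses the max-sum semiring convolution, and charges edges at introduce nodes --- and both bookkeeping conventions work.
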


We also obtain algorithmic lower bounds for the problems under standard assumptions.

\begin{theorem}\label{thm:alg-lb}
The following algorithmic lower bounds can be obtained:
\begin{enumerate}
\item For $q \geq 3$, $q$-\ONCF{} or $q$-\CNCF parameterized by treewidth $t$ cannot be solved in $(q - \varepsilon)^tn^{\Oh(1)}$ time, under SETH.
\item $2$-\ONCF{} or $2$-\CNCF parameterized by treewidth $t$ cannot be solved in $2^{o(t)}n^{\Oh(1)}$ time, under ETH.
\end{enumerate}
\end{theorem}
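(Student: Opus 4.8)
The plan is to prove both lower bounds by reductions that start from a satisfiability problem and produce \ONCF/\CNCF instances whose treewidth is tightly controlled, so that a faster-than-claimed algorithm would contradict the corresponding hypothesis. For part~(1) I would reduce from \textsc{CNF-Sat}, which under SETH has no $(2-\delta)^N n^{\Oh(1)}$ algorithm for any $\delta > 0$ on formulas with $N$ variables. The guiding arithmetic is the base-$q$ encoding: a gadget that can sit in exactly $q$ legal states carries $\log_2 q$ bits, so $t$ such gadgets encode $q^t = 2^{t\log_2 q}$ assignments. Partitioning the $N$ variables into groups and representing each group by a bounded number of $q$-state gadgets, one can make the total number of state-carrying gadgets (and hence the treewidth) equal to $N/\log_2 q$ up to a multiplicative factor that tends to $1$ as the group size grows; this is where the base $q$ in the lower bound comes from, and it is the reason the argument is run per fixed $q \geq 3$.

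The combinatorial heart of the construction is the design of three kinds of conflict-free coloring gadgets, and this is the step I expect to be the main obstacle. First a \emph{state-selection} gadget whose conflict-free colorings are in bijection with the $q$ states we wish to encode, so that choosing a color for one distinguished vertex selects an assignment to a group of variables; edge-star subgraphs are natural building blocks here, since in an edge-star the open neighborhood constraint forces the two central vertices to receive distinct colors, which lets one pin down and read off colors. Second a \emph{propagation} gadget that copies the state of a distinguished vertex to another distinguished vertex far away, so that the same assignment can be presented to many clause checks without forcing all the checks to be simultaneously adjacent to the variable gadgets. Third a \emph{clause-verification} gadget that admits a valid conflict-free coloring exactly when at least one literal of the clause is satisfied by the presented state. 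The delicate part throughout is verifying that each gadget imposes \emph{precisely} the intended constraint under the unusual rule that every (open or closed) neighborhood must contain a uniquely colored vertex, and that it introduces no spurious parity or consistency requirements; moreover the gadgets must be realized for both the open and the closed variant, either by a single construction or by two parallel ones.

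Assembling these gadgets along a backbone is what keeps the treewidth low. I would lay out the group gadgets as a collection of $\Oh(N/\log_2 q)$ columns, repeat one copy of the assignment (via the propagation gadgets) for each clause, and attach the clauses one at a time along the backbone, so that at any point a path decomposition only needs to contain the current assignment columns plus the $\Oh(1)$ vertices of a single clause gadget. This yields an explicit path/tree decomposition of width $N/\log_2 q + \Oh(1)$. Correctness follows by arguing that a satisfying assignment yields a conflict-free coloring extending the selected states, and conversely that any conflict-free coloring forces each state-selection gadget into a legal state encoding a consistent assignment satisfying every clause. Finally, a hypothetical $(q-\varepsilon)^t n^{\Oh(1)}$ algorithm would run in time $(q-\varepsilon)^{N/\log_2 q + \Oh(1)} n^{\Oh(1)} = 2^{(\log_2(q-\varepsilon)/\log_2 q)\,N} n^{\Oh(1)}$, and since $\log_2(q-\varepsilon)/\log_2 q < 1$ this is $(2-\delta)^N n^{\Oh(1)}$ for some $\delta > 0$, contradicting SETH.

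For part~(2), the base-$2$ encoding used when $q=2$ leaves no room between $2-\varepsilon$ and $2$ to exploit in the same way, and the two-color conflict-free gadgets are far more rigid, so I would instead aim only for the weaker ETH conclusion via a reduction from $3$-\textsc{Sat} that is linear in the treewidth. Using the sparsification lemma I may assume the formula has $n$ variables and $\Oh(n)$ clauses, so that ETH forbids a $2^{o(n)}$ algorithm. Reusing two-color versions of the variable and clause gadgets above and again threading the clauses along a backbone, the construction would produce a $2$-\ONCF (respectively $2$-\CNCF) instance of treewidth $\Oh(n)$. A $2^{o(t)} n^{\Oh(1)}$ algorithm for the coloring problem would then solve $3$-\textsc{Sat} in $2^{o(n)}$ time, contradicting ETH. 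The main difficulty specific to this case is engineering the selection and clause gadgets with only two available colors while keeping each clause gadget of constant size, so that the backbone layout still gives linear treewidth.
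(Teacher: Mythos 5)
Your outline for part~(1) is a plan to re-derive a Lokshtanov--Marx--Saurabh-style SETH bound from scratch: group the variables, build $q$-state selection gadgets, propagation gadgets, and clause-verification gadgets, and thread them along a backbone of width $N/\log_2 q + o(N)$. The arithmetic at the end is right, but the proof has a genuine gap exactly where you flag it: the three gadget families are never constructed, and for conflict-free coloring their existence is the entire content of the theorem. It is far from routine to build, for every $q\geq 3$, a constant-size subgraph whose $q$-ONCF-colorings (or $q$-CNCF-colorings) restricted to a distinguished vertex realize exactly $q$ states, a treewidth-preserving ``copy'' gadget, and a clause checker that forbids precisely one combination of states --- all under the rule that every open (resp.\ closed) neighborhood must contain a uniquely colored vertex, and without accidentally constraining neighborhoods of the auxiliary vertices. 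Saying edge-stars ``are natural building blocks'' does not discharge this. The same gap appears in part~(2), where you propose to ``reuse two-color versions of the variable and clause gadgets above'' that were never built; with only two colors the gadget engineering is, as you note, even more delicate.

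The paper avoids almost all of this. For part~(1) it does not build any $q$-state gadgets: it starts from the known SETH lower bound of $(q-\epsilon)^t$ for \textsc{$q$-Coloring} parameterized by treewidth, and transfers it via reductions that do not increase the treewidth --- subdividing every edge turns a $q$-colorable graph into a $q$-ONCF-colorable one, and the construction of Abel et al.\ (attaching copies of the recursively defined graphs $G_q$, $G_{q-1}$ to vertices and edges) does the same for $q$-CNCF; the only work is verifying ${\sf tw}(G')\leq\max\{{\sf tw}(G),q\}$. This black-box transfer is both shorter and safer than a ground-up construction, and you could adopt it wholesale. For part~(2) the paper does do a direct reduction from $3$-SAT for $2$-\ONCF, but the clause gadget is a single fixed ten-vertex graph with two proved properties (an all-red input forces a red output; any input with a blue literal extends to a valid coloring), the variable gadget is just a path $u_i$--$v_i$--$w_i$, and the resulting graph has treewidth $3n+2$ without any backbone layout; for $2$-\CNCF it simply reuses an existing reduction of treewidth $\Oh(m)$. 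So your high-level shape for part~(2) matches the paper, but the concrete gadget that makes it work is missing from your write-up, and for part~(1) you should seriously consider the transfer-from-$q$-\textsc{Coloring} route instead.
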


In the remainder of this section, we will prove the two theorems stated above.

\subsection{Algorithms}\label{secappen:alg}

In this section, we prove Theorem~\ref{thm:on-alg}. In the following Lemma, we describe an algorithm for $q$-\ONCF, parameterized by treewidth. The algorithm for $q$-\CNCF parameterized by treewidth is very similar and has the same running time.
\begin{lemma}\label{lem:on-alg}
$q$-\ONCF parameterized by treewidth $t$ admits a $(2q^2)^tn^{\mathcal{O}(1)}$ time algorithm.
\end{lemma}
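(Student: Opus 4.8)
The plan is to design a dynamic programming algorithm over a nice tree decomposition of $G$, processing the four node types (leaf, introduce, forget, join) bottom-up. The central difficulty of ONCF-coloring is that the validity condition for a vertex $v$ depends on the colors of \emph{all} its neighbors, and in particular on whether some color appears \emph{exactly once} in $N(v)$. When we process a bag, some neighbors of a vertex in the bag may still lie further down the tree or have already been forgotten, so we cannot check $v$'s condition locally at a single node. The standard remedy is to carry, for each vertex $x$ currently in the bag, enough information about the colors seen so far in $N(x)$ to eventually decide whether $x$ has a uniquely-colored neighbor. Since ``exactly once'' is the relevant predicate, it suffices to track, for each $x$ in the bag, its own color $c(x) \in [q]$, together with a summary of how the colors appearing in $N(x)$ so far split into the categories \emph{seen zero times}, \emph{seen exactly once}, and \emph{seen at least twice}. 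Crucially, once a color is seen at least twice in $N(x)$ it can never again become the unique color, so collapsing ``$\geq 2$'' into a single state is sound and keeps the number of states bounded.

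First I would define the DP table. For a bag $X_\bu$ and an assignment of colors to $X_\bu$, the state stores, for every $x \in X_\bu$: the color $c(x)$ (giving a factor $q$), and a \emph{status} recording whether $x$ has already been satisfied (already possesses a uniquely-colored neighbor among forgotten vertices) or, if not yet satisfied, which color is currently the candidate unique color in the seen part of $N(x)$. I expect the clean encoding to be: for each $x$, a bit indicating ``already satisfied'' versus ``not yet,'' and in the not-yet case the identity of the color that currently appears exactly once in $N(x)$, if any. This gives roughly $q$ choices for $c(x)$ and at most $2q$ choices for the satisfaction/candidate status, for a per-vertex factor of $2q^2$ and hence $(2q^2)^t$ states per bag, matching the claimed bound; the table stores a Boolean (feasibility) for each state, and we maintain only states consistent with the colors already committed.

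Next I would specify the transitions. At a \textbf{leaf} node the single state with the empty (or singleton, given the non-emptiness modification) bag is initialized as feasible. At an \textbf{introduce} node adding $x$, we extend each child state by choosing $c(x) \in [q]$, set $x$'s neighborhood-status to empty (no neighbors of $x$ have been forgotten yet), and update the status of every already-present neighbor $y$ of $x$ in the bag to account for the newly revealed color $c(x)$ in $N(y)$ — shifting its candidate unique color or marking a collision as appropriate. At a \textbf{forget} node removing $x$, we must \emph{commit} to $x$ being satisfied: we keep only those states in which $x$'s status certifies that it already has (or, counting the bag neighbors about to be summarized, will have) a uniquely-colored neighbor, discarding the rest; we also fold $c(x)$ into the neighborhood summaries of $x$'s remaining bag-neighbors, since $x$ is now a forgotten neighbor contributing its color permanently. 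At the \textbf{root} (empty bag) we read off whether any feasible state survives. The \textbf{join} node is where I expect the main obstacle: two child subtrees may each have partially accumulated, for a shared bag vertex $x$, different multisets of forgotten-neighbor colors, and merging them must correctly recombine ``exactly once'' counts — a color seen once on each side becomes seen twice overall, so its candidate status must be invalidated. Naively this looks like it needs full multiplicities, but the point is that the \emph{per-color} merge is determined by the two coarse statuses (zero/one/many on each side), so the join is a pointwise combination over the $2q^2$ states. To keep the join from costing $(2q^2)^{2t}$ by brute force, I would phrase the combination as a subset-convolution-style product over the bag so that Proposition~\ref{prop:minsum_subset_convolution} (fast subset convolution over the integer max-sum semiring) applies, yielding the same $(2q^2)^t n^{\Oh(1)}$ bound for the join rather than its square. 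Verifying that the status-merge is associative and correctly handled by this convolution, and that the resulting algorithm indeed certifies a global ONCF-coloring, will be the delicate part of the proof; the remaining node types are routine once the state semantics are fixed.
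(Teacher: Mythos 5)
Your overall architecture (nice tree decomposition, per-vertex state of size $2q^2$, subset convolution over the max-sum semiring at join nodes) is the same as the paper's, but the core of your state encoding has a genuine gap. You propose that a not-yet-satisfied vertex $x$ stores ``the identity of the color that currently appears exactly once in $N(x)$, if any.'' This is not a well-defined function of the partial coloring: several colors can simultaneously each appear exactly once in the seen part of $N(x)$, and recording only one of them (or letting the candidate ``shift'' upon collisions, as you suggest at introduce nodes) loses information that is needed later --- to know whether a color may still become the unique witness you must know it has not already been seen twice, and remembering that for all colors costs $3^q$ per vertex, not $2q$. Worse, your encoding collapses to a single colorless ``already satisfied'' state once a witness has been found. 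That breaks soundness: if $x$ is marked satisfied because one \red neighbor was seen, and a second \red neighbor is introduced or forgotten afterwards, your table has no record that \red was the witness and cannot invalidate the satisfaction, so the algorithm can accept colorings in which $N(x)$ has no uniquely colored vertex. (Completeness survives by nondeterminism, but soundness does not.)

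The missing idea --- which is exactly what the paper does --- is to make the witness color a \emph{guess that is fixed for the entire lifetime of the vertex} rather than a derived, mutable quantity: the state stores, for each $x$ in the bag, its color $c_1(x)\in[q]$, a guessed unique color $c_2(x)\in[q]$ that is required to agree across all consistent states in the subtree, and a bit $f(x)$ recording whether a neighbor of color $c_2(x)$ has been seen so far. The invariant is that $c_2(x)$ occurs \emph{at most once} in the processed part of $N(x)$; a second occurrence does not ``shift the candidate'' but renders the state infeasible, and a different guess of $c_2(x)$ is the one that survives for a valid coloring. This gives exactly $q\cdot q\cdot 2=2q^2$ states per vertex, makes the forget-node check ($f(x)=1$) and the join-node merge ($f = g \vee h$ with $g\wedge h$ forbidden) clean, and is where your count of ``at most $2q$ statuses'' ($1+q+1$) silently diverges from the correct $2q$ ($q$ guesses times $2$ bits). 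With that repair, the rest of your plan, including the use of Proposition~\ref{prop:minsum_subset_convolution} at join nodes, goes through as in the paper.
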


\begin{proof}
 We assume that a nice tree decomposition ${\mathcal{T}}=(T,\{X_\bu\}_{\bu \in V(T)})$, rooted at a leaf $\br$, is given to us. Also, recall that no bag in empty, and that each leaf bag or the root bag has exactly one vertex in it. We proceed with the following treewidth dynamic programming. Given a bag $X_{\bi}$ corresponding to the vertex $\bi \in V(T)$, a state for the bag is a tuple $(\bi,c_1,c_2,f)$, where

 \begin{itemize}
  \item $\bi$ determines the bag,
  \item $c_1\colon X_{\bi} \rightarrow [q]$ is a vertex coloring of $X_{\bi}$. Intuitively, for a vertex $x \in X_{\bi}$, $c_1(x)$ is the color $x$ receives in the conflict-free coloring we are after.
  \item $c_2\colon X_{\bi} \rightarrow [q]$ is a color assignment to each vertex of $X_{\bi}$. For a vertex $x \in X_{\bi}$, $c_2(x)$ should be the color that occurs exactly once in the neighborhood of $x$.
  \item $f\colon X_{\bi}\rightarrow \{0,1\}$ is an indicator function for the vertices of $X_{\bi}$. The idea is that $f(x)$ indicates whether $x$ already has a neighbor of color $c_2(x)$ in the subtree rooted at \bi.
\end{itemize}

Let $\mathcal{S}_{\bi}$ be the set of all states associated with $X_{\bi}$.
 A function $\Gamma_\bi\colon\mathcal{S}_{\bi}\rightarrow \{0,1\}$ is defined as follows: For a state $s = (\bi,c_1,c_2,f)$, suppose there is a  vertex coloring $c\colon V_{\bi} \rightarrow [q]$ such that (i) its restriction to the vertices in $X_{\bi}$ is the coloring $c_1$, (ii) for each $v \in X_\bi$, the color $c_2(v)$ is used at most once in $N(v) \cap V_{\bi}$, (iii) for each $v \in X_\bi$, if there is a a vertex $w \in N(v) \cap V_\bi$ such that $c_1(w) = c_2(v)$ then $f(v)=1$ and otherwise $f(v)=0$, (iv) for any vertex $v \in V_{\bi} \setminus X_{\bi}$, $N(v) \subseteq V_{\bi}$ has a uniquely colored vertex under coloring $c$. Then $\Gamma_\bi(s) =  1$. Otherwise, $\Gamma_\bi(s) = 0$. In other words, $c$ is such that except for the vertices in $X_{\bi}$ the graph induced on $V_{\bi}$ is ONCF-colored and a state $s$ stores a snapshot of $c$ at the boundary $X_{\bi}$ of the graph seen so far.

 Our dynamic programming will calculate the function $\Gamma_\bi$ for each bag $\bi$. Note that for the root $X_\br = \{z\}$, if in $\mathcal{S}_\br$ there is a state $(\br,c_1,c_2,f)$ such that $f(z) = 1$ and $\Gamma_\br(s) = 1$, then the graph $G$ has a $q$-ONCF-coloring. We describe our dynamic programming in cases according to the types of nodes of the tree decomposition.

 \paragraph*{Leaf Node:}
 Let $X_\bi= \{z\}$ be a leaf node. Then,

 \[
\Gamma_\bi(\bi,c_1,c_2,f) =
\begin{cases}
1 &\text{ if } f(z)=0\\
0 & \text{ otherwise.}
\end{cases}
\]
This can be calculated in $2q^2$ time. For the correctness, note that the uniquely colored neighbor of $z$ cannot appear in the graph seen so far as a leaf node only contains $z$.

 \paragraph*{Forget Node:}
 Let $X_\bi$ be a forget node with its child being $X_\bj$. Also, let $X_\bj = X_\bi \cup \{v\}$. Consider a state $s=(\bi,c_1,c_2,f)\in \mathcal{S}_\bi$ and a state $s'=(\bj,d_1,d_2,g) \in \mathcal{S}_\bj$. We say that $s'$ is consistent with $s$, or $s' \leq_c s$ if (i) $d_1|_{X_\bi} = c_1, d_2|_{X_\bi} = c_2$, (ii) $g|_{X_\bi} = f$ and $g(v) = 1$. Then,

 \[
\Gamma_\bi(\bi,c_1,c_2,f) =
\begin{cases}
\Gamma_\bj(\bj,d_1,d_2,g) &\text{ if } (\bj,d_1,d_2,g)\leq_c (\bi,c_1,c_2,f)\\
0 & \text{ otherwise.}
\end{cases}
\]
This can be calculated in $(2q^2)^{|X_\bi|}$ time. To prove correctness, first suppose $\Gamma_\bi(\bi,c_1,c_2,f) = 1$ and let $c\colon V_{\bi} \rightarrow [q]$ be a coloring that is a witness to this. By definition of consistency, there is only one state $(\bj,d_1,d_2,g)$ such that $(\bj,d_1,d_2,g)\leq_c (\bi,c_1,c_2,f)$. Since $V_{\bi} = V_{\bj}$, the same coloring $c$ also witnesses the fact that $\Gamma_\bj(\bj,d_1,d_2,g)=1$. Conversely, suppose $\Gamma_\bj(\bj,d_1,d_2,g) = 1$ where $(\bj,d_1,d_2,g)$ is the unique state such that $(\bj,d_1,d_2,g)\leq_c (\bi,c_1,c_2,f)$. Let $c\colon V_{\bj} \rightarrow [q]$ be a coloring that is a witness to this. Since $V_{\bi} = V_{\bj}$ and by definition of consistency $g(v)=1$, the same coloring $c$ also witnesses the fact that $\Gamma_\bi(\bi,c_1,c_2,f)=1$. Thus, our recurrence correctly calculates $\Gamma_\bi(\bi,c_1,c_2,f)$.

 \paragraph*{Introduce Node:}
 Let $X_\bi$ be an introduce node with its child being $X_\bj$. Also, let $X_\bi = X_\bj \cup \{v\}$. Consider a state $s=(\bi,c_1,c_2,f)\in \mathcal{S}_\bi$ and a state $s'=(\bj,d_1,d_2,g) \in \mathcal{S}_\bj$. We say that $s'$ is consistent with $s$, or $s' \leq_c s$ if (i) $c_1|_{X_\bj} = d_1, c_2|_{X_\bj} = d_2$, (ii) if there is a $w \in N(v) \cap X_\bi$ such that $c_2(v) = c_1(w)$ then there is exactly one such $w$ and $f(v) = 1$, otherwise there is no such $w$ and $f(v)=0$, (iii) If there is a $w \in N(v) \cap X_\bi$ such that $c_1(v) = c_2(w)$ then $g(w) = 0$ and $f(w)=1$, (iv) for all other $u\in X_\bj$, $f(u) = g(u)$. Then,

 \[
\Gamma_\bi(\bi,c_1,c_2,f) =
\begin{cases}
\max_{s' \in \mathcal{S}_j} \Gamma_j(s') &\text{ such that } s'\leq_c (\bi,c_1,c_2,f)\\
0 & \text{ otherwise.}
\end{cases}
\]
This can be calculated in $(2q^2)^{|X_\bi|}$ time. To prove correctness, first suppose $\Gamma_\bi(\bi,c_1,c_2,f) = 1$ and let $c\colon V_{\bi} \rightarrow [q]$ be a coloring that is a witness to this. By definition of consistency, there is at least one state $s'$ such that $s'\leq_c (\bi,c_1,c_2,f)$ and the same coloring $c$ also witnesses the fact that $\Gamma_\bj(s')=1$. Conversely, suppose there is a state $s'$ such that $s'\leq_c (\bi,c_1,c_2,f)$ and $\Gamma_\bj(s') = 1$. Then by definition of consistency, $\Gamma_\bi(\bi,c_1,c_2,f)=1$. Thus, our recurrence correctly calculates $\Gamma_\bi(\bi,c_1,c_2,f)$.

 \paragraph*{Join Node:}
 Let $X_\bi$ be a join node with its children being $X_\ba$ and $X_\bb$. This means that $X_\bi = X_\ba = X_\bb$. Consider a state $s=(\bi,c_1,c_2,f)\in \mathcal{S}_\bi$, and states $s'=(\ba,d_1,d_2,g) \in \mathcal{S}_\ba, s''=(\bb,e_1,e_2,h) \in \mathcal{S}_\bb$. We say that $\{s',s''\}$ is consistent with $s$, or $\{s',s''\} \leq_c s$ if (i) $c_1 = d_1 = e_1, c_2 = d_2 = e_2$, (ii) if there is a $v \in X_\bi$ such that $g(v) = 1$ ($h(v)=1$) then $h(v) = 0$ ($g(v)=0$) and $f(v) = 1$, (iii) for all other $v \in X_\bi$, $f(v)=g(v)=h(v)=0$. Then,

 \[
\Gamma_\bi(s) =
\begin{cases}
\max_{s' \in \mathcal{S}_\ba, s'' \in \mathcal{S}_\bb} \Gamma_\ba(s')\cdot \Gamma_\bb(s'') &\text{ such that } \{s',s''\}\leq_c s\\
0 & \text{ otherwise.}
\end{cases}
\]

 As before, the correctness of the recurrence follows from the definition of consistency.

It is straightforward to calculate this in $(3q^2)^{|X_\bi|}$ time, we will further improve this to $(2q^2)^{|X_\bi|}$ time as follows.

Notice that if we fix $c_1$ and $c_2$, then $d_1,d_2,e_1,e_2$ get fixed for consistent states. Also, given $f$, $g$, and $h$, consider the vertices $B(f) = \{v\in X_\bi \mid f(v)=0\}$. Then for each vertex $v \in B(f)$, $g(v) = h(v)= 0$. Now consider $X = X_\bi \setminus B(f)$. For consistent states, the following relations hold: (i) $g^{-1}(1)\uplus h^{-1}(1) = X$, (ii) $g^{-1}(0)\setminus B(f) = h^{-1}(1)$ and $g^{-1}(1) = h^{-1}(0)\setminus B(f)$. Thus, if we are given the function $g$, we can completely determine $h$ when we are looking at consistent states. Now, fix a function $f$. We define functions $F_\ba, F_\bb\colon2^{X_i}\rightarrow [0,1]$ in the following way. For a subset $Z \subseteq X_\bi \setminus B(f)$, define a function $g\colon X_\bi \rightarrow \{0,1\}$ such that for any $v\in Z$, $g(v) = 1$ and $g(v) = 0$ otherwise. Now, define $F_\ba(Z) = \Gamma_\ba(\ba,c_1,c_2,g)$ and $F_\bb(Z) = \Gamma_\bb(\bb,c_1,c_2,g)$.

Then,
 \[
\Gamma_\bi(\bi,c_1,c_2,f) =
\begin{cases}
1  &\text{ if } {F_a*F_b(X_\bi \setminus B(f))} = \\&\text{\quad}\max_{Z \subseteq X_\bi \setminus B(f)} \{{F_a(Z)+F_b((X_\bi \setminus B(f))\setminus Z)} \} = 2,\\
0 & \text{ otherwise.}
\end{cases}
\]

The correctness of this recurrence is same as the correctness of the previous recurrence. Due to fast subset convolution over the max-sum semi-ring~\cite{book}, this can be calculated in $(2q^2)^{|X_\bi|}$ time.
\qed\end{proof}

\subsection{Running time lower bounds}\label{secappen:lb}
In this section, we given the proof of Theorem~\ref{thm:alg-lb} by describing lower bounds on algorithmic running times for the \ONCF{} and \CNCF{} problems parameterized by treewidth.

We start by providing a running time lower bound on $2$-\ONCF under ETH claimed in Theorem \ref{thm:alg-lb}. The bound will be obtained by giving a reduction from $3$-SAT, and in order to give the reduction we will need the following type of gadget.

\begin{definition}\label{def:oncf-gadget}
An \emph{ONCF-gadget} is a gadget on ten vertices, as depicted in Figure \ref{fig:ONCF-gadget}.
\end{definition}

\begin{figure}
\centering
\includegraphics{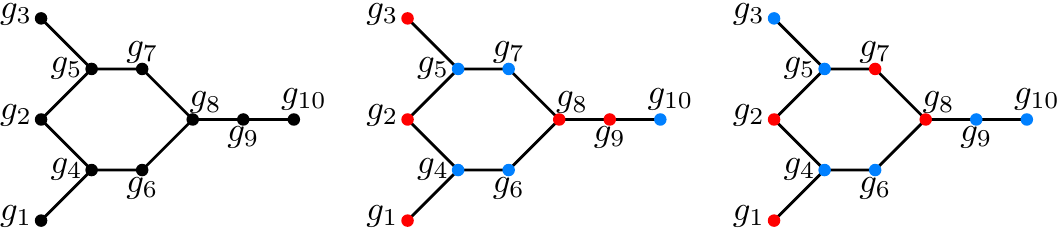}
\caption{The ONCF-gadget (left). Observe that if $g_1,g_2$, and $g_3$ are all red, then $g_9$ must also be red (middle), and if one of $g_1,g_2$, or $g_3$ is blue, then $g_9$ may be blue (right).}
\label{fig:ONCF-gadget}
\end{figure}

The objective of this gadget is the following. The vertices $\{g_1,g_2,g_3,g_{10}\}$ in Figure~\ref{fig:ONCF-gadget} will be the interaction points of the ONCF-gadget with the outside world. As will be proved in the following two lemmas, the gadget is designed so as to (i) disallow certain $2$-ONCF-colorings and (ii) allow certain $2$-ONCF-colorings on its interaction points.

\begin{lemma}\label{lem:ONCF-gadget:all-red}
Let $G$ be a ONCF-gadget with a coloring $c \colon V(G) \rightarrow \{\red,\blue\}$ such that for all $4 \leq i \leq 9$ the neighborhood of $g_i$ is ONCF-colored by $c$. If $c(g_1) = c(g_2) = c(g_3) = \red$, then $c(g_9) = \red$.
\end{lemma}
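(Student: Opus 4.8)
The plan is to turn each hypothesis ``$N(g_i)$ is ONCF-colored'' into a counting constraint and then propagate forced colors inward until $g_9$ is pinned down. The key observation I would rely on throughout is the following characterization of the two-color case: since $c$ takes values in $\{\red,\blue\}$, an open neighborhood $N(g_i)$ is ONCF-colored precisely when one of the two colors is attained by exactly one vertex of $N(g_i)$. Writing $r_i$ and $b_i$ for the numbers of \red{} and \blue{} vertices in $N(g_i)$, this says $N(g_i)$ is ONCF-colored iff $r_i = 1$ or $b_i = 1$, and moreover the uniquely colored vertex is the one realizing the color that occurs once. This converts each of the six hypotheses (for $4 \le i \le 9$) into a simple arithmetic condition on the colors of the neighbors of $g_i$, which is what makes a deterministic cascade of deductions possible.

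Concretely, I would first read off the adjacencies of the gadget from Figure~\ref{fig:ONCF-gadget} and record the exact set $N(g_i)$ for each internal vertex $g_i$, $4 \le i \le 9$. Fixing $c(g_1) = c(g_2) = c(g_3) = \red$, I would apply the counting constraint in the order in which neighborhoods become ``almost determined''\,---\,starting with those $g_i$ whose neighborhood contains $g_1,g_2,g_3$ together with few still-free vertices, and working toward $g_9$. Each application of ``$r_i = 1$ or $b_i = 1$'' either forces the color of a previously free vertex or narrows it to a short list of possibilities, so the argument is a directed propagation rather than an unstructured search. The symmetry among $g_1,g_2,g_3$ should be used to collapse branches that differ only by a relabeling of these three inputs.

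The cleanest way to finish, I expect, is by contradiction: assume $c(g_9) = \blue$ and show this is incompatible with the forced pattern. Once $g_1,g_2,g_3$ are all red, the constraints on the vertices feeding into $g_9$ should force a unique coloring of $N(g_9)$ (or of the neighborhoods of the vertices adjacent to $g_9$) in which a red vertex is the unique one; setting $g_9$ blue then creates a neighborhood in which neither color occurs exactly once, violating the ONCF-property of some $g_i$ on the path. This matches the intended behaviour depicted in the ``middle'' picture of Figure~\ref{fig:ONCF-gadget}, where all-red inputs force $g_9$ red.

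The main obstacle will be the branching in the case analysis: the constraint ``$r_i = 1$ or $b_i = 1$'' is a disjunction, so several partial colorings may survive the first deductions and each must be carried forward and shown to collapse to $c(g_9) = \red$. Keeping the number of live branches bounded requires choosing a propagation order that introduces at most one genuinely new color choice per step. The argument is entirely finite and mechanical once the edge set is fixed, so beyond careful bookkeeping I do not anticipate needing any new idea.
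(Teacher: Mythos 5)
Your proposal takes essentially the same approach as the paper: propagate forced colors from $g_1,g_2,g_3$ inward toward $g_9$, using the observation that a two-colored open neighborhood is ONCF-colored iff one of the two colors occurs exactly once. The paper's execution confirms the plan and is even simpler than you anticipate---the cascade is fully deterministic with no branching, since $N(g_4)=\{g_1,g_2,g_6\}$ already contains two \red vertices and so forces $c(g_6)=\blue$, similarly $c(g_7)=\blue$, and then $N(g_8)=\{g_6,g_7,g_9\}$ contains two \blue vertices and forces $c(g_9)=\red$---so the case analysis you worry about never materializes.
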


\begin{proof}
Suppose $c(g_1) = c(g_2) = c(g_3) = \red$. Since $N(g_4) = \{g_1,g_2,g_6\}$, this implies $c(g_6) = \blue$. Similarly, we find $c(g_7) = \blue$. Since $N(g_8) = \{g_6,g_7,g_9\}$ now has two \blue vertices, we conclude that $c(g_9) = \red$.\qed
\end{proof}

\begin{lemma}\label{lem:ONCF-gadget:not-all-red}
Let $G$ be a ONCF-gadget. Let $c' \colon \{g_1,g_2,g_3\} \rightarrow \{\red,\blue\}$ be a partial $2$-ONCF-coloring of $G$. If there exists $i \in [3]$ such that $c'(g_i) = \blue$, then $c'$ can be extended to a coloring $c$ satisfying
\begin{enumerate}
\item \label{prop:ONCF-colored} For every $4 \leq i \leq 9$, the neighborhood of vertex $g_i$ is ONCF-colored by $c$ (contains at most one red, or at most one blue vertex), and
\item \label{prop:fixed-colors} $c(g_9) = \blue$, $c(g_8) = \red$, $c(g_4) = c(g_5) = \blue$, and $c(g_{10}) = \blue$.
\end{enumerate}
\end{lemma}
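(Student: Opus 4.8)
The plan is to prove this constructively. Note that property~\ref{prop:fixed-colors} already pins down the colors of $g_4, g_5, g_8, g_9$, and $g_{10}$, while $g_1, g_2, g_3$ are fixed by the partial coloring $c'$; the only genuine freedom therefore lies in the colors of $g_6$ and $g_7$. So I would first commit to the forced values $c(g_4) = c(g_5) = c(g_9) = c(g_{10}) = \blue$ and $c(g_8) = \red$, and then show that $g_6$ and $g_7$ can always be chosen so that the resulting coloring $c$ satisfies property~\ref{prop:ONCF-colored}.

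The heart of the argument is a short observation about three-vertex open neighborhoods over two colors: such a neighborhood is ONCF-colored exactly when its three vertices are not all the same color, since in any $2$--$1$ split the minority vertex is uniquely colored. Applying this to $N(g_4) = \{g_1,g_2,g_6\}$, $N(g_5) = \{g_2,g_3,g_7\}$, and $N(g_8) = \{g_6,g_7,g_9\}$ (with $c(g_9) = \blue$), the three constraints become: $\{g_1,g_2,g_6\}$ not monochromatic, $\{g_2,g_3,g_7\}$ not monochromatic, and at least one of $g_6, g_7$ colored \red. I would then choose $g_6$ and $g_7$ greedily: the color of $g_6$ is forced only when $c(g_1) = c(g_2)$ (take the opposite color), and that of $g_7$ only when $c(g_2) = c(g_3)$; in the remaining cases take \red. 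The decisive point is that $g_6$ and $g_7$ are simultaneously forced to \blue only if $c(g_1) = c(g_2) = c(g_3) = \red$, which is precisely the hypothesis we have excluded. Hence, as soon as some $g_i$ is \blue, the third constraint is automatically compatible with the first two, and a valid choice of $(g_6,g_7)$ exists.

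It remains to verify the neighborhoods $N(g_6)$, $N(g_7)$, and $N(g_9)$ read off from Figure~\ref{fig:ONCF-gadget}, and to confirm that property~\ref{prop:fixed-colors} holds by construction; here the choice $c(g_{10}) = \blue$ is what makes $N(g_9)$ ONCF-colored. To shorten the case analysis I would exploit the evident symmetry of the gadget swapping $g_1 \leftrightarrow g_3$, $g_4 \leftrightarrow g_5$, $g_6 \leftrightarrow g_7$ and fixing the other vertices, which reduces the number of colorings of $\{g_1,g_2,g_3\}$ that need separate treatment.

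The main obstacle I anticipate is bookkeeping rather than logic: one must respect the demands placed on $g_6, g_7$ by $N(g_4), N(g_5)$ and by $N(g_8)$ \emph{simultaneously} with the constraints coming from the neighborhoods $N(g_6), N(g_7)$ in which $g_6, g_7$ are themselves the center. The construction above is arranged so that the ``at least one \blue among $g_1,g_2,g_3$'' hypothesis supplies exactly the slack needed to keep $g_9$ \blue, in contrast with Lemma~\ref{lem:ONCF-gadget:all-red}, where the all-\red case forces $c(g_6) = c(g_7) = \blue$ and hence $c(g_9) = \red$. Checking that this same slack also suffices for the centered neighborhoods of $g_6$ and $g_7$ is the one place where the explicit gadget structure, rather than the three-vertex observation alone, must be invoked.
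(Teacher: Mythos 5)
Your proposal is correct and follows essentially the same route as the paper: fix $c(g_4)=c(g_5)=c(g_9)=c(g_{10})=\blue$ and $c(g_8)=\red$, then set $g_6$ (resp.\ $g_7$) to \red unless both of $g_1,g_2$ (resp.\ $g_2,g_3$) are \red, which is exactly the paper's rule, and observe that the excluded all-\red case is the only way both $g_6$ and $g_7$ could be forced to \blue and spoil $N(g_8)$. Your added justification via the ``three-vertex neighborhood over two colors is ONCF-colored iff not monochromatic'' observation is a nice explicit account of the verification the paper leaves to the reader.
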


\begin{proof}
Let $c$ equal $c'$ on vertex $g_1,g_2$ and $g_3$ and define $c(g_9) := \blue$, $c(g_8) := \red$, $c(g_4) := c(g_5) := \blue$, and $c(g_{10}) := \blue$. If $c'(g_1) = \blue$ or $c'(g_2) = \blue$, define $c(g_6) := \red$ else define $c(g_6):= \blue$. If $c'(g_2) = \blue$ or $c'(g_3) = \blue$, define $c(g_7):= \red$, otherwise let $c(g_7) := \blue$. This completes the definition of $c$.
It is easy to verify that both requirements are satisfied by this coloring, refer to Figure \ref{fig:ONCF-gadget} for an example coloring.\qed
\end{proof}

Now that we have introduced the necessary gadgets, we can prove the running time lower bound for $2$-\ONCF.
\begin{lemma}\label{lem:on2}
$2$-\ONCF{} parameterized by treewidth $t$ cannot be solved in $2^{o(t)}n^{\Oh(1)}$ time, under ETH.
\end{lemma}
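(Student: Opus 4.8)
The plan is to reduce from $3$-SAT via a reduction that produces a graph of treewidth $\Oh(1)$ per clause–variable "interaction," so that the total treewidth is linear in the number of variables plus clauses, and then invoke ETH. Recall that ETH implies $3$-SAT on $n$ variables and $m$ clauses cannot be solved in $2^{o(n+m)}$ time (after the sparsification lemma we may assume $m = \Oh(n)$). If I can build a graph $G$ with $\mathrm{tw}(G) = \Oh(n+m)$ such that $G$ has a $2$-ONCF-coloring if and only if the formula is satisfiable, and $G$ has size polynomial in $n+m$, then a $2^{o(t)}n^{\Oh(1)}$ algorithm would decide $3$-SAT in $2^{o(n+m)}(n+m)^{\Oh(1)}$ time, contradicting ETH.

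\textbf{Variable and clause encoding.} First I would set up, for each variable $x$, a vertex (or small cluster of vertices) whose $2$-coloring encodes the truth value of $x$: \red{} for \emph{true}, \blue{} for \emph{false}, say. The heart of the construction is the ONCF-gadget from Definition~\ref{def:oncf-gadget}, whose behavior is pinned down by Lemmas~\ref{lem:ONCF-gadget:all-red} and~\ref{lem:ONCF-gadget:not-all-red}. Reading those two lemmas together, the gadget implements a logical \textbf{OR}: if the three interaction vertices $g_1,g_2,g_3$ are \emph{all} \red{} then the output $g_9$ is forced to \red{} (Lemma~\ref{lem:ONCF-gadget:all-red}), whereas if at least one of them is \blue{} then the coloring can be completed with $g_9 = \blue$ (Lemma~\ref{lem:ONCF-gadget:not-all-red}). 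So for a clause $(\ell_1 \vee \ell_2 \vee \ell_3)$ I would feed the three literal-vertices into $g_1,g_2,g_3$ (negating a variable corresponds to swapping the roles of \red{} and \blue{}, which I would realize with a tiny negation gadget so that a satisfied literal presents as \blue), and then force the output $g_9$ (and the auxiliary interaction vertex $g_{10}$) to be \blue{} by an attachment that makes a \red{} output infeasible. The clause is satisfiable exactly when the forced-\blue{} output is consistent, i.e.\ exactly when not all three literals are false.

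\textbf{Treewidth and correctness.} Each ONCF-gadget has only ten vertices, so it contributes $\Oh(1)$ to the treewidth; I would lay out the tree decomposition with the variable vertices kept "live" in the bags while the clause gadgets are processed one at a time. The subtlety is that every variable vertex must be simultaneously visible to all clauses containing it, so the naive decomposition keeps all $n$ variable vertices in every bag, giving $\mathrm{tw} = \Oh(n)$ — which is exactly the linear bound I want, not better. (One must be careful that the gadget attachments and negation gadgets do not create long-range dependencies that blow the treewidth up beyond $\Oh(n+m)$; checking this is routine but must be done.) For correctness I would argue both directions: given a satisfying assignment, color the variable vertices accordingly and extend through each gadget using Lemma~\ref{lem:ONCF-gadget:not-all-red}; conversely, given a valid $2$-ONCF-coloring of $G$, read off the assignment from the variable vertices and use Lemma~\ref{lem:ONCF-gadget:all-red} to show each clause is satisfied (if all three literals were false, the forced output would have to be \red, contradicting the enforcement).

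\textbf{The main obstacle} is not the ETH bookkeeping but the gadget engineering: I must ensure that the \emph{global} ONCF constraint — every vertex in all of $G$, including the interaction and connector vertices, needs a uniquely colored neighbor — does not impose spurious constraints that break the intended semantics or force colorings the lemmas do not cover. The two gadget lemmas only certify the neighborhoods of $g_4,\dots,g_9$; I would need to supplement them with an analysis of the neighborhoods of $g_1,g_2,g_3,g_{10}$ and of the connector/negation vertices, verifying that these can always be satisfied (possibly by padding each interaction vertex with a couple of private degree-two vertices so its own ONCF-requirement is met independently of the rest of the construction). Getting these boundary conditions exactly right, while keeping each gadget's footprint $\Oh(1)$ so the treewidth stays linear, is where the real work lies.
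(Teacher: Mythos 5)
Your plan is essentially the paper's proof: a reduction from $3$-SAT using the ONCF-gadget as an OR (all-\red{} inputs force $g_9$ \red, which is made infeasible by attaching $g_{10}$ to a \red{} palette vertex), with the variable vertices kept live in every bag to give treewidth $\Oh(n)$. The only cosmetic difference is that the paper avoids a separate negation gadget by giving each variable a path $u_i$--$v_i$--$w_i$ (so $u_i$ and $w_i$ are forced to opposite colors and negative literals attach to $w_i$), and realizes your "inverter" via a degree-two connector vertex $s^b_j$ whose two neighbors must receive distinct colors.
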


\begin{proof}
We show this by giving a reduction from $3$-{\sc SAT}. Given an instance of $3$-{\sc SAT} with variables $x_1,\ldots,x_n$ and clauses $C_1,\ldots,C_m$, create a graph $G$ as follows. Start by creating palette vertices $R,R'$, and $B$, and edges $\{R,R'\}$ and $\{R',B\}$. For each variable $i\in[n]$, create vertices $u_i,v_i,w_i$ and add edges $\{u_i,v_i\}$ and $\{v_i,w_i\}$. For the remainder of the construction we will reuse the ONCF-gadget as defined in Definition \ref{def:oncf-gadget}. For each $j \in [m]$, add an ONCF-gadget $G_j$ and connect $g_{10}$ of this gadget to $R$. Add vertices $s^1_j,s^2_j$, and $s^3_j$ and connect $s^b_j$ to $g_b$ in $G_j$ for $b \in [3]$.
Let clause $C_j := (\ell_1,\ell_2,\ell_3)$. Now if $\ell_b = x_i$ for some $i \in [n], b \in [3]$, connect $s^b_j$ to $u_i$. Similarly, if $\ell_b = \neg x_i$, connect $s^b_j$ to $w_i$. This concludes the construction of $G$, it remains to show that $G$ is $2$-ONCF-colorable if and only if the formula was satisfiable.

Suppose the satisfiability instance has satisfying assignment $\tau \colon \{x_1,\ldots,x_n\} \rightarrow \{0,1\}$, we show how to color $G$. Let $c(R) :=  c(R'):=\red$, and $c(B)  :=\blue$. Let $c(v_i) := \blue$ for all $i\in [n]$ and define $c(s^b_j):=\red$ for all $b \in [3]$, $j \in [m]$. Finally, if $\tau(x_i) = 1$, let $c(u_i) := \red$ and $c(w_i) := \blue$. Otherwise, let $c(u_i) := \blue$ and $c(w_i) := \red$. For each gadget $G_m$, vertex $g_b$ for $b \in [3]$ has neighbor $s^b_j$. Let $v \in \{u_i,w_i \mid i \in [n]\}$ be the other neighbor of vertex $s^b_j$. Define $c(g_b)$ such that $c(g_b) \neq c(v)$. Since the formula was satisfied by $\tau$, for each $ j \in [m]$ there hereby exists $b \in [3]$ such that $c(g_b) = \blue$. We use Lemma \ref{lem:ONCF-gadget:not-all-red} to extend the partial coloring to color gadget $G_m$, with $c(g_{10}) = \blue$ and $c(g_4) = c(g_5) = \blue$. It is straightforward to verify that $c$ is a $2$-ONCF-coloring of $G$.

Suppose $G$ has a $2$-ONCF-coloring, we give a satisfying assignment $\tau$. Assume without loss of generality that $c(R) := \red$. Since $N(v_i) := \{w_i,u_i\}$ for all $i\in[n]$, it follows that $c(u_i) \neq c(w_i)$. We therefore define $\tau(x_i) := 1$ if $c(u_i) := \red$ and $\tau(x_i) = 0$ if $c(w_i) := \red$. Let $C_j$ be a clause, we will show that $\tau$ satisfies $C_j$ to conclude the proof. Suppose for contradiction that $\tau$ does not satisfy $C_j$. Then every vertex $s^b_j$ for $b \in [3]$ had one neighbor in $\{u_i,w_i \mid i \in [n]\}$ that is $\blue$ in $G$. Thereby, its only other neighbor $g_b$ in gadget $G_j$ must be colored \red. It follows from Lemma \ref{lem:ONCF-gadget:all-red} that $c(g_9) := \red$. Observe however that $N(g_{10}) := \{g_9,R\}$ and that both these vertices are \red, contradicting that $c$ is a $2$-ONCF-coloring of $G$. Thus, the formula is satisfied by $\tau$.

Note that the graph induced by $V(G)\setminus \{u_i,v_i,w_i\mid i \in [n]\}$ is a disjoint union of ONCF-gadgets and has treewidth two. As such, $G$ has treewidth at most $3n + 2$.

In this reduction a $3$-{\sc SAT} formula $\phi$ on $n$ variables and $m$ clauses is reduced to a graph $G$ with treewidth at most $3n+2$. We proved that $\phi$ is satisfiable if and only if $G$ has a $2$-ONCF-coloring. Since $3$-{\sc SAT} cannot be solved in $2^{o(n)}n^{\Oh(1)}$ time under ETH, this also implies that $2$-\CNCF{} parameterized by treewidth $t$ cannot be solved in $2^{o(t)}n^{\Oh(1)}$ time, under ETH.
\qed\end{proof}

Note that a reduction from $3$-{\sc SAT} to $2$-\ONCF{} was given in Theorem 2 of~\cite{GarganoRR15}. However, that reduction led to a quadratic blow-up in the input size. Hence, the need for the alternative reduction given above.

\begin{lemma}\label{lem:on3}
For $q \geq 3$, $q$-\ONCF{} parameterized by treewidth $t$ cannot be solved in $(q - \epsilon)^tn^{\Oh(1)}$ time, under SETH.
\end{lemma}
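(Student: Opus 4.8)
The plan is to prove the SETH lower bound for $q$-\ONCF{} with $q \geq 3$ by giving a reduction from $q$-\textsc{SAT} (or more precisely, from the \textsc{CNF-SAT} problem whose hardness under SETH gives the $(q-\varepsilon)^t$ barrier) in the style of the standard treewidth SETH lower bounds. The key observation is that to rule out $(q-\varepsilon)^t n^{\Oh(1)}$ time, I want a reduction in which each of the $n$ Boolean variables is encoded by a \emph{single} structure that takes one of $q$ states, and where the total treewidth of the constructed graph is roughly $n + \Oh(\log m)$ or $n$ times a constant that does not depend on $q$, so that a faster algorithm for $q$-\ONCF{} would yield a $(q-\varepsilon)^n$-time SAT algorithm, contradicting SETH. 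The classical template here is the lower bound for $q$-\textsc{Coloring} parameterized by treewidth (see the techniques underlying \cite{book}); I would follow that template, reducing from \textsc{$q$-SAT}-like problems where the $q$ truth/state choices per variable align with the $q$ colors.

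First I would set up the reduction from \textsc{CNF-SAT} on $n$ variables and $m$ clauses, where SETH tells us that for every $\varepsilon > 0$ there is no $(2-\varepsilon)^n\mathrm{poly}$ algorithm; by grouping variables into blocks of size $\log q$ we obtain instances where each block of variables has $q$ possible assignments, matching the $q$ colors available in a $q$-ONCF-coloring. For each such block I would introduce a \emph{path} (or long ``information highway'') of vertices whose ONCF-color is forced to propagate a single color choice from one of $q$ options along the length of the construction, so that the color assigned to this path-vertex encodes the assignment of the corresponding variable block. The main engineering task is to design, for the open-neighborhood conflict-free constraint specifically, gadgets that (i) force each encoding path to carry exactly one of $q$ colors consistently, and (ii) for each clause, verify that the combination of colors on the incident paths is \emph{not} the unique forbidden combination that falsifies the clause. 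Because $q$-ONCF-coloring constraints speak about uniqueness of a color in a neighborhood rather than about properness, these gadgets will be more delicate than in the $q$-\textsc{Coloring} case, and I would likely reuse and generalize the ONCF-gadget from Definition~\ref{def:oncf-gadget} (which already implements a Boolean ``not-all-forbidden'' check for $q=2$) to the $q$-color setting.

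The crucial accounting step is the treewidth bound: I must ensure that the constructed graph has treewidth at most $n' + \Oh(1)$ where $n'$ is the number of encoding paths (one per variable block), i.e. the width scales with the number of variable-blocks and \emph{not} with $q$ or with $m$ beyond lower-order terms. This is achieved by the standard path-decomposition argument: sweep through the clause gadgets one at a time, keeping in each bag only the $n'$ encoding-path vertices at the current ``cut'' plus the $\Oh(1)$ vertices of the single clause gadget currently being processed. Given such a bound, a hypothetical $(q-\varepsilon)^t n^{\Oh(1)}$ algorithm for $q$-\ONCF{} would decide the SAT instance in time $(q-\varepsilon)^{n'+\Oh(1)}\mathrm{poly} = (q-\varepsilon)^{n/\log q}\mathrm{poly} \cdot \mathrm{const}^{\Oh(1)}$, which after taking $\log_2$ contradicts SETH once the base $q-\varepsilon$ is translated back to base $2$ per original variable.

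The hard part will be the gadget design for the clause-checking under the open-neighborhood conflict-free constraint: unlike proper coloring, where a clause gadget only needs to forbid one color-tuple via equality constraints along edges, here I must encode ``the tuple of colors on the three incident encoding paths is not the falsifying tuple'' using only the ONCF uniqueness property, and I must guarantee that every vertex of the gadget (including auxiliary vertices) always has a conflict-free neighbor regardless of which satisfying assignment is chosen. Getting the gadget to be simultaneously (a) constant-size, (b) correctly selective (rejecting exactly the falsifying combination and accepting all others), and (c) itself always ONCF-colorable without spuriously constraining the encoding paths, is the main obstacle; I expect to handle it by a case analysis analogous to Lemmas~\ref{lem:ONCF-gadget:all-red} and~\ref{lem:ONCF-gadget:not-all-red}, proving one ``forcing'' lemma (the falsifying tuple forces an inconsistency propagated to a vertex adjacent to a fixed palette vertex) and one ``extension'' lemma (any non-falsifying tuple extends to a valid coloring of the gadget), generalized from two colors to $q$ colors.
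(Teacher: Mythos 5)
Your proposal has a genuine gap: the entire technical core of the argument is deferred rather than carried out. You correctly identify that the hard part is designing, for the open-neighborhood conflict-free constraint, (i) encoding gadgets that force one of $q$ states to propagate consistently along a path and (ii) constant-size clause gadgets that reject exactly the falsifying color tuple while remaining ONCF-colorable for every other tuple --- and then you state that you ``expect to handle it'' by generalizing the two-color ONCF-gadget of Definition~\ref{def:oncf-gadget} to $q$ colors. No such generalization is given, and it is not routine: the ONCF-gadget's forcing behavior (Lemmas~\ref{lem:ONCF-gadget:all-red} and~\ref{lem:ONCF-gadget:not-all-red}) relies on parity-like arguments specific to two colors, and the uniqueness semantics of ONCF-coloring makes ``forbid exactly one $q$-ary tuple'' gadgets substantially harder than the edge-based inequality constraints available for proper coloring. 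The accounting step (treewidth $n'+\Oh(1)$ with $n' = n/\log_2 q$ blocks, giving a $(2-\varepsilon')^n$ SAT algorithm) is sound in outline, but without the gadgets there is no reduction to account for. As written, this is a research plan, not a proof.

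The paper avoids all of this by composing two known results: the SETH-tightness of $(q-\varepsilon)^t$ for $q$-\textsc{Coloring} parameterized by treewidth (from~\cite{SETH-tight}), and the reduction of Gargano and Rescigno showing that $G$ is $q$-colorable if and only if the graph $G'$ obtained by subdividing every edge of $G$ once is $q$-ONCF-colorable. Since subdivision does not increase treewidth, the lower bound transfers immediately. If you want a self-contained proof along your lines, you would essentially be re-proving the $q$-\textsc{Coloring} lower bound of~\cite{SETH-tight} and simultaneously adapting its gadgets to ONCF semantics; the two-step composition is both shorter and avoids the unresolved gadget-design problem entirely.
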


\begin{proof}
 It was shown in~\cite{SETH-tight} that for a constant $q \geq 3$, $q$-{\sc Coloring} cannot be solved in $(q - \epsilon)^t n^{\Oh(1)}$ time, under SETH. For a graph $G$, let $G'$ be the graph obtained by subdiving every edge of $E(G)$ once. It was shown in Theorem 3 of~\cite{GarganoRR15}, that $G$ has a $q$-coloring if and only if $G'$ has a $q$-ONCF-coloring. Also, note that $tw(G') \leq tw(G)$ since it is a subdivision of $G$. Thus, for a constant $q \geq 3$, the lower bound of $(q - \epsilon)^tn^{\Oh(1)}$ on the running time of any algorithm under SETH follows.
\qed\end{proof}

\begin{lemma}\label{lem:cn2}
$2$-\CNCF{} parameterized by treewidth $t$ cannot be solved in $2^{o(t)}n^{\Oh(1)}$ time, under ETH.
\end{lemma}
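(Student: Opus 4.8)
The plan is to reduce from $3$-{\sc SAT}, mirroring the structure of the reduction in Lemma~\ref{lem:on2} for the open-neighborhood case, but adapting the gadgets to the closed-neighborhood setting. The key difference is that in a CNCF-coloring, every \emph{closed} neighborhood $N[v]$ must contain a uniquely colored vertex, so $v$ itself participates in its own constraint. First I would introduce an analogue of the ONCF-gadget---call it a CNCF-gadget---designed to play the same two roles: (i) forbid the ``all-red'' configuration on its interaction points $g_1,g_2,g_3$ by forcing a conflict at the connection vertex $g_{10}$, and (ii) permit an extension of any partial coloring in which at least one of $g_1,g_2,g_3$ is blue. I would then prove two lemmas mirroring Lemma~\ref{lem:ONCF-gadget:all-red} and Lemma~\ref{lem:ONCF-gadget:not-all-red}, establishing these forcing and extension properties for the closed-neighborhood semantics.

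With the gadget in hand, the overall construction would closely follow Lemma~\ref{lem:on2}: palette vertices to anchor the colors $\red$ and $\blue$, a variable gadget per variable $x_i$ (e.g.\ a path $u_i,v_i,w_i$ whose closed-neighborhood constraint forces $c(u_i)\neq c(w_i)$, thereby encoding a consistent truth value), a literal-selector vertex $s^b_j$ for each literal of each clause $C_j$, and a CNCF-gadget $G_j$ per clause whose three interaction vertices are wired through the selectors to the appropriate variable endpoints. The correspondence would be: a clause is satisfied iff at least one of its three gadget inputs receives $\blue$, which by the extension lemma allows the gadget to be colored, while an unsatisfied clause forces all three inputs $\red$, which by the forcing lemma creates an unavoidable monochromatic conflict at $g_{10}$ against the palette vertex $R$. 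The treewidth bound would again come from observing that deleting the $\Oh(n)$ variable vertices leaves a disjoint union of constant-treewidth gadgets, giving $\mathrm{tw}(G) = \Oh(n)$; combined with ETH-hardness of $3$-{\sc SAT} under a linear treewidth bound, this yields the desired $2^{o(t)}$ lower bound.

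The main obstacle I expect is designing the CNCF-gadget correctly, since closed neighborhoods are genuinely more delicate: each vertex now contributes its own color to its constraint, so a vertex that is uniquely colored in \emph{its own} closed neighborhood automatically satisfies that vertex's requirement regardless of its neighbors. This means the clean ``two blues force a red'' propagation used in Lemma~\ref{lem:ONCF-gadget:all-red} has to be re-engineered so that the forcing still occurs despite each vertex being able to self-satisfy. A likely shortcut, which I would check first, is whether the known transformation between ONCF and CNCF colorings (or a local modification such as attaching pendant vertices or twins to convert open-neighborhood constraints into closed ones) lets me reuse the existing ONCF-gadget and Lemma~\ref{lem:on2} almost verbatim, rather than building a fresh gadget and reproving both gadget lemmas from scratch. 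If such a transformation preserves treewidth up to an additive constant, the entire lemma would follow as a short corollary of Lemma~\ref{lem:on2}; otherwise I would fall back on the explicit gadget construction sketched above.
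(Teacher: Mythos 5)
Your proposal has a genuine gap: the entire argument rests on a ``CNCF-gadget'' with a forcing property analogous to Lemma~\ref{lem:ONCF-gadget:all-red} and an extension property analogous to Lemma~\ref{lem:ONCF-gadget:not-all-red}, but you never construct this gadget. You correctly identify why this is the hard part---in the closed-neighborhood setting a vertex that is uniquely colored within $N[v]$ can ``self-satisfy,'' so the propagation argument ``$g_4$ sees two \red{} vertices among $g_1,g_2$, hence $g_6$ is \blue'' breaks down: $g_4$ might instead be the unique vertex of its own color in $N[g_4]$, and no constraint on $g_6$ follows. Re-engineering the gadget so that every internal vertex is prevented from self-satisfying (typically by surrounding it with pairs of same-colored neighbors of both colors) is a nontrivial design problem, and the fallback you name---``the explicit gadget construction sketched above''---does not exist in your write-up. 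The other escape route you mention, a treewidth-preserving transformation from ONCF instances to CNCF instances, is also not established anywhere; the known transformations in the literature go from proper coloring to each of ONCF and CNCF separately, not between the two conflict-free variants. So as written the proof cannot be completed without substantial additional work.

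The paper's own proof avoids all of this: it simply observes that the existing NP-hardness reduction from $3$-SAT to $2$-\CNCF{} due to Gargano and Rescigno already produces a graph of treewidth $\Oh(m)$, where $m$ is the number of clauses, so the $2^{o(t)}$ lower bound follows immediately from the ETH lower bound for $3$-SAT with no new gadgets. (The reason the paper \emph{did} need a fresh reduction in the ONCF case, Lemma~\ref{lem:on2}, is that the known reduction to $2$-\ONCF{} had a quadratic blow-up; no such problem arises for CNCF.) The lesson is that before designing new machinery you should check whether the parameter is already linearly bounded in the known hardness reduction---here it is, and the lemma is a two-line corollary.
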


\begin{proof}
In~\cite{GarganoRR15}, a reduction of $2$-\CNCF{} was given from $3$-{\sc SAT}. In this reduction a $3$-{\sc SAT} formula $\phi$ on $n$ variables and $m$ clauses is reduced to a graph $G$ with treewidth at most $20m$. It was shown that $\phi$ is satisfiable if and only if $G$ has a $2$-CNCF-coloring. Since $3$-{\sc SAT} cannot be solved in $2^{o(m)}n^{\Oh(1)}$ time under ETH, this also implies that $2$-\CNCF{} parameterized by treewidth $t$ cannot be solved in $2^{o(t)}n^{\Oh(1)}$ time, under ETH.
\qed\end{proof}

\begin{lemma}\label{lem:cn3}
For $q \geq 3$, $q$-\CNCF{} parameterized by treewidth $t$ cannot be solved in $(q - \epsilon)^tn^{\Oh(1)}$ time, under SETH.
\end{lemma}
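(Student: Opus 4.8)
The plan is to mirror the strategy of Lemma~\ref{lem:on3} and reduce from $q$-{\sc Coloring}, which by~\cite{SETH-tight} cannot be solved in $(q-\epsilon)^{t}n^{\Oh(1)}$ time under SETH for constant $q \geq 3$. Given a graph $G$, I would build a graph $H$ together with a tree decomposition of width $tw(G)+\Oh(1)$ such that $H$ admits a $q$-\CNCF-coloring if and only if $G$ admits a proper $q$-coloring. The additive treewidth bound is the crucial point, since CNCF-colorings are in general much weaker than proper colorings and I cannot simply reuse $G$ itself. Indeed, if $q$-\CNCF could be solved in $(q-\epsilon)^{tw(H)}n^{\Oh(1)}$ time, then from $tw(H)\le tw(G)+\Oh(1)$ the constant factor $(q-\epsilon)^{\Oh(1)}$ is absorbed into the polynomial part, yielding a $(q-\epsilon)^{tw(G)}n^{\Oh(1)}$ algorithm for $q$-{\sc Coloring} and contradicting SETH. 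This is exactly the slack allowed here, whereas the $\Oh(m)$ treewidth of the CNCF reduction reused in Lemma~\ref{lem:cn2} would be far too lossy to preserve the base $q$.

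The seed idea for the construction comes from examining a single subdivided edge. If $x$ subdivides an edge $\{u,v\}$, then $N[x]=\{x,u,v\}$, and among three vertices a uniquely colored vertex fails to exist \emph{only} when all three receive the same color. Hence the CNCF constraint at $x$ is violated exactly when $c(u)=c(v)=c(x)$. Consequently, if I attach a small gadget that pins $c(x)$ to the color of one endpoint (or, more robustly, to a palette color coupled to an endpoint), the surviving constraint at $x$ becomes precisely $c(u)\neq c(v)$, i.e.\ properness of that edge. I would therefore subdivide every edge of $G$ and attach to each subdivision vertex a constant-size gadget, together with a constant-size global \emph{palette} of vertices whose colors are essentially fixed, so that any $q$-\CNCF-coloring of $H$ is forced to be proper on $V(G)$, while conversely any proper $q$-coloring of $G$ extends to all gadget and palette vertices.

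For the treewidth bound I would argue as in Lemma~\ref{lem:on2}: each edge-gadget interfaces only with the two endpoints of its edge and with the $\Oh(1)$ palette vertices. Since the endpoints of an edge of $G$ occur together in some bag of any tree decomposition of $G$, I can hang each constant-size gadget in a fresh leaf bag attached there, and add the palette vertices to every bag. This raises the width by only $\Oh(1)$, and since $q$ is a constant the palette has constant size, so $tw(H)\le tw(G)+\Oh(1)$ as required.

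The main obstacle is the gadget itself. Unlike properness, the CNCF condition is disjunctive—it only demands that \emph{some} vertex of each closed neighborhood be uniquely colored—so rigidly coupling $c(u)$ to $c(v)$, and pinning the palette colors, cannot be done by a single local constraint and requires a gadget whose forcing behavior must be verified by a case analysis valid uniformly for every constant $q\ge 3$ (the ETH gadget of Definition~\ref{def:oncf-gadget} handled only $q=2$). The delicate part is establishing both directions simultaneously while keeping the gadget size and interface constant: that no $q$-\CNCF-coloring of $H$ can be improper on $V(G)$ (soundness of the forcing), and that every proper $q$-coloring of $G$ genuinely extends to a full $q$-\CNCF-coloring of $H$ (completeness), including satisfying the CNCF constraints internal to the gadgets and palette.
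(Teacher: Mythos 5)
Your overall strategy is the right one and matches the paper's at the top level: reduce from $q$-{\sc Coloring} using the SETH bound of~\cite{SETH-tight}, and ensure the constructed graph has treewidth at most $tw(G)+\Oh(1)$ (the paper gets $\max\{{\sf tw}(G),q\}$, which is equivalent for constant $q$) so that the base $(q-\epsilon)$ is preserved. However, there is a genuine gap: the entire correctness of the reduction rests on a gadget that you never construct, and you explicitly flag its existence and verification as ``the main obstacle.'' A plan that defers the central gadget is not a proof. Moreover, the specific sketch you give (subdivide every edge and pin the color of each subdivision vertex) has unaddressed complications beyond the forcing direction you mention: once you attach a gadget to the subdivision vertex $x$, the closed neighborhood $N[x]$ is no longer $\{x,u,v\}$, so your clean observation that the CNCF constraint at $x$ reduces to $c(u)\neq c(v)$ no longer applies verbatim; and you must also guarantee that the closed neighborhoods of the \emph{original} vertices of $G$ --- which in the subdivided graph consist of $v$ together with all subdivision vertices of incident edges --- are CNCF-colored under an arbitrary proper coloring, which requires additional per-vertex gadgetry you do not describe.

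The paper avoids building any new gadget by citing Theorem~3.1 of~\cite{AbelADFGHKS2017}: the known construction attaches two copies of the recursively defined graph $G_q$ to every vertex and two copies of $G_{q-1}$ to every edge of $G$ (no subdivision), and the equivalence ``$G$ is $q$-colorable iff $G'$ is $q$-CNCF-colorable'' is already established there via the property $\chi^*_{\sf CN}(G_k)=k$ (Lemma~\ref{lem:palette:gK}). The only new work the paper must supply is the treewidth analysis, done in Claims~\ref{claim:G_k-tw} and~\ref{claim:G'-tw} by induction on $k$ and by locally augmenting a tree decomposition of $G$ with decompositions of the attached gadgets. To repair your argument with minimal effort, you should either invoke that existing construction and prove the treewidth bound for it, or actually exhibit and verify your pinning gadget for all constant $q\geq 3$, including the completeness direction for the original vertices.
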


\begin{proof}
 It was shown in~\cite{SETH-tight} that for a constant $q \geq 3$, $q$-{\sc Coloring} cannot be solved in $(q - \epsilon)^t n^{\Oh(1)}$ time, under SETH. For a graph $G$, Theorem 3.1 of~\cite{AbelADFGHKS2017} constructs a graph $G'$ such that $G$ has a $q$-coloring if and only if $G'$ has a $q$-CNCF-coloring. 
  The construction of $G'$ requires the graphs $G_k$ as described in Section~\ref{subsec:cncf-col}, and first constructed in~\cite{AbelADFGHKS2017}. Recall that the $G_k$ is defined recursively as in Definition~\ref{def:G_k}.

 Returning to the construction of $G'$, we obtain $G'$ from $G$ in the following manner: (i) for each vertex $v \in V(G)$ we add two copies $G_q^{1v}$ and $G_q^{2v}$ of $G_q$ and make $v$ adjacent to all vertices of $G_q^{1v}$ and $G_q^{2v}$, (ii) for each edge $e=\{u,v\}\in E(G)$ we add two copies $G_{q-1}^{1e}$ and $G_{q-1}^{2e}$ of $G_{q-1}$ and make the vertices $u$ and $v$ adjacent to all vertices of $G_{q-1}^{1e}$ and $G_{q-1}^{2e}$. This completes the construction of $G'$. For the completion of our proof it remains to show that ${\sf tw}(G') \leq {\sf tw}(G)$ in order to obtain a lower bound of $(q - \epsilon)^tn^{\Oh(1)}$ on the running time of any algorithm under SETH.

 \begin{myclaim}\label{claim:G_k-tw}
 For a graph $G_k$, ${\sf tw}(G_k) \leq k-1$.
 \end{myclaim}
 \begin{claimproof}
 We prove our statement by induction on $k$. In the base case, it is true that ${\sf tw}(G_1) =0$ and ${\sf tw}(G_2)=1$. Let the induction hypothesis be that for any $k' < k$, ${\sf tw}(G_{k'}) \leq k'-1$. We prove the statement for $G_k$. By construction, $G_k$ contains a clique $C$ on $k$ vertices. We create a bag $X$ with all the vertices of $C$. By induction hypothesis, for each copy of $G_{k-1}$ we have a tree decomposition $\mathcal{T}_{k-1}$ of width $k-2$. Similarly, let $\mathcal{T}_{k-2}$ be a tree decomposition of $G_{k-2}$ with width $k-3$. Note that by construction, each copy of $G_{k-1}$ only has edges with a single vertex, say $v$ from the clique $C$. To each bag of the corresponding tree decomposition, we add the vertex $v$, thereby making the treewidth of the tree decomposition at most $k-1$. We pick an arbitrary bag of the tree decomposition and attach it to the bag $X$ containing the vertices of $C$. Similarly, each copy of $G_{k-2}$ only has edges with the end points of a single edge, say $\{u,v\}$ from the clique $C$. To each bag of the corresponding tree decomposition, we add the vertices $u,v$, thereby making the treewidth of the tree decomposition at most $k-1$. We pick an arbitrary bag of the tree decomposition and attach it to the bag $X$. The resulting tree decomposition has width at most $k-1$. Thus, ${\sf tw}(G_k) \leq k-1$.
\qed \end{claimproof}

 This helps us to show the desired treewidth bound for $G'$.

 \begin{myclaim}\label{claim:G'-tw}
 For a graph $G'$, ${\sf tw}(G') \leq \max\{{\sf tw}(G),q\}$.
 \end{myclaim}

 \begin{claimproof}
The construction of a desired tree decomposition is similar to the construction given in the previous Claim. Let $\mathcal{T}$ be a tree decomposition of $G$. By construction, each copy of $G_{q-1}$, that is added to $G$ to form $G'$, is attached to a single vertex in $V(G)$, say $v$. From the previous Claim, we have a tree decomposition $\mathcal{T}_{q-1}$ of width $q-2$ for this copy of $G_{q-1}$.  To each bag of $\mathcal{T}_{q-1}$, we add the vertex $v$, thereby increasing the treewidth to at most $k-1$. We pick an arbitrary bag of $\mathcal{T}$ that contains $v$ and an arbitrary bag of $\mathcal{T}_{q-1}$ and attach them together. Similarly, each copy of $G_{q-2}$, that is added to $G$ to form $G'$, is attached to the end points of a single edge in $E(G)$, say $\{u,v\}$. From the previous Claim, we have a tree decomposition $\mathcal{T}_{q-2}$ of width $q-3$ for this copy of $G_{q-2}$.  To each bag of $\mathcal{T}_{q-2}$, we add the vertices $u,v$, thereby increasing the treewidth to at most $k-1$. We pick an arbitrary bag of $\mathcal{T}$ that contains the edge $\{u,v\}$ and an arbitrary bag of $\mathcal{T}_{q-2}$ and attach them together. Note that the resulting tree decomposition is a tree decomposition of $G'$ and has width at most $\max\{{\sf tw}(G),q\}$. Thus, we are done. \qed\end{claimproof}

 Thus, ${\sf tw}(G') \leq \max\{{\sf tw}(G),q\} \leq {\sf tw}(G)$ since $q$ is a constant. Thus, for a constant $q \geq 3$, the lower bound of $(q - \epsilon)^tn^{\Oh(1)}$ on the running time of any algorithm under SETH follows.
\qed\end{proof}

Thus, using Lemmas~\ref{lem:on2}, \ref{lem:on3},  \ref{lem:cn3} and \ref{lem:cn2} we complete the proof of Theorem~\ref{thm:alg-lb}.

 \section{Kernelization}\label{sec:kernel-lb}
In this section, we will study the kernelizability of the ONCF- and CNCF-coloring problems, when parameterized by the size of a vertex cover. We prove the following two theorems to obtain a dichotomy on the kernelization question. 

\begin{theorem}\label{thm:noker}
$q$-\ONCF for $q\geq 2$ and $q$-\CNCF for $q \geq 3$, parameterized by vertex cover size do not have polynomial kernels, unless \containment.
\end{theorem}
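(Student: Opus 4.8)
\textbf{Proof plan for Theorem~\ref{thm:noker}.}
The plan is to rule out polynomial kernels by exhibiting \textsc{or}-cross-compositions into each of the listed problems, and then invoke Theorem~\ref{thm:cross-composition-implies-LB}. Since the parameter here is the vertex cover size, the central challenge is to combine $t$ input instances of some NP-hard problem into a single graph whose vertex cover is bounded polynomially in $\max_i |x_i| + \log t$, rather than growing linearly with $t$. The natural source problem is the NP-hardness of $2$-\ONCF-coloring (respectively $2$-\CNCF-coloring) established in~\cite{GarganoRR15}, or directly a suitable SAT-like problem; I would first fix a polynomial equivalence relation $\mathcal{R}$ that groups instances with the same number of vertices and edges (and, where relevant, the same number of variables and clauses), so that within one class the instances are structurally uniform.

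First I would handle $2$-\ONCF. The idea is to build one shared ``selector'' structure of size $\Oh(\log t)$ that, depending on its coloring, activates exactly one of the $t$ embedded instances, while the remaining instances are forced into a trivially-satisfiable state. Concretely, I would reuse the ONCF-gadgets from Definition~\ref{def:oncf-gadget} together with the palette vertices from the reduction in Lemma~\ref{lem:on2}, sharing a single small ``core'' (the palette and a logarithmic-size index gadget) across all instances. The vertices whose color must be individually chosen per instance are placed so that they all attach to this shared core, keeping the vertex cover at $\Oh(\max_i |x_i| + \log t)$; the many degree-bounded gadget vertices and clause vertices sit outside the cover. The key step is arguing that a valid ONCF-coloring exists if and only if the selector picks an index $i$ and the $i$-th instance is satisfiable, using Lemma~\ref{lem:ONCF-gadget:all-red} to propagate the ``all-red forces failure'' behaviour for the unsatisfied selected instance and Lemma~\ref{lem:ONCF-gadget:not-all-red} to free up the others. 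Extending to general $q \geq 2$ is done by padding with an edge-star/clique structure that consumes the extra $q-2$ colors without interfering, so that the $2$-color argument carries through.

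For $2$-\CNCF with $q \geq 3$, I would run an analogous cross-composition but build on the CNCF-hardness reduction and the recursively-defined graphs $G_k$ referenced in Lemma~\ref{lem:cn3}. Here the CNCF-specific gadgetry must enforce the same ``exactly one activated instance'' semantics under closed neighborhoods; again the per-instance interaction points all hang off a shared logarithmic selector and a constant-size palette, so the vertex cover stays polynomially bounded. The correctness proof splits into the forward direction (given a satisfiable instance $x_i$, color the selector to point at $i$, CNCF-color $x_i$'s gadget using its satisfying assignment, and put every other instance's gadget into its ``free'' mode) and the reverse direction (any CNCF-coloring fixes a selected index $i$ and forces a satisfying assignment of $x_i$, because an unsatisfied selected instance propagates a conflict through the gadget into the shared core).

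The main obstacle I anticipate is the vertex-cover budget: a naive disjoint union of $t$ reductions has vertex cover $\Theta(t \cdot \max_i|x_i|)$, which is far too large. Overcoming this requires that all $t$ instances genuinely \emph{share} the cover-heavy part of the construction, with only a $\Oh(\log t)$ selector distinguishing them, and that the selector can ``switch off'' unselected instances so they neither block a global coloring nor leak a conflict into the shared core. Verifying both implications of the biconditional while maintaining this sharing, and checking that the unselected instances are indeed rendered trivially colorable, is where the real work lies; the treewidth-style \textsc{and}-composition mentioned in the introduction does not apply because the parameter is vertex cover, forcing this more delicate \textsc{or}-construction.
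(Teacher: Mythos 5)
Your high-level framework is the right one: an \textsc{or}-cross-composition invoked through Theorem~\ref{thm:cross-composition-implies-LB} for the base cases, a polynomial parameter transformation (padding with a palette structure that absorbs the extra $q-2$ colors) to lift $2$-\ONCF to $q$-\ONCF, and a $G_k$-based palette gadget for the CNCF side. All of that matches the paper. However, the central mechanism of your plan --- a shared $\Oh(\log t)$ ``selector'' that activates one of $t$ \emph{embedded} instances whose ``cover-heavy parts'' are somehow shared --- is not the paper's mechanism, and as described it has a genuine gap. If you compose from a SAT-like problem, the $t$ instances have $t$ \emph{different} clause sets; their clause gadgets cannot be shared, so they must live in the independent part of the graph, and each such gadget must be wired both to the shared variable vertices and to the selector so that it can be switched off when its instance is not chosen. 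You give no construction for this switching, and for conflict-free coloring (where the constraint is ``exactly one uniquely colored neighbor'') attaching every clause vertex to $\Theta(\log t)$ selector bits interferes directly with the uniqueness condition. Asserting that ``the clause vertices sit outside the cover'' does not by itself resolve this.

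The paper sidesteps the problem by cross-composing from \textsc{Clique} rather than SAT. Within an equivalence class all instances have the same $n$ and $k$, so a \emph{single} shared verifier of size $\Oh(n^2k^2)$ (the $P,H,V,S$ vertices and the ONCF-gadgets) suffices: it guesses $k$ vertices and checks all pairs. The only per-instance data --- which pairs are non-edges in instance $X_\ell$ --- is encoded purely in the adjacency between the shared vertices $s^{i,i'}_{j,j'}$ (inside the cover) and a set $Y$ of $3t$ instance-selector vertices $\{y_\ell,y'_\ell,y''_\ell\}$. Crucially, $Y$ is an independent set and hence contributes nothing to the vertex cover, so the selector can have size linear in $t$ rather than logarithmic; the ``exactly one $y_\ell$ is blue'' mechanism (enforced by the vertex $a$ and its pendant structure) then gives the \textsc{or} semantics. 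This is the missing idea in your proposal: the budget is saved not by a logarithmic selector but by making the entire instance-dependent part an independent set hanging off a uniform, instance-independent verifier. Without it (or an equivalently concrete encoding), your plan does not yield the required bound of $\text{poly}(\max_i|x_i|+\log t)$ on the vertex cover.
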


Sections \ref{secappen:2-oncf-mp} and \ref{subsec:cncf-col} together give a full proof of  Theorem~\ref{thm:noker}.

\begin{theorem}\label{thm:genker}
$2$-\CNCF parameterized by vertex cover size $k$ has a generalized kernel  of size $\Oh(k^{10})$.
\end{theorem}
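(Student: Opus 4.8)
The plan is to exploit the vertex-cover structure: fix a vertex cover $S$ with $|S|=k$, so that $I := V(G)\setminus S$ is independent and every $v\in I$ has $N(v)\subseteq S$. The first step is to reformulate the $2$-CNCF condition in counting terms: under a $2$-coloring, a closed neighborhood $N[x]$ is conflict-free exactly when one of the two colors occurs \emph{exactly once} in it. I would then group the vertices of $I$ by their \emph{type} $T := N(v)\subseteq S$, and aim to keep only polynomially many vertices of $I$ (bounding both the multiplicity within a type and the number of relevant types), outputting as the compressed instance the subgraph induced on $S$ together with the retained vertices of $I$, suitably annotated.

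The technical engine is a local analysis of a single $v\in I$ with $N(v)=T$. If a coloring places $r$ reds and $b$ blues on $T$ (so $r+b=|T|$), then $v$ may be colored red iff $r=0$ or $b=1$, and blue iff $b=0$ or $r=1$. Two consequences drive everything. First, whether $v$ is \emph{forced} to a specific color, \emph{free}, or \emph{uncolorable} depends only on the truncated counts $\min(r,2)$ and $\min(b,2)$; in particular $v$ is uncolorable precisely when $T$ receives at least two reds \emph{and} at least two blues, so each present type forces its set $T$ to be ``almost monochromatic'' (at most one vertex of some color) in any valid coloring. Second, the only way $v$'s color influences an $S$-vertex $s\in T$ is through the red/blue counts inside $N[s]$, and these again matter only up to the thresholds $0$, $1$, $\geq 2$.

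From here I would apply a reduction/marking scheme. Since the self-constraint of $v$ depends only on $c|_T$, all vertices of a common type behave identically, and their aggregate effect on each $s\in T$ saturates at count $2$; hence a reduction rule retains only a constant number of $I$-vertices per type (enough to witness the options ``contribute $0$, $1$, or $\geq 2$ of a color'' and both choices for free vertices). This caps multiplicities, after which the remaining task is to cap the number of types, of which there are a priori $2^k$. I would process types by $|T|$: delete isolated vertices; handle degree-$1$ and degree-$2$ vertices (which are always colorable and merely serve to \emph{force} or to \emph{supply a unique color} at their $\Oh(1)$ neighbors) by marking a bounded number of representatives for each small tuple of $S$-vertices; and argue that vertices with large $T$ are so rigid (each forces $T$ almost monochromatic and pins $v$'s own color) that only polynomially many types need be kept, the rest being redundant or replaceable by a constant-size annotated constraint. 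Summing the marked representatives over all roles, indexed by constantly many vertices of the $\Oh(k)$-element set $S$ with small multiplicities, yields $\Oh(k^{10})$ retained vertices. Correctness is the usual two-way argument: any $2$-CNCF-coloring of $G$ restricts to one of the reduced instance, and any coloring of the reduced instance extends back by coloring deleted vertices via the forcing rules, which remain consistent because the relevant thresholds are preserved.

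The main obstacle I expect is exactly the type-count bound. A priori exponentially many neighborhood types can occur, and deleting a large-neighborhood vertex is \emph{not} obviously safe, since such a vertex simultaneously imposes an almost-monochromatic constraint on $S$ and contributes to several $S$-vertices' counts. Showing that only polynomially many types must be preserved to capture all these interactions—and that a bounded representative set per role suffices—is where the real work (and the polynomial exponent) lies; I anticipate needing a careful case distinction on $|T|$ combined with a sunflower- or counting-style argument to bound the retained large types, while keeping the preserved constraints encodable in $\Oh(k^{10})$ space.
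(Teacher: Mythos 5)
Your proposal does not reach a proof: the step you yourself flag as ``where the real work lies''---bounding the number of neighborhood types that must be retained---is precisely the entire content of the theorem, and it is left unresolved. The difficulty is real: a vertex $v$ in the independent side with $N(v)=T$ imposes the constraint that some color occurs exactly once in $T\cup\{v\}$, and for large $|T|$ these ``almost monochromatic'' constraints are genuinely distinct for distinct $T$; there can be $2^{\Omega(k)}$ of them, and deleting a vertex whose type is not represented can turn a no-instance into a yes-instance. It is not apparent that a sunflower- or counting-style argument closes this gap (a common core among types $T_1,T_2,T_3$ does not make the constraint on $T_3$ follow from those on $T_1,T_2$), so the combinatorial marking route stalls exactly where you predicted.

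The paper's proof avoids this obstacle by an algebraic detour rather than a combinatorial one. After a marking step that is essentially your multiplicity cap for types of size at most two (keep $3$ representatives per $X\subseteq S$, $|X|\le 2$, and delete unmarked vertices of degree at most two), the $2$-CNCF condition is encoded as a system of degree-$\le 2$ polynomial equations over $\mathbb{Q}$ in variables $r_v,b_v$: one equation $r_v+b_v-1=0$ per vertex, one equation $(-1+\sum_{u\in N[v]}r_u)(-1+\sum_{u\in N[v]}b_u)=0$ per closed neighborhood, plus auxiliary constraints forcing $\sum_{u\in N(v)}r_u\in\{0,1,d_v-1,d_v\}$ for high-degree independent vertices. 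The variables of all vertices outside $S\cup S'$ (where $|S\cup S'|=\Oh(k^2)$) are then \emph{eliminated} by substituting an interpolating cubic $g\bigl(\sum_{u\in N(v)}r_u,|N(v)|\bigr)$ that recovers $v$'s color from its neighborhood count---valid exactly because of those auxiliary constraints. This leaves polynomials of degree at most $6$ over $\Oh(k^2)$ variables, and the known sparsification for \drootCSP (only $n^d+1$ constraints can be linearly independent) cuts the constraint list to $\Oh(k^6)$, giving an $\Oh(k^{10})$-bit \emph{generalized} kernel, i.e., a compression into a CSP rather than a reduced graph. The linear-algebraic dimension bound is what replaces your missing type-count argument; without it, or some substitute of comparable power, your outline cannot be completed as written.
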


We prove the above theorem in Section \ref{secappen:poly-kernel-2CNCF}. Note that by using an NP-completeness reduction, this results in a polynomial kernel for $2$-\CNCF parameterized by vertex cover size. We also obtain an $\Oh(k^2\log k)$ kernel for an extension problem of $2$-\CNCF and this is described in Section \ref{sec:extension-main}.

\subsection{Kernel lower bounds for $q$-\ONCF}\label{secappen:2-oncf-mp}
In this part, we begin the proof of Theorem~\ref{thm:noker} by showing that $q$-\ONCF parameterized by vertex cover size has no polynomial kernel when $q$ is at least $2$. We first show the relevant bound for $q=2$ and then use a polynomial parameter transformation to obtain the general lower bound.

For the construction in the following proof, we will again use the ONCF-gadget that was introduced in Definition~\ref{def:oncf-gadget} (and shown in Figure \ref{fig:ONCF-gadget}). Recall the relevant properties of this gadget that were given in Lemmas~\ref{lem:ONCF-gadget:all-red} and~\ref{lem:ONCF-gadget:not-all-red}.

\begin{lemma}\label{lem:ONCF-kernel-LB}
$2$-\ONCF parameterized by vertex cover size does not have a polynomial kernel, unless \containment.
\end{lemma}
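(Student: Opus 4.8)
The goal is to show that $2$-\ONCF parameterized by vertex cover size admits no polynomial kernel unless $\mathsf{NP \subseteq coNP/poly}$. Since Theorem~\ref{thm:cross-composition-implies-LB} turns an \textsc{or}-cross-composition into exactly such a lower bound, my plan is to design an \textsc{or}-cross-composition from an NP-hard source language into $2$-\ONCF, where the parameter (the vertex cover size of the output graph) is polynomially bounded in $\max_i |x_i| + \log t$. The natural source language is $3$-\textsc{SAT}, since Lemma~\ref{lem:on2} already gives us a working reduction from $3$-\textsc{SAT} to $2$-\ONCF together with the ONCF-gadget machinery of Lemmas~\ref{lem:ONCF-gadget:all-red} and~\ref{lem:ONCF-gadget:not-all-red}.

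First I would fix a polynomial equivalence relation $\mathcal{R}$ that places two $3$-\textsc{SAT} formulas in the same class iff they have the same number of variables $n$ and the same number of clauses $m$; this partitions any finite set into polynomially many classes, as required by Definition~\ref{def:eqvr}. So I may assume the $t$ input instances $\phi_1,\dots,\phi_t$ all share the variable set $\{x_1,\dots,x_n\}$ and have $m$ clauses each. The key design principle is that the variable-assignment part of the construction (the $u_i,v_i,w_i$ vertices encoding a truth assignment) should be \emph{shared} across all instances so that it contributes only $\Oh(n)$ to the vertex cover, while the clause-checking ONCF-gadgets for the different instances must be combined so that the graph is $2$-ONCF-colorable iff \emph{at least one} $\phi_j$ is satisfied. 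The crucial point is that a single shared truth assignment on the $u_i,w_i$ vertices must simultaneously be readable by the gadgets of every instance, and the output is a yes-instance precisely when that assignment satisfies some instance.

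The main obstacle, and the reason this is an \textsc{or}-composition rather than a trivial disjoint union, is arranging the ``at least one'' semantics: in a disjoint union of the per-instance graphs from Lemma~\ref{lem:on2} the coloring constraints would force \emph{every} instance to be satisfied, giving an \textsc{and} rather than an \textsc{or}. I would handle this with a selection gadget that ``switches off'' the satisfaction requirement for all but one chosen instance — for example by routing each instance's failure-detecting vertex (the analogue of the $g_9$--$g_{10}$--$R$ conflict at the end of Lemma~\ref{lem:on2}'s proof, which creates a monochromatic forbidden pattern exactly when a clause is unsatisfied) through auxiliary palette-type vertices that can absorb the conflict for instances other than the selected one. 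The selector needs $\Oh(\log t)$ bits to encode which instance is active, and these selector vertices should be added to the vertex cover; combined with the $\Oh(n)$ shared assignment vertices and the $\Oh(1)$-vertex-cover cost of each ONCF-gadget (each gadget has constant size, and its internal vertices attach only to a bounded interface), the total vertex cover size must be bounded by a polynomial in $n + m + \log t$, hence polynomial in $\max_j|\phi_j| + \log t$ as Definition~\ref{def:cross-composition} demands.

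The correctness argument then splits into the two standard directions. For the forward direction, if some $\phi_j$ is satisfiable I set the selector to activate instance $j$, color the shared assignment vertices according to a satisfying assignment of $\phi_j$ exactly as in the proof of Lemma~\ref{lem:on2}, extend each active gadget using Lemma~\ref{lem:ONCF-gadget:not-all-red}, and verify that the deactivated gadgets can be colored freely because the selector has released their conflict constraint. For the reverse direction, from any $2$-ONCF-coloring I read off which instance the selector designates as active and argue, via Lemma~\ref{lem:ONCF-gadget:all-red} and the forbidden monochromatic configuration at the interface vertices, that the active instance must be satisfied by the assignment encoded on the $u_i,w_i$ vertices. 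The delicate part to get right is verifying that the selector gadget genuinely forces exactly one instance to carry the full constraint while being colorable for the rest, and that it itself does not introduce spurious ONCF-violations; once that is established, Theorem~\ref{thm:cross-composition-implies-LB} yields the claimed kernelization lower bound.
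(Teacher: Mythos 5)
Your high-level plan (an \textsc{or}-cross-composition into $2$-\ONCF, justified via Theorem~\ref{thm:cross-composition-implies-LB}) matches the paper's strategy, but the concrete design has a gap exactly where the real difficulty of the lemma lies: the parameter bound. You build one ONCF-gadget per clause per instance, i.e.\ $tm$ gadgets in total, and each ONCF-gadget contains internal edges (e.g.\ $g_4$ is adjacent to $g_6$, $g_8$ to $g_9$), so every gadget forces at least a constant number of its vertices into \emph{any} vertex cover. Hence your construction has vertex cover size $\Omega(tm)$, which is linear in $t$ and therefore not bounded by $\mathrm{poly}(\max_j|\phi_j| + \log t)$ as Definition~\ref{def:cross-composition} requires. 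Your sentence asserting that the ``$\Oh(1)$-vertex-cover cost of each ONCF-gadget'' yields a total bound polynomial in $n+m+\log t$ silently drops the factor $t$. The whole art of such a composition is to confine the $t$-dependence to an independent set (which contributes nothing to the vertex cover); the paper does this by composing from \textsc{Clique} rather than $3$-\textsc{SAT}: it creates only $\Oh(k^2n^2)$ gadgets and check-vertices $s^{i,i'}_{j,j'}$ (independent of $t$), encodes each instance $X_\ell$ solely through the adjacency between $s^{i,i'}_{j,j'}$ and a triple $\{y_\ell,y'_\ell,y''_\ell\}$ inside an independent set $Y$ of size $3t$, and takes $V(G)\setminus Y$ as the vertex cover of size $\Oh(n^2k^2)$.

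A secondary gap is the selector. You claim ``$\Oh(\log t)$ bits'' suffice and that deactivated gadgets can ``absorb the conflict,'' but you give no construction, and it is not clear how a $2$-coloring gadget releases the $g_9$--$g_{10}$--$R$ conflict for $t-1$ instances while enforcing it for one: $N(g_{10})=\{g_9,R\}$ has no slack, so any release mechanism needs an extra per-instance neighbor whose color is controlled by the selector, which again threatens the vertex cover bound. The paper's selector is unary: a vertex $a$ whose neighborhood is $Y$ plus two forced-red pendants, so that exactly one $y\in Y$ is blue; the ``off'' instances then contribute only red $y$-vertices to the neighborhoods of the $s$-vertices and never trigger a violation. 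If you want to salvage a $3$-\textsc{SAT}-based composition you would have to share clause gadgets across instances (e.g.\ one gadget per \emph{possible} clause, of which there are $\Oh(n^3)$) and encode clause membership through edges into the independent selector set --- at which point you have essentially rediscovered the paper's architecture.
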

\begin{proof}
We show this by giving an \textsc{or}-cross-composition (see Definition \ref{def:cross-composition}) from \textsc{Clique} to $2$-\ONCF parameterized by vertex cover size. Note that \textsc{Clique} is an NP-hard problem~\cite{GareyJ90}. Therefore from Theorem~\ref{thm:cross-composition-implies-LB}, an \textsc{or}-cross-composition from \textsc{Clique} to $2$-\ONCF parameterized by vertex cover size implies that the latter does not have a polynomial kernel unless \containment.

We proceed with the description of the OR-cross-composition. For the sake of brevity, in this proof, we use $2$-ONCF-coloring and ONCF-coloring interchangeably. We define a polynomial equivalence relation $\mathcal{R}$ (see Definition \ref{def:eqvr}) as follows. Let 2 instances of \textsc{Clique} be equivalent under $\mathcal{R}$ if the graphs have the same number of vertices and they ask for a clique of the same size. It is easy to verify that $\mathcal{R}$ is a polynomial equivalence relation.  Suppose we are given $t$ instances of clique that are equivalent under $\mathcal{R}$, label them as $X_1,\ldots,X_t$. Let every instance have $n$ vertices and ask for a clique of size $k$, enumerate the vertices in each instance arbitrarily. We create an instance $G$ for $2$-ONCF-coloring by the following steps (see Figure \ref{fig:cross-composition-ONCF} for a sketch of $G$).
\begin{figure}[t]
\centering
\includegraphics{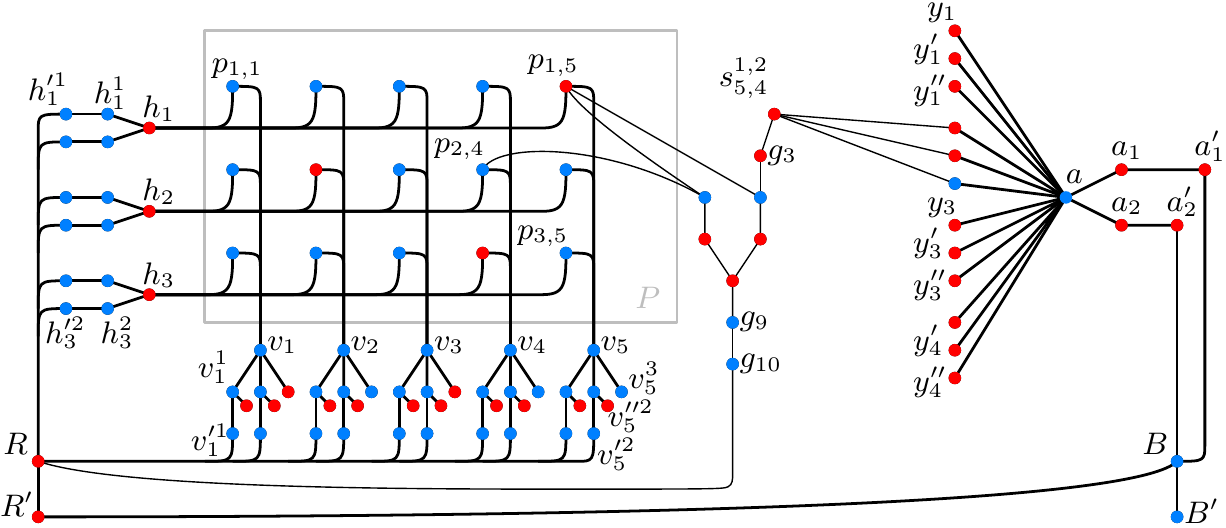}
\caption{A sketch of the constructed graph $G$  for $k = 3$, $n = 5$ and $t = 4$, assuming edge $\{5,4\}$ is missing in instance $X_2$ and present in all other instances. All vertices created in Steps \ref{step:ONCF:create-S} and \ref{step:ONCF:create-gadgets} of the construction are omitted for simplicity, except for vertex $s^{1,2}_{5,4}$ and gadget $G^{1,2}_{5,4}$.}
\label{fig:cross-composition-ONCF}
\end{figure}
\begin{enumerate}
\item \label{step:ONCF:create-RB} Create vertices $R$, $R'$, $B$, and $B'$. Connect $R$ to $R'$, $R'$ to $B$ and $B$ to $B'$. This ensures that $R$ and $B$ receive distinct colors in any coloring. (We will without loss of generality assume that $R$ receives the color \red, and $B$ receives the color \blue). Thus, the vertices $R$ and $B$ can be thought of as the palette for any $2$-ONCF-coloring for $G$.

\item \label{step:ONCF:create-Y} Create $3t$ vertices $\{y_\ell,y_\ell',y_\ell''\mid \ell \in [t]\}$ and let $Y$ be the set containing these vertices. These vertices will be used to ``select'' which instance has a clique of size $k$.
\item \label{step:ONCF:create-A} Add a vertex $a$ and connect $a$ to all vertices in $Y$. Add vertices $a_1$, $a_1'$, $a_2$, and $a_2'$ and edges $\{a,a_1\}$, $\{a,a_2\}$, $\{a_1.a_1'\}$, and $\{a_2,a_2'\}$. Finally, connect $a_1'$ to $B$ and connect $a_2'$ to $B$. This ensures that vertices $a_1$ and $a_2$ are red in any valid ONCF-coloring. Thereby, $a$ must have exactly one blue neighbor, implying exactly one vertex in $Y$ is blue. The vertex that is colored blue in $Y$ will then correspond to the input instance that has a clique of size $k$.
\item \label{step:ONCF:create-P} Add vertices $p_{i,j}$ for $i \in [k]$, $j \in [n]$. Let $P$ be the set consisting of these vertices. These vertices will be used to select the vertices that correspond to a clique in one of the input instances.
\item \label{step:ONCF:create-H} Add a vertex $h_i$ for all $i\in[k]$. Connect $h_i$ to $p_{i,j}$ for all $j\in[n]$. For each vertex $h_i$, add vertices $h^1_i,h'^1_i,h^2_i$, and $h'^2_i$. Connect $h$ to $h^1_i$ and $h^2_i$. Connect $h^1_i$ to $h'^1_i$, connect $h^2_i$ to $h'^2_i$. Connect $h'^1_i$ and $h'^2_i$ to $R$, in order to ensure that $h^1_i$ and $h^2_i$ will both be colored \blue in any ONCF-coloring. Let $H$ be the set of all vertices created in this step.
\item \label{step:ONCF:create-V} Add a vertex $v_j$ for $j \in [n]$. Connect $v_j$ to $p_{i,j}$ for all $i\in[k]$. Create vertex $v_j^3$ and connect it to $v_j$. Furthermore, create vertices $v_j^x$, $v'^x_j$, and $v''^x_j$ for all $x\in[2]$ and connect $v_j^x$ to $v_j$. Furthermore, connect $v''^x_j$ to $v^x_j$. Then, connect $v_j^x$ to $v'^x_j$ and $v'^x_j$ to $R$ for all $x\in[2]$. By this construction, vertex $v_j$ has at least two blue neighbors, and one neighbor whose coloring can be freely chosen. Let $V$ be the set of all vertices created in this step.
\item \label{step:ONCF:create-S} For every $i,i' \in [k]$ and $j,j' \in [n]$, we add a vertex $s^{i,i'}_{j,j'}$ and let $S$ be the set containing all these vertices. For each $\ell \in [t]$, $i,i' \in [k]$, and $j,j' \in [n]$, add the edges $\{s^{i,i'}_{j,j'}, y_\ell\}$, $\{s^{i,i'}_{j,j'}, y'_\ell\}$, and $\{s^{i,i'}_{j,j'}, y''_\ell\}$ if and only if $\{j,j'\} \notin E(X_\ell)$. The idea is that vertex $s^{i,i'}_{j,j'}$ verifies that if we select vertices $j$ and $j'$ to be part of the clique, then no instance $X_\ell$ where $\{j,j'\} \notin E(X_\ell)$ can be selected as the yes-instance. To do this, we add additional gadgets in the following step.
\item \label{step:ONCF:create-gadgets} For each $i,i' \in [k]$ and $j,j' \in [n]$ add a new ONCF-gadget $G^{i,i'}_{j,j'}$. Identify vertex $g_1$ of the gadget with $p_{i,j}$ and identify vertex $g_2$ of the gadget with $p_{i',j'}$. Add the edge $\{g_3, s^{i,i'}_{j,j'}\}$. Finally, connect vertex $g_{10}$ to $R$.
\end{enumerate}

In the remainder, we observe that $c(R) \neq c(B)$ for any $2$-ONCF-coloring of $G$. Thereby, we can safely rename the colors such that $c(R) = \red$ and $c(B) = \blue$.

\begin{myclaim}\label{claim:ONCF:fixed-colors}
Let $c$ be any $2$-ONCF-coloring of $G$, then $c(a_1) = c(a_2) = \red$, $c(v_j^1)=c(v_j^2)=\blue$ for all $j \in [n]$, and $c(h_i^1) = c(h^2_i) = \blue$ for all $i \in [k]$.
\end{myclaim}
\begin{claimproof}
This follows immediately from the fact that there is a degree-$2$ vertex connecting these vertices to $R$ or $B$ respectively.\qed
\end{claimproof}

\begin{myclaim}\label{claim:ONCF:select-y}
For any $2$-ONCF-coloring $c$ of $G$, there exists exactly one vertex $y^* \in Y$ such that $c(y^*) = \blue$ and for all other vertices $y \in Y \setminus \{y^*\}$, $c(y) =\red$.
\end{myclaim}
\begin{claimproof}
Observe that $N(a) = \{a_1,a_2\} \cup Y$ and that $c(a_1) = c(a_2) = \red$ by Claim \ref{claim:ONCF:fixed-colors}. Thereby, $a$ must have a unique \blue neighbor and this neighbor is in $Y$.\qed
\end{claimproof}

\begin{myclaim}\label{claim:ONCF:gadgets}
Let $G^{i,i'}_{j,j'}$ be a gadget and let $c$ be a $2$-ONCF-coloring of $G$.  Then $c(g_1) = c(g_2) = \red$ in this gadget implies that $c(g_3) = \blue$.
\end{myclaim}
\begin{claimproof}
Suppose $c(g_1) = c(g_2) = c(g_3) = \red$. It follows from Lemma \ref{lem:ONCF-gadget:all-red} that thereby $c(g_9) := \red$. Observe however that $N(g_{10}) = \{R,g_9\}$ by definition. Since both these vertices are \red, this contradicts the assumption that $c$ is a proper ONCF-coloring.\qed
\end{claimproof}

\begin{myclaim}\label{claim:ONCF:select-clique}
For any $2$-ONCF-coloring $c$ of $G$, there exist distinct $j_1,\ldots,j_k$ such that $c(p_{i,j_i}) = \red$ and for all other $p \in P$, $c(p) := \blue$.
\end{myclaim}
\begin{claimproof}
We start by showing that for each $i \in [k]$, there is exactly one $j \in [n]$ such that $c(p_{i,j}) = \red$. Consider the neighborhood of vertex $h_i$. $N(h_i):= \{h^1_i,h^2_i\}\cup \{p_{i,j} \mid j \in [n]\}$. Since $c(h^1_i) = c(h^2_i) = \blue$ by Claim~\ref{claim:ONCF:fixed-colors}, it follows that indeed $\{h^1_i,h^2_i\}\cup \{p_{i,j} \mid j \in [n]\}$ contains exactly one \red vertex, let this be vertex $p_{i,j_i}$. It remains to show that all $j_i$ are distinct.

We show this by proving that for each $j \in [n]$, there is at most one $i \in [k]$ such that $c(p_{i,j}) = \red$. Consider vertex $v_j$, observe that $N(v_j) := \{v^1_j,v^2_j,v^3_j\} \cup \{p_{i,j} \mid i \in [k]\}$. Since $c(v^1_j) = c(v^2_j) = \blue$ by Claim \ref{claim:ONCF:fixed-colors}, it follows that $v_j$ has a unique \red neighbor, and thus there is at most one $i \in [k]$ such that $c(p_{i,j}) = \red$. Hereby, the claim follows.\qed
\end{claimproof}

\begin{myclaim}\label{claim:oncf:if}
If there exists an instance $X_\ell$ that has a clique of size $k$, then $G$ can be $2$-ONCF-colored.
\end{myclaim}
\begin{claimproof}
Let $\ell$ be such that $X_\ell$ is a yes-instance for clique. Choose $j_1,\ldots,j_k \in [n]$ such that these vertices form a clique in $X_\ell$. We now give an ONCF-coloring $c$ for $G$, see Figure \ref{fig:cross-composition-ONCF} for an example ONCF-coloring of $G$.
\begin{enumerate}
\item Let $c(R) := c(R'):= \red$ and $c(B) := c(B') := \blue$.
\item Let $c(y_\ell) := \blue$. For all other vertices in $y \in Y$, let $c(y) := \red$.
\item Let $c(a) := \blue$ and $c(a_1) := c(a_2) := c(a'_1) := c(a'_2) := \red$.
\item Let $c(p_{i,j_i}) := \red$ for all $i \in [k]$. For all other vertices $p \in P$, let $c(p) := \blue$.
\item Let $c(h_i) := \red$ for all $i \in [k]$. Let $c(h) := \blue$ for all other vertices $h \in H$.
\item For $j \in [n]$, let $c(v^3_j):= \red$ if $j \notin \{j_1,\ldots,j_k\}$, let $c(v^3_j) := \blue$ otherwise. Let $c(v''^x_j):= \red$ for all $x \in [2]$. Let $c(v) := \blue$ for all remaining vertices $v \in V$.
\item Let $c(s) := \red$ for all $s \in S$.
\item It remains to color the introduced gadgets. Observe that vertices $g_1$ and $g_2$ of each gadget have already been colored, as they were identified with vertices from $P$. We now proceed as follows. Define $c(g_3) := \blue$ whenever $s^{i,i'}_{j,j'}$ has no \blue neighbor in $Y$ and define $c(g_3) := \red$ otherwise. Observe that since $j_1,\ldots,j_k$ form a clique in instance $X_\ell$, it never happens that $c(g_1) = c(g_2) = c(g_3) = \red$ by this definition. Color the remainder of each gadget using Lemma \ref{lem:ONCF-gadget:not-all-red}, such that the coloring satisfies property \ref{prop:fixed-colors} of the lemma statement.
\end{enumerate}
This defines a $2$-coloring of $G$, it remains to verify that $c$ is indeed a $2$-ONCF-coloring. We consider the neighborhood of each vertex in $G$.
\begin{enumerate}
\item $N(R) := \{R'\} \cup \{v'^1_j,v'^2_j \mid j \in [n]\} \cup \{h'^1_i,h'^2_i \mid i \in [n]\}$. Since $c(R') = \red$ and $c(x) = \blue$ for all $x \in \{v'^1_j,v'^2_j \mid j \in [n]\} \cup \{h'^1_i,h'^2_i \mid i \in [n]\}$, $N(R)$ is ONCF-colored. $N(B) := \{R',B',a_1',a_2'\}$, of which only $R'$ is red. Thus, $N(B)$ is ONCF-colored. Furthermore, $|N(B')| =1 $ and thereby it is trivially ONCF-colored, and $N(R') = \{R,B\}$ which have distinct colors as desired.
\item For any vertex $y \in Y$, $N(Y)$ contains vertex $a$ which is colored \blue. Furthermore, $N(Y) \setminus \{a\} \subseteq S$ and all vertices in $S$ are \red.
\item $N(a) = Y \cup \{a_1,a_2\}$. $Y$ contains exactly one \blue vertex, and $a_1$ and $a_2$ are colored \red. $N(a_1) := \{a,a_1'\}$, which have distinct colors as desired. Similarly, $N(a_2) := \{a,a_2'\}$ and these vertices are \blue and \red respectively. Finally, $N(a_1') := \{a_1,B\}$ and $N(a_2') := \{a_2,B\}$, it is easy to verify that these are ONCF-colored.
\item For $i \in [k]$ and $j \in [n]$, $N(p_{i,j})$ contains vertex $h_j$ which is \red. Furthermore, $N(p_{i,j}) \setminus \{h_j\}$ only contains vertices from $V$, which are \blue, and vertices $g_4$ and $g_5$ from numerous ONCF-gadgets, which are also \blue. Thereby it has \red as a unique color in its neighborhood.
\item For $i \in [k]$, $N(h_i) := \{p_{i,j} \mid j \in [n]\} \cup \{h^1_i,h^2_i\}$. Observe that all vertices in $N(h_i)$ are \blue, except vertex $p_{i,j_i}$ which is \red. For $x \in [2]$,  $N(h^x_i) := \{h'^x_i, h_i\}$ and these vertices receive distinct colors. $N(h'^x_i):= \{R, h^x_i\}$ and these vertices also receive distinct colors.
\item For $j \in [n]$, we observe that $v_j$ has exactly one \red neighbor in $P$ and all its other neighbors are \blue. Vertices $v_j^1$ and $v^j_2$ both have exactly one \red neighbor, namely vertex $v''^1_j$ or $v''^2_j$, respectively. Vertices $v'^x_j$ have one \blue and one \red neighbor for $x \in [2]$. The vertices $v''^x_j$ for $x\in[2]$ and vertex $v^3_j$ have degree one and are thus ONCF-colored by definition.
\item For $i,i'\in[k]$, $j,j' \in [n]$, vertex $s^{i,i'}_{j,j'}$ has neighbors in $Y$ and vertex $g_3$ in gadget $G^{i,i'}_{j,j'}$ . It follows from the definition of the coloring of $g_3$ and the fact that $Y$ has at most one \blue vertex that $s^{i,i'}_{j,j'}$ has exactly one \blue neighbor.
\item It remains to verify that all gadget vertices are ONCF-colored properly, consider the vertices of gadget $G^{i,i'}_{j,j'}$. The neighborhoods of vertices $g_4,g_5,\ldots,g_9$ are ONCF-colored by definition. Vertices $g_1$ and $g_2$ were identified with vertices from $P$ and have already been discussed above. $N(g_3) = \{g_5,s^{i,i'}_{j,j'}\}$ and these are \blue and \red, respectively. $N(g_{10}) = \{g_9, R\}$ and these are also \blue and \red.\qedhere
\end{enumerate}
\end{claimproof}

\begin{myclaim}\label{claim:oncf:only-if}
If $G$ can be $2$-ONCF-colored, then there exists $\ell \in [t]$ such that instance $X_\ell$ has a clique of size $k$.
\end{myclaim}
\begin{claimproof}
Let a $2$-ONCF-coloring $c$ of $G$ be given. By Claim \ref{claim:ONCF:select-y}, there exists $y \in Y$ with $c(y) := \blue$. Pick $\ell$ such that $y \in \{y_\ell,y'_\ell,y''_\ell\}$.  By Claim \ref{claim:ONCF:select-clique}, take distinct $j_1,\ldots,j_k \in [n]$ such that $c(p(i,j_i))=\red$. We will show that instance $X_\ell$ has a clique of size $k$, by proving that vertices $j_1,\ldots,j_k$ form a clique in $X_\ell$.

Suppose not, then there exist $i, i' \in [k]$ such that $j_i$ and $j_{i'}$ are not connected by an edge in $X_\ell$. We show that this leads to a contradiction. Consider gadget $G^{i,i'}_{j_i,j_{i'}}$. Vertices $g_1$ and $g_2$ of this gadget are colored \red, as they were identified with vertices $p_{i,j_i}$ and $p_{i',j_{i'}}$ respectively. It follows from Claim \ref{claim:ONCF:gadgets} that thereby $c(g_3) := \blue$ in this gadget. Now consider the neighborhood of vertex $s^{i,i'}_{j_i,j_{i'}}$. It contains vertex $g_3$ from gadget $G^{i,i'}_{j_i,j_{i'}}$ and the vertices $y_\ell,y'_\ell,y''_\ell$ since edge $\{j_{i},j_{i'}\}$ does not occur in instance $X_\ell$. It now follows that vertex $s^{i,i'}_{j_i,j_{i'}}$ has at least two \blue and two \red neighbors in $G$, which contradicts that $c$ is an ONCF-coloring of $G$.\qed
\end{claimproof}

From Claims \ref{claim:oncf:if} and \ref{claim:oncf:only-if}, it follows that $G$ can be $2$-ONCF-colored if and only if there exists an $\ell$ such that $X_\ell$ has a clique of size $k$. To prove the lower bound, it remains to bound the size of a vertex cover in $G$. Since $Y$ is an independent set in $G$, it follows that $V(G) \setminus Y$ is a vertex cover of $G$. Observe that
\[|V(G)\setminus Y| = \Oh(n^2 \cdot k^2).\]

To conclude, since we have given a cross-composition from \textsc{Clique} to $2$-\ONCF parameterized by vertex cover size, the lower bound now follows from Theorem~\ref{thm:cross-composition-implies-LB}.\qed
\end{proof}

Now, we use the lower bound obtained for $2$-\ONCF in Lemma~\ref{lem:ONCF-kernel-LB} and exhibit a polynomial parameter transformation to obtain the general lower bound for $q$-\ONCF for all $q \geq 2$. This completes the lower bound results for $q$-\ONCF claimed in Theorem~\ref{thm:noker}.

\begin{lemma}\label{lem:q-ONCF-kernel-LB}
For any $q \geq 2$, $q$-\ONCF parameterized by vertex cover size does not have a polynomial kernel, unless \containment.
\end{lemma}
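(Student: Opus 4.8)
The plan is to give a polynomial parameter transformation from $2$-\ONCF (parameterized by vertex cover size) to $q$-\ONCF (parameterized by vertex cover size) for each fixed $q \geq 2$, and then invoke Theorem~\ref{thm:ppt-works} together with the lower bound from Lemma~\ref{lem:ONCF-kernel-LB}. Concretely, I would start from an instance $G$ of $2$-\ONCF with a vertex cover $C$ of size $k$, and build a graph $G'$ that is $q$-ONCF-colorable if and only if $G$ is $2$-ONCF-colorable, while keeping the vertex cover of $G'$ polynomially bounded in $k$. The guiding idea is to force all of the ``real'' vertices of $G$ to use only two of the $q$ available colors, so that a $q$-ONCF-coloring of $G'$ restricts to a $2$-ONCF-coloring of $G$, and conversely any $2$-coloring of $G$ extends to a valid $q$-coloring of $G'$.

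The key step is therefore to design a color-restriction gadget. For each vertex $v \in V(G)$ I would attach a small gadget whose effect is to guarantee that in every $q$-ONCF-coloring, $v$ can only receive one of two designated colors, say colors $1$ and $2$. A natural way to enforce this is to introduce a constant number of auxiliary vertices (for example a set $D_v$ of dummy vertices adjacent to $v$, plus palette vertices realizing each of the colors $3,\ldots,q$) arranged so that the uniqueness requirement in every open neighborhood forces $v$ out of colors $3,\ldots,q$. One must be careful that the gadget (i) does not accidentally create new open neighborhoods that are impossible to satisfy, and (ii) does not interfere with the ONCF-constraints already present in $G$, i.e.\ the unique color witnessing $v$'s neighborhood in $G$ must still be available. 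Each such gadget should contribute only $\Oh(q)$ vertices to the vertex cover, and since $q$ is a constant and there are $|V(G)|$ vertices, the total vertex cover blow-up is bounded; but to keep the parameter polynomial in $k$ rather than in $n$, I would want the gadgets to hang off the independent set appropriately, so that all new cover vertices are charged against $C$ and auxiliary structure. The cleanest route is to place the restriction machinery so that only $\Oh(q \cdot k)$ new vertices need to go into the cover, giving a new vertex cover of size $\Oh(qk) = \Oh(k)$.

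After constructing $G'$, the correctness argument splits into two directions. For the forward direction I would take a $2$-ONCF-coloring of $G$ using colors $\{1,2\}$, keep it on $V(G)$, and extend it to the gadget vertices using the flexibility built into the gadget (assigning the palette vertices colors $3,\ldots,q$ and the dummy vertices colors $1,2$ as needed to satisfy their neighborhoods). For the reverse direction I would argue that any $q$-ONCF-coloring of $G'$ assigns every $v \in V(G)$ a color in $\{1,2\}$ by the gadget's forcing property, and that the neighborhood conflict-free condition restricted to $V(G)$ coincides with the $2$-ONCF condition on $G$ (here one checks that the extra gadget neighbors attached to $v$ are colored so as not to spoil the witness in $N_G(v)$). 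Combining both directions shows $G$ is a yes-instance iff $G'$ is, so the map $(G,k)\mapsto(G',\Oh(k))$ is a valid ppt, and Theorem~\ref{thm:ppt-works} transfers the no-polynomial-kernel lower bound from $2$-\ONCF.

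The main obstacle I expect is the gadget design itself: enforcing that a vertex is confined to exactly two colors via open-neighborhood conflict-free constraints is delicate, because ONCF is about uniqueness within a neighborhood rather than a direct per-vertex color list, and it is easy to write down a gadget that over-constrains (rendering the instance infeasible) or under-constrains (allowing $v$ to slip into a forbidden color while still satisfying all neighborhoods). Getting a gadget that simultaneously pins $v$ into $\{1,2\}$, remains consistent across all choices of the two free colors, and keeps the vertex cover increase bounded by $\Oh(k)$ rather than $\Oh(n)$ is the crux; the rest is routine verification.
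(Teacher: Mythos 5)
Your high-level strategy is exactly the paper's: a polynomial parameter transformation from $2$-\ONCF to $q$-\ONCF that forces the original vertices into two of the $q$ colors, followed by an appeal to Theorem~\ref{thm:ppt-works} and Lemma~\ref{lem:ONCF-kernel-LB}. However, the entire content of the proof lies in the gadget you defer, and you explicitly concede you do not know how to build it. As written, the proposal is a plan plus an accurate description of why the plan is hard, not a proof. Moreover, the direction you gesture at --- attaching a separate restriction gadget to \emph{each} vertex $v \in V(G)$ --- is the wrong shape: for vertices $v$ in the independent set $V(G)\setminus C$, any new edges incident to $v$ must be covered by either $v$ or the new gadget vertices, and with $n$ per-vertex gadgets you risk an $\Omega(n)$ blow-up of the cover unless the gadget vertices themselves absorb all new edges; you acknowledge this tension but do not resolve it.

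The paper resolves it with a single \emph{global} gadget of size $\Oh(q^2)$. It adds $2q$ palette vertices $x^0_1,\dots,x^0_q,x^1_1,\dots,x^1_q$ arranged as a subdivided clique (so that in any $q$-ONCF-coloring the $x^0_i$ receive $q$ pairwise distinct colors and $x^0_i,x^1_i$ agree), plus $2(q-2)$ vertices $y^0_\ell,y^1_\ell$ for $\ell\in[q-2]$, each pinned to color $\ell$ via subdivided edges to the palette, and each made adjacent to \emph{every} vertex of $V(G)$. This one construction does both jobs at once: every $v\in V(G)$ now sees two vertices of color $\ell$ for each $\ell\in[q-2]$, so no such $\ell$ can serve as the unique witness color of $N(v)$; and since $N(y^0_\ell)$ already contains exactly one vertex of color $\ell$ and two of every other color from the palette, no $v\in V(G)$ may itself receive color $\ell$. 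Hence only the colors $q-1$ and $q$ survive on $V(G)$, and the restriction is a $2$-ONCF-coloring of $G$. Because only $\Oh(q^2)=\Oh(1)$ vertices are added and the sets $X\cup X'\cup Y\cup Y'$ cover all new edges, the vertex cover grows by an additive constant, which is what makes the parameter bound immediate. Without this (or an equivalent) concrete construction and the two-directional verification that accompanies it, the lemma is not established.
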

\begin{proof}
We prove the result by giving a polynomial parameter transformation from $2$-\ONCF parameterized by vertex cover size to $q$-\ONCF parameterized by vertex cover size for any constant $q > 2$.  By~Theorem \ref{thm:ppt-works} and Lemma~\ref{lem:ONCF-kernel-LB}, this implies that $q$-\ONCF parameterized by vertex cover size,  does not have a polynomial kernel unless \containment for $q \geq 2$. We will do this by adding additional structures to the graph, that ensure that the original graph is colored using only $2$ colors, and that for any vertex in the original graph, its ONCF-color is also one of these two colors. Suppose we are given a graph $G$ for $2$-\ONCF, we show how to obtain $G'$ for $q$-\ONCF.
\begin{enumerate}
\item Start by initiating $G'$ as $G$. Let $V:= V(G)$.
\item\label{ONCF:step:X} Add $2q$ vertices $x^0_1,x^0_2,\ldots,x^0_q$ and $x^1_1,x^1_2,\ldots,x^1_q$. Let $X$ be the set of all these vertices. Add a clique on $x^0_1,x^0_2,\ldots,x^0_q$. Connect $x^1_j$ to $x^0_i$ for all $i \neq j$ with $i,j \in [q]$. Finally, subdivide all the edges between vertices in $X$. (Thus, vertices $x^0_1,x^0_2,\ldots,x^0_q$ form a subdivided clique in $G'$). Let the set of  vertices used to subdivide these edges be $X'$.
\item Add $2(q-2)$ vertices $y^0_\ell, y^1_\ell$ for $\ell \in [q-2]$, let $Y$ be the set containing all these vertices. Connect $y^0_\ell$ and $y^1_\ell$ to all vertices in $\{ x^0_i, x^1_i \mid i \in [q] \wedge i \neq \ell\}$. Then, connect $y^0_\ell$ to $x^0_\ell$ and connect $y^1_\ell$ to $x^1_\ell$.
 Finally, connect $y^0_\ell$ and $y^1_\ell$ to every vertex $v \in V$.
\item \label{ONCF:step:Y-to-palette} For $b\in\{0,1\}$ and $\ell \in [q-2]$, use a subdivided edge to connect $y^b_\ell$ to $x^0_j$ for all $j \neq \ell$. Let $Y'$ be the set containing all vertices used for subdividing these edges. This ensures that $y_\ell^b$ always receives color $\ell$.
\end{enumerate}

\begin{myclaim}
If $\chi_{\sf ON}(G') \leq q$, then $\chi_{\sf ON}(G) \leq 2$.
\end{myclaim}
\begin{claimproof}
Suppose $G'$ has a $q$-ONCF-coloring $c' \colon V(G') \rightarrow [q]$, we will now show that $G$ has an $2$-ONCF-coloring.
It is easy to observe that $c'(x^0_i) = c'(x^1_i)$ for all $i$, and furthermore $c'(x^0_i) \neq c'(x^0_j)$ for all $i \neq j \in [q]$, by the subdivided edges introduced in Step \ref{ONCF:step:X}.
By this observation,  we may assume without loss of generality that $c'(x^0_i) = c'(x^1_i) = i$.
It is easy to observe using Step \ref{ONCF:step:Y-to-palette} of the construction, that in such a coloring $c'(y^0_\ell) = c'(y^1_\ell) = \ell$. Thereby, for every $\ell \in [q-2]$, $N(v)$ contains two vertices of color $\ell$, for all $v \in V$. This implies that for any $\ell \in [q-2]$ and $v \in V$, we know that $\ell$ is not the color that ensures that $N(v)$ is $q$-ONCF-colored.

 Furthermore, $N(y^0_\ell)$ contains two vertices of color $i$ for all $i \neq \ell$ (namely $x^0_i$ and $x^1_i$), and one vertex of color $\ell$. Thereby, no vertex in $v$ can have color $\ell$.
This implies that only two colors are used in $V$, namely $q$ and $q-1$. We conclude that the coloring $c$ restricted to vertices in $V$ is a $2$-ONCF-coloring for $G$, after renaming the colors to $\{1,2\}$.\qed
\end{claimproof}

\begin{myclaim}
If $\chi_{\sf ON}(G) \leq 2$, then $\chi_{\sf ON}(G') \leq q$.
\end{myclaim}
\begin{claimproof}
Suppose $G$ has a $2$-\ONCF-coloring $c$, we show how to $q$-ONCF-color $G'$. First of all, let $c'(x^0_i) = c'(x^1_i) = i$ for all $i \in [q]$ and let $c'(y^0_\ell) = c'(y^1_\ell) = \ell$ for all $\ell \in [q-2]$. For $v \in V$, let $c'(v) = q-1$ when $c(v) = 1$ and let $c'(v) = q$ otherwise. For the vertex $x^0$ on the subdivided edge from $x^0_q$ to $x^1_{q-1}$, let $c(x^0) = q-1$. Similarly, for the vertex $x^1$ on the subdivided edge from $x^1_q$ to $x^0_{q-1}$, let $c(x^1) = q-1$.
For all remaining vertices, let $c'(v) := q$. It remains to show that this gives a $q$-ONCF-coloring of $G'$. See Figure \ref{ONCF-poly-param-transformation} for a sketch of $G'$.
\begin{figure}
\centering
\includegraphics{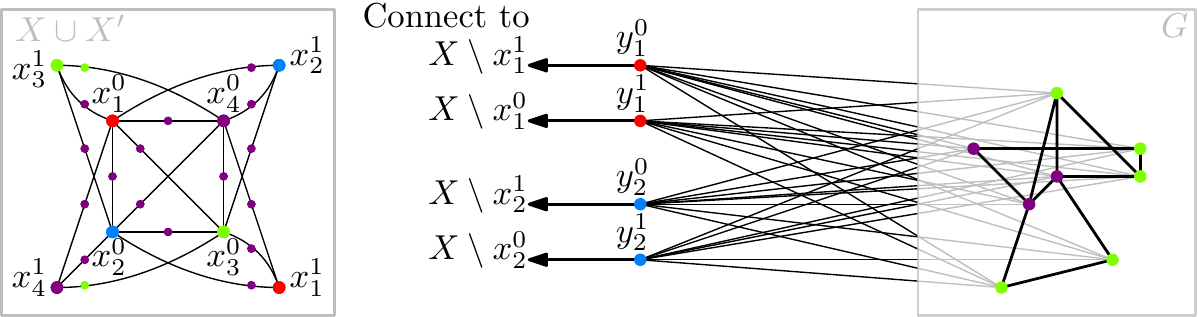}
\caption{Reduction from $G$ to $G'$ for $q = 5$. The subdivided edges from $Y$ to $X$ added in Step \ref{ONCF:step:Y-to-palette} of the construction are omitted for simplicity.}
\label{ONCF-poly-param-transformation}
\end{figure}
\begin{enumerate}
\item The neighborhoods of vertices in $V$ are trivially $q$-ONCF-colored since they have no vertices of color $q$ or $q-1$ outside $V$, and $c$ was a valid $2$-ONCF-coloring.
\item For $i \in [q-2]$, $b \in \{0,1\}$ vertex $x^b_i$ has exactly one neighbor of color $i$, namely $y^b_i$. For $i \in \{q-1,q\}$, vertices in $x^b_i$ have exactly one neighbor of color $q-1$, namely $x^b$ or $x^{(1-b)}$. Vertices in $X'$ have degree $2$ and have neighborhoods that are $q$-ONCF-colored by this definition.

\item Vertices $y_\ell^0$ and $y_\ell^1$ for $\ell \in [q-2]$ each have exactly one neighbor of color $\ell$ and are thus $q$-ONCF-colored.
\item Neighborhoods of the vertices used for subdividing edges always have two vertices of distinct color, by definition.\qedhere
\end{enumerate}
\end{claimproof}

Since $G'$ is a copy of $G$ to which we add $\Oh(q)$ additional vertices, it follows that the vertex cover of $G'$ is bounded by $k + \Oh(q) = \Oh(k)$ where $k$ is the size of a vertex cover in $G$. Thus, we have given a polynomial-parameter transformation from $2$-\ONCF to $q$-\ONCF with both problems having vertex cover size as parameter, and the theorem statement follows from Theorem~\ref{thm:ppt-works} and Lemma~\ref{lem:ONCF-kernel-LB}.\qed
\end{proof}

\subsection{Kernel lower bound for CNCF-Coloring}\label{subsec:cncf-col}
In this part, we complete the proof of Theorem~\ref{thm:noker} by showing that $q$-\CNCF parameterized by vertex cover size has no polynomial kernel when $q$ is at least $3$.  To do this, we first introduce a useful gadget, which will serve as a color palette in our lower bound construction. The gadget is based on the graphs $G_k$ defined by Abel et al. \cite[Section 3.1]{AbelADFGHKS2017}.  

\begin{definition}[{\cite{AbelADFGHKS2017}}]\label{def:G_k}
 For every positive integer $k$, a graph $G_k$ is recursively defined  as follows:
\begin{enumerate}
\item $G_1$ consists of a single isolated vertex. $G_2$ is a $K_{1,3}$ with one edge subdivided by another vertex (refer also to Figure \ref{fig:cncf-palette}).
\item Given $G_k$ and $G_{k-1}$, $G_{k+1}$ is constructed as follows for $k \geq 2$:
    \begin{itemize}
    \item Take a complete graph $G = K_{k+1}$ on $k+1$ vertices.
    \item To each vertex $v \in V(K_{k+1})$, attach two disjoint and independent copies of $G_k$, adding an edge from $v$ to every vertex of both copies of $G_k$.
    \item For each edge $e = \{v,w\} \in E(K_{k+1})$, add two disjoint and independent copies of $G_{k-1}$, adding an edge from $v$ and $w$ to every vertex of both copies.
    \end{itemize}
\end{enumerate}
\end{definition}

Let $\chi^*_{\text{\sf CN}}(G)$ denote the minimum number of colors needed to CNCF-color $G$, when it is allowed to not color certain vertices.
Abel et al. have shown the following lemma.
\begin{lemma}[{\cite[Lemma 3.3]{AbelADFGHKS2017}}]\label{lem:palette:gK}
For $G_k$ constructed in this manner, $\chi^*_{{\sf CN}}(G_k) = k$.
\end{lemma}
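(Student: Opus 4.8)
The plan is to prove the statement $\chi^*_{\sf CN}(G_k) = k$ by induction on $k$, establishing separately a lower bound (at least $k$ colors are needed) and an upper bound (a CNCF-coloring with $k$ colors exists, even without coloring every vertex). Since the lemma is attributed to Abel et al.~\cite{AbelADFGHKS2017}, I would reconstruct their argument. The base cases $G_1$ (a single vertex, trivially needing $1$ color) and $G_2$ (the subdivided $K_{1,3}$, which one checks by hand needs exactly $2$) are routine. The recursive structure of $G_{k+1}$ — a clique $K_{k+1}$ whose vertices and edges carry attached copies of $G_k$ and $G_{k-1}$ — is exactly what makes induction natural, so the induction hypothesis would give $\chi^*_{\sf CN}(G_{k'}) = k'$ for all $k' \leq k$.

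For the upper bound, I would exhibit an explicit $(k+1)$-coloring of $G_{k+1}$. The natural idea is to assign the $k+1$ clique vertices of $K_{k+1}$ the $k+1$ distinct colors, and then recursively color the attached copies: each copy of $G_k$ (attached to a vertex $v$) gets an inductive $k$-coloring using the $k$ colors \emph{other than} $c(v)$, and each copy of $G_{k-1}$ (attached to an edge $\{v,w\}$) gets a $(k-1)$-coloring avoiding both $c(v)$ and $c(w)$. The point of attaching \emph{two} copies to each vertex/edge is presumably to guarantee, via the induction hypothesis applied to the copies, that every closed neighborhood acquires a uniquely colored vertex while the clique vertices themselves — which see all $k+1$ colors in their own closed neighborhood — resolve correctly because each is the unique bearer of its color within $N[v]$. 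I would verify the CNCF property for three kinds of vertices: the clique vertices, the vertices inside the $G_k$ copies, and the vertices inside the $G_{k-1}$ copies, using the inductive coloring together with the fact that the attachment edges force specific color patterns.

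For the lower bound, I would argue that fewer than $k+1$ colors cannot suffice for $G_{k+1}$. The clique $K_{k+1}$ alone forces $k+1$ vertices to be pairwise adjacent, and the intended argument is that the attached gadgets prevent reusing colors in a way that would let the closed neighborhoods of the clique vertices (which contain the whole clique plus the attached copies) be conflict-free with too few colors. Concretely, if only $k$ colors were available, two clique vertices $v, w$ would share a color; then the two copies of $G_{k-1}$ attached to the edge $\{v,w\}$ — each requiring $k-1$ colors by the induction hypothesis and each seeing both $v$ and $w$ in every closed neighborhood — would be unable to contribute a unique color to $N[v]$ or $N[w]$, forcing a conflict. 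The doubling of copies is what blocks the escape route where a single attached copy could serve as the unique color for a clique vertex.

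The main obstacle I expect is the lower-bound bookkeeping: carefully showing that \emph{every} closed neighborhood fails to be conflict-free under a $k$-coloring, rather than just the obvious ones, and in particular handling how a uniquely colored vertex inside an attached copy might ``leak'' a resolving color up to a clique vertex. Getting the attachment structure to rule out all such leaks — which is precisely why each vertex and edge carries two independent copies — is the delicate part, and I would lean heavily on the induction hypothesis $\chi^*_{\sf CN}(G_{k-1}) = k-1$ and $\chi^*_{\sf CN}(G_k) = k$ to control the colors available inside each copy. The upper bound, by contrast, should be a relatively mechanical verification once the recursive coloring scheme is fixed.
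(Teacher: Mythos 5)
First, a point of comparison: the paper does not prove this lemma at all --- it is imported verbatim from Abel et al.~\cite{AbelADFGHKS2017} (their Lemma 3.3), so there is no in-paper proof to measure against and your proposal is necessarily a reconstruction of the external argument. Your overall plan (induction on $k$; upper bound by giving the $k+1$ clique vertices distinct colors and recursively coloring each attached copy while avoiding the colors of its attachment vertices; lower bound from the clique plus the attached copies) is the right one, and your upper-bound verification is essentially complete and correct.

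The gap is in the lower bound, and it sits exactly at the point of the definition of $\chi^*_{\sf CN}$: colorings are \emph{partial}, so you cannot open with ``two clique vertices would share a color'' --- with only $k$ colors a clique vertex might simply be left uncolored, and then no pigeonhole applies. Handling this is what the two copies of $G_k$ attached to each clique vertex are for (not, as you suggest, blocking a ``leak'' of a resolving color): if a clique vertex $v$ is uncolored, then for every $u$ in an attached copy $H$ of $G_k$ we have $N_{G_{k+1}}[u]=N_H[u]\cup\{v\}$ with $v$ contributing nothing, so the restriction of $c$ to $H$ is itself a valid partial CNCF-coloring of $G_k$ and by induction uses all $k$ available colors; since this holds for both copies, every color occurs at least twice in $N[v]$, a contradiction. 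Hence all $k+1$ clique vertices are colored, pigeonhole gives $c(v)=c(w)=a$ for some edge $\{v,w\}$, and the second half of your argument must be made precise in the same style: for $u$ in a copy $F$ of $G_{k-1}$ attached to $\{v,w\}$, the unique color of $N_{G_{k+1}}[u]=N_F[u]\cup\{v,w\}$ cannot be $a$ (it occurs twice there), so uncoloring all $a$-colored vertices of $F$ yields a valid partial CNCF-coloring of $G_{k-1}$, which by induction uses all $k-1$ colors other than $a$; with two such copies inside $N[v]$, every color again occurs at least twice in $N[v]$. Saying the copies ``would be unable to contribute a unique color'' is the right intuition, but without the uncolor-and-apply-induction step the key fact --- that each copy is forced to exhaust the remaining $k-1$ colors, which is what actually kills $N[v]$ --- is unsupported.
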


We use this to define the palette-gadget $C_q$.

\begin{definition}\label{def:cncf-palette}
To create a \emph{palette-gadget} $C_q$ start from a complete graph on vertices $c_1,\ldots,c_q$. Then, add vertices $c_i'$ for $i \in [q]$ and connect $c_i'$ to $c_j$ for all $i \neq j$. Let $D:= \{c_i,c_i'\mid i \in [q]$ be the set of distinguished vertices of the gadget. Finally, for each edge $\{u,v\} \in E(C_q)$ with $u,v \in D$, add two new distinct copies of $G_{q-1}$ and connect all vertices in these copies to both $u$ and $v$. See Figure \ref{fig:cncf-palette} for an example of the palette-gadget $C_3$.
\end{definition}

\begin{figure}[t]
\centering
\includegraphics{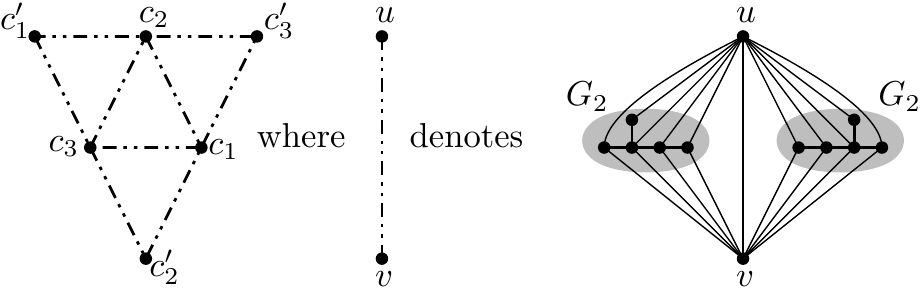}
\caption{A palette-gadget (left) where every dashed edge should be interpreted as the gadget depicted on the right.}
\label{fig:cncf-palette}
\end{figure}
The next two lemmas are used to establish that a palette gadget can indeed serve as a color palette for CNCF-Coloring.

\begin{lemma}\label{lem:CNCF:palette-is-palette}
Let $G$ be a graph and let $C$ be a set of vertices such that $G[C]$ is isomorphic to the palette-gadget $C_q$ for some $q \geq 3$. Let $f$ be a $q$-CNCF-coloring of $G$. Then $f(c_i) \neq f(c_j)$ for all $i,j\in[q]$ with $i\neq j$. Furthermore,  $f(c_i) = f(c_i')$ for all $i \in [q]$.
\end{lemma}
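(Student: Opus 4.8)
The plan is to show that the palette-gadget $C_q$ forces all $q$ colors onto its distinguished vertices $c_1,\ldots,c_q$, by exploiting the recursively-defined subgadgets $G_{q-1}$ attached along the edges of the clique. The key structural feature, recorded in Lemma~\ref{lem:palette:gK}, is that $\chi^*_{\sf CN}(G_{q-1}) = q-1$: even allowing some vertices to stay uncolored, one cannot CNCF-color a copy of $G_{q-1}$ with only $q-2$ colors. I would leverage this as follows. For distinct distinguished vertices $u,v \in D$, the palette-gadget attaches two copies of $G_{q-1}$ whose vertices are each adjacent to both $u$ and $v$. The first step is therefore to argue that $f(u) \neq f(v)$ for every pair of adjacent distinguished vertices $u,v$, and in particular for all pairs $c_i,c_j$ with $i \neq j$, since these form a clique.

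To do this I would argue by contradiction: suppose $f(u) = f(v)$ for two distinguished vertices joined by an edge (so two copies of $G_{q-1}$ hang off them). Take one such copy $H \cong G_{q-1}$. Every closed neighborhood $N[w]$ for $w \in V(H)$ is contained in $V(H) \cup \{u,v\}$. Since $u$ and $v$ share the same color, the pair $\{u,v\}$ can never supply a \emph{uniquely} colored vertex to any such closed neighborhood; so the conflict-free witness for each $w \in V(H)$ must already lie inside $V(H)$ under at most $q-2$ colors (the color $f(u)=f(v)$ being effectively unusable as a unique witness, together with the colors restricted appropriately). More carefully, I would show that the restriction of $f$ to $V(H)$, possibly after discarding the vertices whose witness is the doubled color, yields a CNCF-coloring of $H$ that uses at most $q-2$ colors in the $\chi^*_{\sf CN}$ sense, contradicting Lemma~\ref{lem:palette:gK}. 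This forces $f(c_i) \neq f(c_j)$ for all $i \neq j$; since there are exactly $q$ vertices $c_1,\ldots,c_q$ and $q$ colors available, they receive all $q$ distinct colors.

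For the second assertion, that $f(c_i) = f(c_i')$, I would use the adjacency pattern from Definition~\ref{def:cncf-palette}: $c_i'$ is adjacent to $c_j$ for all $j \neq i$, and the edges between distinguished vertices again carry copies of $G_{q-1}$. By the part just proved applied to $c_i'$ against each $c_j$ ($j\neq i$), we get $f(c_i') \neq f(c_j)$ for all $j \neq i$. Since the colors $\{f(c_j) : j \neq i\}$ already exhaust $q-1$ of the $q$ colors (they are all distinct by the first part), the only remaining color available for $c_i'$ is $f(c_i)$, giving $f(c_i') = f(c_i)$ as required.

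The main obstacle I anticipate is the first step: precisely formalizing how a same-colored adjacent pair $u,v$ degrades a $q$-CNCF-coloring of the ambient graph $G$ into a $(q-1)$-coloring witness of the attached $G_{q-1}$ that contradicts Lemma~\ref{lem:palette:gK}. The subtlety is that $f$ is a coloring of the whole graph $G$, not just $G[C]$, so I must check that no external neighbors interfere: by construction the only neighbors of a vertex $w \in V(H)$ outside $V(H)$ are $u$ and $v$, so $N[w]$ stays inside $V(H)\cup\{u,v\}$ and the argument is self-contained. I would also need to handle the bookkeeping of exactly which color becomes ``forbidden'' as a unique witness and confirm this reduces the usable palette to $q-2$ colors, matching the $\chi^*_{\sf CN}$ lower bound. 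Once that reduction is set up cleanly, the remaining counting arguments are routine.
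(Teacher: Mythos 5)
Your overall architecture matches the paper's: prove that any two distinguished vertices joined by a gadget-carrying edge get distinct colors, deduce that $c_1,\ldots,c_q$ exhaust all $q$ colors, and then pin down $f(c_i')$ by elimination. The second and third steps are fine. But the first step, which you yourself flag as the anticipated obstacle, has a genuine gap: your argument as stated only forbids \emph{one} color, and that is not enough to contradict Lemma~\ref{lem:palette:gK}. Concretely, if $f(u)=f(v)$, then (i) no vertex $w$ of the attached copy $H\cong G_{q-1}$ can have a witness of color $f(u)$, since both $u$ and $v$ lie in $N[w]$ with that color; hence you may leave all $f(u)$-colored vertices of $H$ uncolored and still have a partial CNCF-coloring of $H$ --- note it is the vertices whose \emph{color} is $f(u)$ that get discarded, not those whose witness is. This yields a partial CNCF-coloring of $H$ with at most $q-1$ colors, which is perfectly consistent with $\chi^*_{\sf CN}(G_{q-1})=q-1$ and produces no contradiction. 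You need to forbid a \emph{second} color, and nothing in your sketch identifies where it comes from.

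The missing idea --- and the reason the gadget attaches \emph{two} copies of $G_{q-1}$ to each edge --- is to look at the closed neighborhood of $v$ itself. Let $i$ be the CNCF-witness color of $v$, i.e.\ the color occurring exactly once in $N[v]$. Since $u,v\in N[v]$ and $f(u)=f(v)$, we have $i\neq f(u)$. Now $N[v]$ contains \emph{every} vertex of both copies $G^1_{q-1}$ and $G^2_{q-1}$, so color $i$ appears at most once across the two copies combined; hence at least one copy avoids color $i$ entirely. In that copy, after also discarding the $f(u)$-colored vertices as above, the surviving partial coloring uses colors only from $[q]\setminus\{i,f(u)\}$, i.e.\ at most $q-2$ colors, and this is the contradiction with $\chi^*_{\sf CN}(G_{q-1})=q-1$. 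Without this use of $v$'s own witness (and of the fact that there are two copies, so that one of them must miss color $i$), the reduction to $q-2$ colors that your plan relies on cannot be completed.
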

\begin{proof}

We show this by showing that if $\{u,v\}$ is an edge such that there are two distinct copies of $G_{q-1}$, say $G_{q-1}^1$ and $G_{k-1}^2$, that are connected to both $u$ and $v$ and no other vertices in the graph, then $f(u) \neq f(v)$ in any $q$-CNCF-coloring of $G$. The results of the lemma statement then follow from the definition of the palette.

Suppose that for contradiction that there exists a $q$-CNCF-coloring $f$ with $f(u) = f(v)$. Let $i$ be a color such that $|\{f(u) = i \mid u \in N[v]\}| =1$. Observe that $i \neq f(u)$ as $f(u) = f(v)$ and $u,v \in N[v]$. Therefore, either $G_{q-1}^1$ or $G^2_{q-1}$ does not use color $i$, w.l.o.g. let this be $G_{q-1}^1$. We show that thereby $\chi^*_{\text{\sf CN}}(G^1_{q-1}) = q-2$, which contradicts Lemma \ref{lem:palette:gK}.

Define partial coloring $f^*$ of $G^1_{q-1}$ as follows. For any $x \in V(G^1_{q-1})$ with $f(x) \neq f(u)$, let $f^*(x) := f(x)$. For any $x \in V(G^1_{q-1})$ with $f(x) = f(u)$, leave $f^*(x)$ undefined. Observe that hereby, the range of $f^*$ is a subset of $[q] \setminus \{i,f(u)\}$ and thus $f^*$ defines a $(q-2)$-coloring of $G^1_{q-1}$. From the correctness of $f$ and the fact that any vertex in $G^1_{q-1}$ has two neighbors of color $f(v) = f(u)$ under $f$, it follows that $f^*$ is a partial $(q-2)$-CNCF-coloring of $G^1_{q-1}$, which is a contradiction.
\qed\end{proof}

\begin{lemma}\label{lem:CNCF:palette-colorable}
Let $C_q$ be a palette-gadget for $q \geq 3$. Then there exists a $q$-CNCF-coloring $f \colon V(C_q) \rightarrow [q]$  such that
\begin{enumerate}
\item $f(c_i) = f(c_i') = i$ for all $i \in [q]$, and
\item for all $i\in[q]$, $N[c_i]$ contains exactly one vertex of color $i$.
\end{enumerate}
\end{lemma}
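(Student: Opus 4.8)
The plan is to exhibit an explicit coloring $f \colon V(C_q) \to [q]$ and then verify the two required properties directly. I would begin by setting $f(c_i) = f(c_i') = i$ for all $i \in [q]$, which immediately satisfies property~1 by fiat. The remaining freedom is in coloring the interior vertices of the copies of $G_{q-1}$ attached to the distinguished edges; these must be colored so as to make every closed neighborhood in $C_q$ conflict-free while not spoiling the uniqueness required in property~2.

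The key observation driving the construction is Lemma~\ref{lem:palette:gK}: each copy of $G_{q-1}$ can be CNCF-colored using $q-1$ colors (allowing some vertices to be left uncolored, in the $\chi^*_{\text{\sf CN}}$ sense). For a copy of $G_{q-1}$ attached to the edge $\{u,v\}$ with $f(u)=a$ and $f(v)=b$, I would color its vertices using the $q-1$ colors in $[q]\setminus\{a\}$ (or symmetrically avoiding one endpoint's color), via the coloring guaranteed by Lemma~\ref{lem:palette:gK} after renaming colors. The point is that inside such a copy every vertex already sees two neighbors of each endpoint color (since there are two independent copies on the edge, or the endpoints themselves), so the conflict-free condition for those interior vertices reduces to the internal $\chi^*_{\text{\sf CN}}$-coloring of $G_{q-1}$, which exists. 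I would choose the color to avoid on each copy carefully so that the distinguished vertices get their unique-color witnesses from the right place.

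The heart of the verification is property~2: for each $i$, the closed neighborhood $N[c_i]$ should contain exactly one vertex of color~$i$, namely $c_i$ itself. Since $c_i$ is adjacent to all $c_j$ ($j\neq i$), to all $c_j'$ ($j \neq i$), and to the interior vertices of the $G_{q-1}$-copies on its incident edges, I must ensure none of these neighbors receives color~$i$. The $c_j, c_j'$ already avoid $i$ by construction. For the attached copies of $G_{q-1}$ on an edge incident to $c_i$, I would arrange the internal coloring (using the freedom of which color to omit) so that color~$i$ does not appear in them; this is possible precisely because a copy of $G_{q-1}$ needs only $q-1$ colors, so one color — which I pick to be~$i$ — can be excluded. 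This simultaneously makes $c_i$ the conflict-free witness for its own closed neighborhood.

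The main obstacle I anticipate is a counting/consistency check: each edge $\{u,v\}$ carries two copies of $G_{q-1}$, and each such copy lies in the closed neighborhood of both $u$ and $v$, so the choice of which color to omit in a copy is constrained from two sides at once. I would need to confirm that one can consistently omit, in the two copies on edge $\{c_i,c_j\}$, the colors $i$ and $j$ respectively (one copy avoiding $i$, the other avoiding $j$), so that neither $N[c_i]$ nor $N[c_j]$ acquires a second vertex of the endpoint's own color, while every other closed neighborhood (the interior vertices and the $c_j'$ vertices) still has some uniquely-colored vertex. Checking that the closed neighborhoods $N[c_i']$ and $N[x]$ for interior $x$ remain conflict-free under this global assignment is the routine but essential bookkeeping, and I expect it to go through cleanly using the two-copies redundancy together with Lemma~\ref{lem:palette:gK}.
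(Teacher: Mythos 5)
Your construction on the distinguished vertices matches the paper, but the way you handle the copies of $G_{q-1}$ contains a genuine gap. You propose to color each copy with the $q-1$ colors of $[q]\setminus\{a\}$ via the internal CNCF-coloring guaranteed by Lemma~\ref{lem:palette:gK}, omitting only one endpoint's color. This cannot satisfy property~2: by Definition~\ref{def:cncf-palette}, each copy on the edge $\{c_i,c_j\}$ is joined completely to \emph{both} endpoints, so it lies inside $N[c_i]$ \emph{and} inside $N[c_j]$, and property~2 forces the copy to avoid both colors $i$ and $j$. That leaves only $q-2$ usable colors, while $\chi^*_{\text{\sf CN}}(G_{q-1})=q-1$ by Lemma~\ref{lem:palette:gK}, so no internal CNCF-coloring of the copy with the available colors exists. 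Your fallback of omitting $i$ in one copy and $j$ in the other does not help, since both copies sit in both closed neighborhoods. A related factual error is your claim that an interior vertex ``already sees two neighbors of each endpoint color'': the two copies on an edge are independent and non-adjacent, so an interior vertex sees exactly \emph{one} vertex of color $f(u)$ (namely $u$ itself) and one of color $f(v)$ outside its own copy. This misreading is what pushes you toward resolving the conflict-free condition internally, which is the wrong direction.

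The paper's resolution is to give up on CNCF-coloring the copies internally: color each copy of $G_{q-1}$ attached to $\{u,v\}$ \emph{monochromatically} with a single color $x$ satisfying $f(u)\neq x\neq f(v)$, which exists because $q\geq 3$. Then every interior vertex has $u$ (and $v$) as a uniquely colored vertex in its closed neighborhood, each $c_i$ and $c_i'$ sees no other vertex of color $i$, and both properties follow immediately. The point you missed is that Lemma~\ref{lem:palette:gK} is needed only for the lower-bound direction (Lemma~\ref{lem:CNCF:palette-is-palette}); the existence direction requires no internal conflict-freeness of the copies at all.
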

\begin{proof}
We start by defining $f(c_i) := f(c_i') := i$ for all $i\in[q]$. Let $D:= \{c_i,c_i'\mid i \in [q]\}$, observe that the color of all vertices in $D$ has now been defined. All vertices in $V(C_q) \setminus D$ induce distinct copies of $G_{q-1}$. Consider an arbitrary copy of $G_{q-1}$ in $C_q$, and suppose it was added to the palette gadget for the edge $\{d,d'\}$ with $d,d' \in D$. Color all vertices of this $G_{q-1}$ with a color $x \in [q]$ such that $f(d) \neq x \neq f(d')$. Observe that such a color exists since $q \geq 3$.

Both requirements are satisfied by the definition of $f$, it remains to show that $f$ is a $q$-CNCF-coloring of $C_q$. Vertices $c_i$ and $c_i'$ have color $i$ and have no neighbors of color $i$, and are thereby properly CNCF-colored. Vertices in $V(G)\setminus D$ are distinct copies of $G_{q-1}$. It is easy to verify that in $N(G_{q-1})$ there are two vertices with a unique color, corresponding to the edge for which it was added. Since these colors are not used to color $G_{q-1}$, the result follows.
\qed\end{proof}

Using the gadget introduced above, we now prove the kernelization lower bound.

\begin{lemma}\label{lem:3-CNCF-kernel-LB}
For any $q\geq 3$, $q$-\CNCF parameterized by vertex cover size does not have a polynomial kernel, unless \containment.
\end{lemma}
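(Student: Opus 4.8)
The plan is to give an \textsc{or}-cross-composition from \textsc{Clique} to $q$-\CNCF parameterized by vertex cover size, closely mirroring the structure of the $2$-\ONCF lower bound in Lemma~\ref{lem:ONCF-kernel-LB}. The palette-gadget $C_q$ (Definition~\ref{def:cncf-palette}), whose behaviour is pinned down by Lemmas~\ref{lem:CNCF:palette-is-palette} and~\ref{lem:CNCF:palette-colorable}, will play the role that the vertices $R,R',B,B'$ played in the two-colour construction: it supplies $q$ ``canonical'' colours that are forced to be pairwise distinct on $c_1,\dots,c_q$ and that can be used to pin down the colours of auxiliary vertices via adjacency to appropriate $c_i$'s. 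First I would take the $t$ equivalent \textsc{Clique} instances $X_1,\dots,X_t$ (all on $n$ vertices, all asking for a clique of size $k$) and build a single graph $G$ containing one palette-gadget $C$, a selector block analogous to the $Y$/$a$ vertices choosing which instance is the yes-instance, a selection block analogous to $P$/$H$/$V$ choosing $k$ vertices of the chosen instance, and, for each pair $(i,i',j,j')$, a verification vertex $s^{i,i'}_{j,j'}$ together with a gadget that forbids selecting a non-edge. Because we now have $q$ colours available rather than two, the degree-two ``forcing'' attachments used before are replaced by attachments to the palette vertices $c_1,\dots,c_q$: a vertex made adjacent to all of $c_1,\dots,c_q$ except $c_\ell$ (using copies of $G_{q-1}$ where necessary to neutralise closed-neighbourhood uniqueness) can be forced to take a prescribed colour, exactly as in the $y^b_\ell$ construction of Lemma~\ref{lem:q-ONCF-kernel-LB}.

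The key steps, in order, would be: (1) construct $G$ as above, using $C$ as the palette and copies of $G_{q-1}$ to ``blind'' the closed neighbourhoods of palette-adjacent vertices so that they behave like the open-neighbourhood forcings from the \ONCF\ case; (2) prove a sequence of claims fixing the forced colours — that the $c_i$ receive distinct colours (immediate from Lemma~\ref{lem:CNCF:palette-is-palette}), that exactly one selector is ``activated'' picking an index $\ell$, and that the selection block forces a choice of distinct $j_1,\dots,j_k$; (3) prove the gadget analogue of Claim~\ref{claim:ONCF:gadgets}, namely that if two selected vertices $g_1,g_2$ carry the ``clique-member'' colour then the gadget forces a particular colour on $g_3$, which in turn overloads the closed neighbourhood of $s^{i,i'}_{j_i,j_{i'}}$ whenever $\{j_i,j_{i'}\}$ is a non-edge of $X_\ell$; (4) prove the two directions (if some $X_\ell$ has a $k$-clique then $G$ is $q$-CNCF-colourable, using Lemma~\ref{lem:CNCF:palette-colorable} to colour the palette and then extending; and conversely a $q$-CNCF-colouring yields a $k$-clique in the selected instance); and (5) bound the vertex cover: the selector vertices $Y$ form (or can be made to sit inside) an independent set, and every other part is either of constant size per pair $(i,i',j,j')$ or bounded by $\Oh(\mathrm{poly}(n,k))$ plus the constant-size palette $C$, so a vertex cover of size $\Oh(n^2k^2)$ exists, which is polynomial in $\max_i|x_i|+\log t$ as required by Definition~\ref{def:cross-composition}.

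The main obstacle I expect is adapting the ONCF-gadget and the palette-based forcing to the \emph{closed}-neighbourhood setting. In the \ONCF\ construction the forcing relied on open neighbourhoods, and the ONCF-gadget of Definition~\ref{def:oncf-gadget} is tailored to two colours; for CNCF with $q$ colours I must design (or reuse, via $G_{q-1}$ copies attached as in the palette construction) a gadget whose closed neighbourhoods each contain a forced unique colour while still propagating the ``all clique-colour $\Rightarrow$ forbidden colour'' implication. The delicate point is ensuring that \emph{every} closed neighbourhood in the whole of $G$ — including those of the attached $G_{q-1}$ copies and the palette interior — is conflict-free simultaneously; Lemma~\ref{lem:palette:gK} and Lemma~\ref{lem:CNCF:palette-is-palette} are the tools that make this tractable, since they let me argue locally that each $G_{q-1}$ copy can absorb a single ``spare'' colour without disturbing its neighbours. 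Once the forcing gadget is correctly set up, the combinatorial heart of the argument (the clique $\leftrightarrow$ colouring equivalence) should transfer almost verbatim from Lemma~\ref{lem:ONCF-kernel-LB}, and the vertex-cover bound is routine. Finally, as in Lemma~\ref{lem:q-ONCF-kernel-LB}, one could alternatively prove the $q=3$ case directly and lift to larger $q$ by a polynomial parameter transformation that pads the palette, but the cross-composition route above already yields all $q\geq 3$ uniformly provided the palette-based forcing is carried out with parameter $q$.
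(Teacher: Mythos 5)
Your proposal follows essentially the same route as the paper: an \textsc{or}-cross-composition from \textsc{Clique} with the palette-gadget $C_q$ supplying $q$ forced colours, a selector block ($Y$ and $a$), a selection block ($P$, $H$, $V$), verification vertices $s^{i,i'}_{j,j'}$, the trick of joining every non-palette vertex to $c_i,c_i'$ for $i>3$ to kill the extra colours, and the vertex cover $V(G)\setminus Y$ of size $\Oh(n^2k^2)$ plus the constant-size palette. The one obstacle you flag --- adapting the ONCF-gadget's propagation to the closed-neighbourhood setting --- turns out not to arise: in the paper's construction there is no analogue of the $g_1,\dots,g_{10}$ propagation chain at all, because $s^{i,i'}_{j,j'}$ is made \emph{directly} adjacent to $p_{i,j}$ and $p_{i',j'}$, and four auxiliary vertices $f,g,b,b'$ (pinned via the palette) saturate the remaining colours of $N[s^{i,i'}_{j,j'}]$; a non-edge of the selected instance then puts two vertices of every colour into that closed neighbourhood, which is an immediate contradiction. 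So the closed-neighbourhood setting actually makes the verification step simpler than in the \ONCF{} case, not harder, and your plan goes through once you replace the propagation gadget by this direct saturation argument.
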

\begin{proof}
To prove this theorem, we will give a cross-composition starting from \textsc{Clique} that is very similar to the one given in Lemma \ref{lem:ONCF-kernel-LB}.

\begin{figure}[t]
\centering
\includegraphics{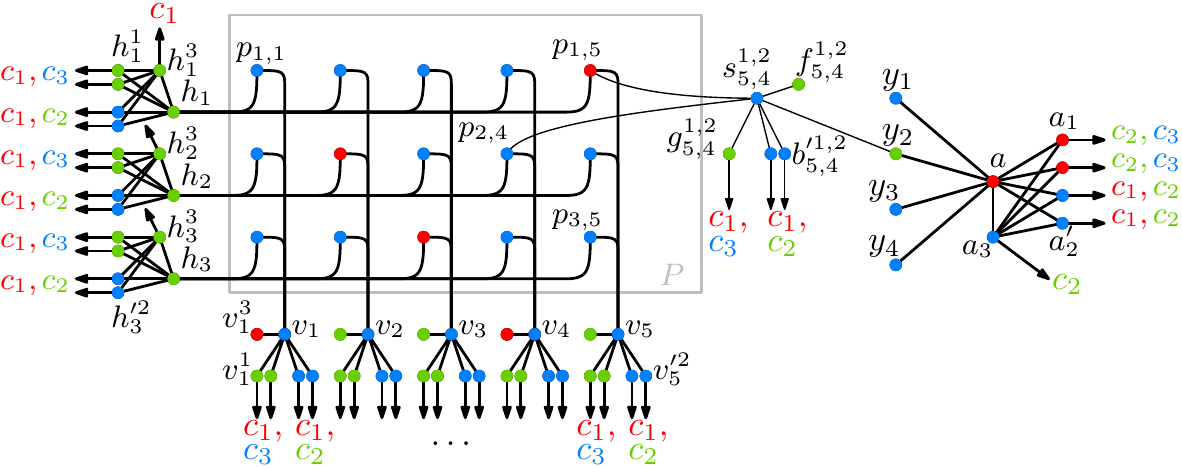}
\caption{A sketch of the constructed graph $G$ for $q=3$, $k = 3$, $n = 5$, and $t = 4$. All vertices created in Steps \ref{CNCF:step:S} and \ref{CNCF:step:Sgadgets} are omitted, except those created with $i = 1$, $i' = 2$, $j = 5$ and $j' = 4$, assuming edge $\{4,5\}$ is not present in instance $X_2$. The palette $C$ is omitted, edges to the palette are drawn as arrows.}
\label{fig:cross-composition-CNCF}
\end{figure}
We define the same polynomial equivalence relation; let two instances of \textsc{Clique} be equivalent if the graphs have the same number of vertices and they ask for a clique of the same size. Suppose we are given $t$ instances of clique that are equivalent under this relation, labeled $X_1,\ldots,X_t$. Let every instance have $n$ vertices and ask for a clique of size $k$. We enumerate the vertices in each instance arbitrarily as $1,\ldots,n$. We create an instance $G$ for $q$-\CNCF by the following steps, refer to Figure \ref{fig:cross-composition-CNCF} for a sketch of $G$.
\begin{enumerate}
\item \label{CNCF:step:palette} Create a palette-gadget $C_q$ with distinguished vertices $c_1,\ldots,c_q$ and $c_1',\ldots,c_q'$ by Definition \ref{def:cncf-palette}.
\item \label{CNCF:step:Y} Create $t$ vertices $y_1,\ldots,y_t$ and let $Y := \{y_\ell \mid \ell \in [t]\}$. The idea is that exactly one of these vertices $y_\ell$ will receive color $2$ in any $q$-CNCF-coloring, and this indicates that $X_\ell$ is a yes-instance for clique.
\item \label{CNCF:step:a} Add a vertex $a$ and connect $a$ to $y_\ell$ for all $\ell \in [t]$. Create vertices $a_1$ and $a_1'$ and connect both of these to vertices $c_2$ and $c_3$ in the palette-gadget. Furthermore, create vertices $a_2$ and $a_2'$ and connect them to vertices $c_1$ and $c_2$.
    Finally, create vertex $a_3$, connect $a_3$ to $c_2$, $a_1$, $a_1'$, $a_2$, and $a_2'$. Connect $a$ to $a_1$, $a_1'$, $a_2$, $a_2'$, and $a_3$.
    The idea is that in any $q$-CNCF-coloring, vertices $a_1$ and $a_1'$ receive the color of $c_1$, $a_2$ and $a_2'$ receive the color of $c_3$ and that $a_3$ has exactly one neighbor of color $2$ and two neighbors of both remaining colors, implying that the color of $a$ cannot be $2$.
\item \label{CNCF:step:P} Add vertices $p_{i,j}$ for all $i \in [k]$, $j \in [n]$. Let $P$ be the set containing all of these vertices. The idea is that vertices in $P$ receive colors $1$ and $3$, such that for every $i$ there is exactly one vertex $p_{i,j}$ of color $1$. The vertices of color $1$ will correspond to the vertices that form a clique in one of the input instances.
\item \label{CNCF:step:H} Add a vertex $h_i$ for all $i \in [k]$ and connect $h_i$ to $p_{i,j}$ for all $j \in [n]$. Add vertices $h^1_i$, $h'^1_i$, $h^2_i$, $h'^2_i$ and $h^3_i$.  Add edges $\{c_1,h^1_i\}$, $\{c_1,h'^1_i\}$, $\{c_3,h^1_i\}$, $\{c_3,h'^1_i\}$,  $\{c_1,h^2_i\}$, $\{c_1,h'^2_i\}$, $\{c_2,h^2_i\}$,  and $\{c_2,h'^2_i\}$. For all $i \in [k]$, add a vertex $h^3_i$ and connect it to $c_1$. Connect $h^3_i$ to $h^1_i$, $h'^1_i$, $h^2_i$, and $h'^2_i$. Finally, connect $h_i$ to $h^1_i$, $h'^1_i$, $h^2_i$, $h'^2_i$, and $h^3_i$. Let $H$ be the set of all vertices created in this step. These vertices will ensure that for each $i$, there is exactly one vertex $p_{i,j}$ of color $1$.
\item \label{CNCF:step:V} Add a vertex $v_j$ for all $j \in [n]$ and connect $v_j$ to $p_{i,j}$ for all $i \in [k]$. Add vertices $v^1_j$ and $v'^1_j$ and connect them to $c_1$ and $c_3$. Add vertices $v^2_j$ and $v'^2_j$ and connect these to $c_1$ and $c_2$. Add a vertex $v^3_j$. Finally, connect vertex $v_j$ to $v^1_j$, $v'^1_j$, $v^2_j$, $v'^2_j$, and $v^3_j$. The vertices added in this step ensure that there cannot be $i,i'\in [k]$ such that both $p_{i,j}$ and $p_{i',j}$ receive color $1$ for some $j \in [n]$.
\item \label{CNCF:step:S} For each $i,i' \in [k]$ and $j,j' \in [n]$, add vertex $s^{i,i'}_{j,j'}$, let the set containing all these vertices be $S$. Connect $s^{i,i'}_{j,j'}$ to $p_{i,j}$ and $p_{i',j'}$. Furthermore, connect $s^{i,i'}_{j,j'}$ to $y_\ell$ whenever $\{j,j'\}$ is \emph{not} an edge in instance $X_\ell$. These vertices are used to verify whether the vertices selected by $P$ indeed form a clique in the selected input instance.
\item \label{CNCF:step:Sgadgets} For each $i,i' \in [k]$ and $j,j' \in [n]$, add vertices $f^{i,i'}_{j,j'}$, $g^{i,i'}_{j,j'}$, $b^{i,i'}_{j,j'}$, and $b'^{i,i'}_{j,j'}$, and connect all these vertices to $s^{i,i'}_{j,j'}$. Connect $g^{i,i'}_{j,j'}$ to $c_1$ and $c_3$, connect $b^{i,i'}_{j,j'}$ to $c_1$ and $c_2$, and finally connect $b'^{i,i'}_{j,j'}$ to $c_1$ and $c_2$.
\item \label{CNCF:step:connect-to-palette} For every vertex in $v \in V(G) \setminus V(C_q)$, add the edges $\{v,c_i\}$ and $\{v,c_i'\}$ for all $3 < i \leq q$. Thus, we connect every non-palette vertex in $G$ to all but the first  three colors from the palette. This step ensures that colors $i > 3$ are not used to color $V(G)\setminus C_q$.
\end{enumerate}

It follows from Lemma \ref{lem:CNCF:palette-is-palette} that all $c_i$ receive distinct colors. Therefore, we will from now on assume that $c(c_i) = i$ for any coloring $c$. 
Furthermore, we observe that for $i \in [q]$, vertex $c_i$ is connected to $c'_j$ and $c_j$ for all $j \in [q]\setminus \{i\}$. It follows that vertex $c_i$ has its own color (if any) as its CNCF-color, since it is connected to two vertices of all remaining colors.

In the proofs of the remaining claims, we will regularly use that any non-palette vertex in $G$ has two neighbors of color $i$ for all $i > 3$.

\begin{myclaim}\label{claim:CNCF:select-y}
For any $q$-CNCF-coloring $c$ of $G$, there exists exactly one vertex $y^* \in Y$ such that $c(y^*) = 2$.
\end{myclaim}
\begin{claimproof}
It follows from the observation above, that $c(a_1) = c(a'_1) = 1$ and $c(a_2) = c(a'_2) = 2$. Furthermore, $c(a_3) \neq 2$.  Thereby, $N[a_3]$ contains vertex $a$, together with one vertex of color $2$ and two vertices of color $i$ for all $i \neq 2$, implying $c(a) \neq 2$. It follows that $N[a]$ contains at least two vertices of color $1$ and two of color $3$ and that $N[a] \setminus Y $ contains no vertices of color $2$. Thereby, $N[a] \cap Y = Y$ must have exactly one vertex of color $2$.
\qed\end{claimproof}

\begin{myclaim}\label{claim:CNCF:select-clique}
For any $3$-CNCF-coloring $c$ of $G$, there exist distinct $j_1,\ldots,j_k$ such that $c(p_{i,j_i}) = 1$ and for all other $p \in P$, $c(p) \neq 1$.
\end{myclaim}
\begin{claimproof}
We start by showing that for each $i \in [k]$, there exists $j_i \in [n]$ such that $c(p_{i,j_i}) = 1$. We will then show that these $j_i$ are indeed distinct.

Let $i\in[k]$, then $\{h^1_i,h'^1_i,h^2_i,h'^2_i\}\subseteq N[h_i]$ and thus $N[h_i]$ contains two vertices of colors $2$ and $3$. Since $h_i$ is connected to $h_i^3$ and $N[h_i^3]$ contains at least one vertex of color $1$, and two vertices of both color $2$ and $3$,  it follows that $c(h_i) \neq 1$. Thereby, the CNCF-color for $h_i$ is $1$ and thus there exists a unique vertex in $\{p_{i,j}\mid j \in [n]\}$ that receives color $1$.

It remains to show that these $j_i$ are indeed distinct. We do this by showing that there cannot be vertices $p_{i,j}$ and $p_{i',j}$ such that $c(p_{i,j}) = c(p_{i',j}) = 1$. Suppose for contradiction that there are $j \in [n]$ and $i,i' \in [k]$ such that $c(p_{i,j}) = c(p_{i',j}) = 1$. But then $N[v_j]$ contains  vertices $v_j^1$ and $v'^1_j$ that have color $2$,  vertices $v^2_j$
 and $v'^2_j$ of color $3$, and the aforementioned two  vertices of color $1$. Since it furthermore contains two vertices of color $i$ for all $i \geq 4$, this contradicts that $c$ is a CNCF-coloring for $G$.\qed
 \end{claimproof}

\begin{myclaim}\label{claim:CNCF:if}
If there exists $\ell \in [t]$ such that $X_\ell$ has a clique of size $k$, then $G$ is $q$-CNCF-colorable.
\end{myclaim}
\begin{claimproof}
Take $\ell$ such that $X_\ell$ is a yes-instance for \textsc{Clique} and let $j_1,\ldots,j_k$ be such that vertices $\{j_1,\ldots,j_k\}$ form this clique in instance $X_\ell$. We give a coloring $c:V(G) \rightarrow [q]$ of $G$. We start by showing how to color the vertices defined in each step of the construction, this coloring is also depicted in Figure \ref{fig:cross-composition-CNCF}.
\begin{enumerate}
\item We start by coloring the palette $C$ as in Lemma \ref{lem:CNCF:palette-colorable}, such that $c(c_i) := c(c_i') := i$.
\item Let $c(y_\ell) := 2$ and $c(y) := 3$ for all other vertices $y \in Y$.
\item Let $c(a) := c(a_1):=c(a'_1) := 1$ and $c(a_2) := c(a'_2) := c(a_3) := 3$.
\item For all $i\in [k]$, let $c(p_{i,j_i}) := 1$. For all other $p \in P$, let $c(p) := 3$.
\item For all $i \in [k]$, let $c(h_i) := c(h^1_i) := c(h'^1_i) := c(h^3_i) := 2$ and let $c(h^2_i) := c(h'^2_i) := 3$.
\item For all $j\in[n]$, let $c(v_j) := 3$. For all $j\in[n]$, let $c(v^1_j):=c(v'^1_j) := 2$ and let $c(v^2_j):=c(v'^2_j) := 3$.  Let $c(v^3_j) := 2$ if there exists $i \in [k]$ such that $j_i = j$. Let $c(v^3_j) := 1$ otherwise.
\item For $i,i' \in [k]$ and $j,j' \in [n]$, let $c(s^{i,i'}_{j,j'}):=3$.
\item For $i,i' \in [k]$ and $j,j' \in [n]$, let $c(g^{i,i'}_{j,j'}):=2$, $c(b^{i,i'}_{j,j'}):=c(b'^{i,i'}_{j,j'}):=3$. Finally, if $s^{i,i'}_{j,j'}$ at this point has no neighbor of color $1$, let $c(f^{i,i'}_{j,j'}):=1$. Furthermore, if $s^{i,i'}_{j,j'}$ is not connected to $y_\ell$ (meaning $\{j,j'\}$ is an edge in $X_\ell$), define $c(f^{i,i'}_{j,j'}):=1$. Otherwise, let $c(f^{i,i'}_{j,j'}):=2$.
\end{enumerate}
It remains to show that this indeed gives a $q$-CNCF-coloring of $G$. We verify this for all vertices. $C$ is CNCF-colored by the fact that $c_i$ and $c_i'$ are colored by their own color and not connected to any other vertex of color $i$. Vertices in $Y$ are CNCF-colored by vertex $a$ which has color $1$. $N[a]$ contains exactly one vertex of color $2$, namely $y_{\ell}$. $N[a_1],N[a_1'],N[a_2],N[a_2']$, and $N[a_3]$ all contain exactly one vertex of color $2$, namely $c_2$. For $p \in P$, $N[p]$ contains exactly one vertex $h_i$ of color $2$ and no other vertices of color $2$. For all $i \in [k]$, $N[h_i]$ contains exactly one vertex $p_{i,j_i}$ of color $1$ and not other vertices of color $1$. Vertices $h_i^1,h'^1_i,h_i^2,h'^2_i$, and $h_i^3$ have $c_1$ as their unique neighbor with color $1$. Similarly, for all $j\in[n]$, the vertex $v_j$ has exactly one neighbor of color $1$ from the set $\{p_{i,j} \mid i \in [k]\}\cup \{v_i^3\}$. Vertices $v_j^1,v'^1_j,v^2_j$, and $v'^2_j$ have $c_1$ as their only neighbor of color $1$. Vertex $v_j^3$ has a distinct color from its only neighbor $v_j$ and thereby its closed neighborhood is CNCF-colored. Since $c(f_{j,j'}^{i,i'}) \in \{1,2\}$ and $c(s_{j,j'}^{i,i'}) = 3$ for all $i,i'\in[k]$, $j,j' \in [n]$, vertex $f_{j,j'}^{i,i'}$ receives a different color than its only neighbor. Vertices $g^{i,i'}_{j,j'}$, $b^{i,i'}_{j,j'}$, and $b'^{i,i'}_{j,j'}$ all have a unique neighbor of color $1$, namely vertex $c_1$. Finally we check the closed neighborhood of vertices  $s^{i,i'}_{j,j'}$ for $i,i'\in[k]$, $j,j'\in[n]$. If $s^{i,i'}_{j,j'}$ is not connected to $y_\ell$, it is ensured that it has exactly one neighbor of color $2$, namely vertex $g^{i,i'}_{j,j'}$. Otherwise, observe that $c(p_{i,j}) \neq 1$ or $c(p_{i',j'}) \neq 1$ as $\{j,j'\}$ is not an edge in $X_\ell$. The choice of coloring for $f^{i,i'}_{j,j'}$ ensures that in this case, $s^{i,i'}_{j,j'}$ has a unique neighbor of color $1$.
\qed\end{claimproof}

\begin{myclaim}\label{claim:CNCF:iff}
If $G$ has a $q$-CNCF-coloring, then there exists $\ell \in [t]$ such that $X_\ell$ has a clique of size $k$.
\end{myclaim}
\begin{claimproof}
Let $c$ be a CNCF-coloring of $G$. It follows from Claim \ref{claim:CNCF:select-y} that there exists a vertex $y \in Y$ with $c(y) = 2$. Let $\ell$ be such that $c(y_\ell) = 2$. We show that $X_\ell$ has a clique of size $k$. By Claim \ref{claim:CNCF:select-clique}, there exist distinct $j_1,\ldots,j_k$ such that $c(p_{i,j_i}) = 1$. We show that the vertices $j_1,\ldots,j_k$ form the desired clique in $X_\ell$.

Suppose for contradiction that there are distinct $i,i' \in [k]$ such that $\{j_i,j_{i'}\}$ is not an edge in instance $X_\ell$. We will show that $N[s^{i,i'}_{j_i,i_{i'}}]$ is not properly CNCF-colored. First of all, $N[s^{i,i'}_{j_i,i_{i'}}]$ contains the two vertices $p_{i,j_i}$ and
$p_{i',j_{i'}}$ with $c(p_{i,j_i})=c(p_{i',j_{i'}})=1$. Furthermore it contains the two vertices $b^{i,i'}_{j_i,i_{i'}}$ and $b'^{i,i'}_{j_i,i_{i'}}$ that have color $3$, and finally it contains two vertices of color $2$, namely $g^{i,i'}_{j_i,i_{i'}}$ and $y_\ell$. Furthermore, $N[s^{i,i'}_{j_i,i_{i'}}]$ contains two vertices of color $i$ for all $i > 3$, by Step \ref{CNCF:step:connect-to-palette} of the construction. This however contradicts that $c$ is a CNCF-coloring of $G$, and thus we conclude that $j_1,\ldots,j_k$ form a clique of size $k$ in instance $X_\ell$.
\qed\end{claimproof}

It follows from Claims \ref{claim:CNCF:if} and \ref{claim:CNCF:iff} that $G$ has a $q$-CNCF-coloring if and only if one of the given input instances was a yes-instance for \textsc{Clique}. It remains to bound the size of a vertex cover in $G$, to conclude the cross-composition. It is easy to verify that $V(G) \setminus Y$ is a vertex cover for $G$, since $Y$ is an independent set. Thereby the size of a vertex cover in $G$ is at most $|V(G)\setminus Y| = \Oh(n^2k^2)+f(q)$, where $f(q)$ is the size of palette-gadget $C_q$. As this is properly bounded for a cross-composition, the theorem statement follows from Theorem~\ref{thm:cross-composition-implies-LB}.
\qed\end{proof}

\subsection{Generalized kernel for $2$-CNCF-Coloring}
\label{secappen:poly-kernel-2CNCF}
In this part we prove Theorem~\ref{thm:genker}, by obtaining a polynomial generalized kernel for $2$-\CNCF parameterized by vertex cover size. This result is in contrast to the kernelization results we obtain for $q$-\CNCF for $q \geq 3$ as well as $q$-\ONCF for $q \geq 2$. We will start by transforming an instance of $2$-\CNCF to an equivalent instance of another problem, namely \drootCSP. We will then carefully rephrase the \drootCSP instance such that it uses only a limited number of variables, such that we can use a known kernelization result for \drootCSP to obtain our desired compression. We start by introducing the relevant definitions.

Define \drootCSP over a field $F$ as follows \cite{JansenP18}.

\defproblem{\textsc{\drootCSP}}
{A list $L$ of polynomial equalities over variables $V = \{x_1, \ldots, x_n\}$. An equality is of the form $f(x_1, \ldots, x_n) = 0$, where $f$ is a multivariate polynomial over $F$ of degree at most~$d$.}
{Does there exist an assignment of the variables $\tau \colon V \to \{0,1\}$ satisfying all equalities (over $F$) in $L$?
}

A field $F$ is said to be \emph{efficient} if both the field operations and Gaussian elimination can be done in polynomial time in the size of a reasonable input encoding. In particular, $\mathbb{Q}$ is an efficient field by this definition. The following theorem was shown by Jansen and Pieterse.

\begin{theorem}[{\cite[Theorem 5]{JansenP18}}]\label{thm:subset_kernel_Q}
There is a polynomial-time algorithm that, given an instance $(L,V)$ of \drootCSP over an efficient field $F$, outputs an equivalent instance~$(L',V)$ with at most $n^d + 1$ constraints such that $L' \subseteq L$.
\end{theorem}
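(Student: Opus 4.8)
The plan is to \emph{linearize} the constraints and then apply a rank argument over the vector space of multilinear polynomials. Since we only care about assignments $\tau \colon V \to \{0,1\}$, every variable satisfies $x_i^2 = x_i$ under any such assignment. Hence, for each polynomial $f$ of degree at most $d$ I would first reduce $f$ modulo the relations $x_i^2 - x_i$ to obtain a \emph{multilinear} polynomial $\hat f$ that agrees with $f$ on all of $\{0,1\}^n$. Every such $\hat f$ is a linear combination of the multilinear monomials $\prod_{i \in I} x_i$ with $|I| \le d$, and the number of such monomials is $\sum_{i=0}^{d}\binom{n}{i} \le n^d + 1$. This is the quantity that will bound the number of retained constraints.

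Next I would encode each constraint $f_j = 0$ of $L$ by the coefficient vector $\mathbf{v}_j \in F^{D}$ of its multilinear reduction $\hat f_j$, where $D = \sum_{i=0}^{d}\binom{n}{i}$ is the dimension of the space of multilinear polynomials of degree at most $d$. These vectors span a subspace of $F^D$, so by running Gaussian elimination over the efficient field $F$ I can compute a maximal linearly independent subset of $\{\mathbf{v}_j\}$; let $L' \subseteq L$ be the set of constraints corresponding to this subset. Then $|L'| \le D \le n^d + 1$ and $L' \subseteq L$ as required, and the whole computation is polynomial because $D$ is polynomially bounded for constant $d$, and both reducing polynomials and Gaussian elimination are polynomial-time over $F$.

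It remains to argue equivalence of $(L,V)$ and $(L',V)$. The implication that every satisfying assignment of $L$ satisfies $L'$ is immediate from $L' \subseteq L$. For the converse, fix any constraint $f_j = 0$ in $L$. Since $L'$ was chosen as a basis of the span of all coefficient vectors, I can write $\mathbf{v}_j = \sum_{k} \lambda_k \mathbf{v}_{i_k}$ with the $\mathbf{v}_{i_k}$ the vectors of $L'$ and $\lambda_k \in F$. Because the $\mathbf{v}$'s are exactly the coefficient vectors in the monomial basis, this is a formal identity of multilinear polynomials, namely $\hat f_j = \sum_k \lambda_k \hat f_{i_k}$. Hence for any $\tau$ satisfying $L'$ we have $\hat f_{i_k}(\tau) = 0$ for all $k$, so $f_j(\tau) = \hat f_j(\tau) = \sum_k \lambda_k \hat f_{i_k}(\tau) = 0$; thus $\tau$ satisfies $f_j = 0$ as well.

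The step I expect to require the most care is the passage from the linear dependence of coefficient vectors to a pointwise identity of evaluations: it is essential to take the dependence in the \emph{multilinear} basis (after applying $x_i^2 = x_i$), since a linear dependence among the original polynomials need not hold, whereas the reduced ones capture exactly the evaluation behaviour on $\{0,1\}^n$. Verifying the dimension bound $\sum_{i=0}^{d}\binom{n}{i} \le n^d + 1$ is a routine estimate, and confirming that Gaussian elimination yields a genuine subset $L' \subseteq L$ (rather than new linear combinations of constraints) just requires selecting a maximal independent subset among the given vectors rather than forming fresh ones.
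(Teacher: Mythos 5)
The paper does not prove this statement itself---it is quoted verbatim as an external result from Jansen and Pieterse~\cite{JansenP18}---but your reconstruction is correct and follows exactly the argument of that reference: multilinearize each constraint via $x_i^2 = x_i$ (valid since only $\{0,1\}$-assignments matter), bound the dimension of the space of multilinear polynomials of degree at most $d$ by $\sum_{i=0}^{d}\binom{n}{i} \le n^d+1$, and retain a maximal linearly independent subset of the coefficient vectors computed by Gaussian elimination over the efficient field $F$, so that every discarded constraint is a formal linear combination of retained ones and hence vanishes on any common zero. Your emphasis on taking the linear dependence in the multilinear basis is exactly the right point of care, and the proof is complete as written.
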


Using the theorem introduced above, we can now prove Theorem~\ref{thm:genker}.

\begin{proof}[Proof of Theorem~\ref{thm:genker}]
Given an input instance $G$ with vertex cover $S$ of size $k$, we start by preprocessing $G$. For each set $X \subseteq S$ with $|X| \leq 2$, mark  $3$ vertices in $v \in G \setminus S$ with $N(v) = X$ (if there do not exist $3$ such vertices, simply mark all). Let $S' \subseteq V(G) \setminus S$ be the set of all marked vertices. Remove all $w \in V(G)\setminus (S \cup S')$ with $deg(w) \leq 2$ from $G$. Let the resulting graph be $G'$.
\begin{myclaim}\label{claim:kernel:CNCF:preprocess}
$G'$ is $2$-CNCF-colorable if and only if $G$ is $2$-CNCF-colorable.
\end{myclaim}

\begin{claimproof}
In one direction, suppose $G'$ has a $2$-CNCF coloring $c$ using colors $\{r,b\}$. Consider a vertex $w \in V(G) \setminus V(G')$. Let $X_w \subseteq S$ be the neighborhood of $w$. Note that $\vert X_w \vert$ is at most $2$. Consider $N(X_w) \cap S'$. Since $w$ was deleted, there are $3$ vertices in $N(X_w) \cap S'$. Consider the color from $\{r,b\}$ that appears in majority on the vertices of $N(X_w) \cap S'$. If we color $w$ with the same color, it is easy to verify that this  extension of $c$ to $G$ is a $2$-CNCF coloring of $G$.

In the reverse direction, suppose $G$ has a $2$-CNCF coloring $c$ using colors $\{r,b\}$. We describe a new coloring $c'$ for $G$ as follows. Consider a subset $X \subseteq S$ of size at most $2$ and let $N$ be the set of vertices in $G \setminus S$ that have $X$ as their neighborhood. If $\vert N \vert > 3$ and $N \setminus S'$ has a vertex $w$ that is uniquely colored in the set $N$, then we arbitrarily choose a vertex $w' \in N \cap S'$. We define $c'(w')=c(w)$ and $c'(w)=c(w')$. All other vertices have the same color in $c$ and $c'$. It is easy to verify that $c'$ is also a $2$-CNCF coloring of $G$ and the restriction of $c'$ to $G'$ is a $2$-CNCF coloring of $G'$.\claimqed
\end{claimproof}

We continue by creating an instance of  \rootCSP{2} that is satisfiable if and only if $G'$ is $2$-CNCF-colorable. Let $V := \{r_v,b_v \mid v \in V(G)\}$ be the variable set. We create $L$ over $\mathbb{Q}$ as follows.
\begin{enumerate}
\item \label{step:equations:red-or-blue} For each $v \in V(G')$, add the constraint $r_v + b_v - 1 = 0$ to $L$.
\item \label{step:equations:per-vertex} For all $v \in V(G')$, add the constraint
$(-1 +\sum_{u \in N[v]} r_v )\cdot (-1+\sum_{u \in N[v]} b_v)=0.$
\item \label{step:equations:ensure-nbhood} For each $v \in V(G')\setminus (S \cup S')$ of degree $d_v = |N(v)|$ add the constraint
    \[(\sum_{u \in N(v)} r_u)(-1 + \sum_{u \in N(v)} r_u)(-(d_v-1) + \sum_{u \in N(v)} r_u)(-d_v + \sum_{u \in N(v)} r_u) = 0.\]

    Note that such a constraint is a quadratic polynomial.
\end{enumerate}
Intuitively, the first constraint ensures that every vertex is either \red or \blue. The second constraint ensures that in the closed neighborhood of every vertex, exactly one vertex is \red or exactly one is \blue. The third constraint is seemingly redundant, saying that the open neighborhood of every vertex outside the vertex cover does not have two \red or two \blue vertices, which is clearly forbidden.  The requirement for these last constraints is made clear in the proof of Claim~\ref{lem:Lprime-is-L-2}.

We show that this results in an instance that is equivalent to the original input instance, in the following sense.
\begin{myclaim}\label{lem:2CNCF-to-rootCSP}
$(L,V)$ is a yes-instance of \rootCSP{2} if and only if $G'$ is $2$-CNCF-colorable.
\end{myclaim}

\begin{claimproof}
Suppose $\tau\colon V \rightarrow \{0,1\}$ is a satisfying assignment for $(L,V)$. We show how to define a $2$-CNCF coloring $c\colon V \rightarrow \{\red,\blue\}$ for $G'$. Let $c(v) := \red$ if $\tau(r_v) = 1$ and let $c(v) := \blue$ if $\tau(b_v) = 1$. Note that this defines exactly one color for each vertex, as by Step \ref{step:equations:red-or-blue}, $r_v + b_v = 1$ and we used at most two distinct colors. It remains to show that this is indeed a CNCF-coloring. Let $v \in V(G')$ be an arbitrary vertex, we show that $N[v]$ is conflict-free colored. It follows from the equations added in Step \ref{step:equations:per-vertex}, that one of the following holds.
\begin{itemize}
\item $(\sum_{u \in N[v]} \tau(r_v) = 1)$. In this case, $N[v]$ contains exactly one vertex $u \in N[v]$ with $c(u) = \red$, showing that $N[v]$ is conflict-free colored.
\item $(\sum_{u \in N[v]} b_v = 1)$. In this case, $N[v]$ contains exactly one vertex $u \in N[v]$ with $c(u) = \blue$, showing that $N[v]$ is conflict-free colored.
\end{itemize}
This concludes this direction of the proof.

For the other direction, suppose $G'$ has $2$-CNCF-coloring $c$, we show how to define a satisfying assignment $\tau$ for $(L,V)$. For $v \in V(G')$, let $\tau(r_v) := 1$ if $c(v) = \red$ and let $\tau(r_v) := 0$ otherwise. Similarly,  $\tau(b_v) := 1$ if $c(v) = \blue$ and let $\tau(r_v) := 0$ otherwise. Observe that by this definition, $\tau(r_v) = 1-\tau(b_v)$ for all $v \in V(G)$, showing that we satisfy all equations introduced in Step~\ref{step:equations:red-or-blue}. For the equations introduced in Step~\ref{step:equations:per-vertex}, consider an arbitrary vertex $v \in V(G')$. Suppose its CNCF-color is \red, then $N[v]$ contains exactly one vertex $u$ with $c(u) = \red$ and thus $\sum_{u \in N[v]} r_u =1$, implying $(-1 +\sum_{u \in N[v]} r_v )\cdot (-1+\sum_{u \in N[v]} b_v - 1) = 0$ as desired. Similarly, if its CNCF-color is \blue we obtain $\sum_{u \in N[v]} b_u =1$ and again $(-1 +\sum_{u \in N[v]} r_v )\cdot (-1+\sum_{u \in N[v]} b_v) = 0$. It remains to prove that the equations added in Step~\ref{step:equations:ensure-nbhood} are satisfied. For this, let $v$ be an arbitrary vertex for which the equation was added. Observe that if $v$ is colored \red, then its neighborhood contains no \red vertices, such that $\sum_{u \in N(v)} r_u = 0$ and the equation is satisfied, or $d-1$ \red vertices, such that $\sum_{u \in N(v)} r_u = d-1$ and again the equation is satisfied. If $v$ is colored \blue, then either its neighborhood is entirely \red, such that $\sum_{u \in N(v)} r_u = d-1$ or it contains exactly one \red vertex, such that  $\sum_{u \in N(v)} r_u = 1$. In both cases the equation is satisfied.
\qed\end{claimproof}

Clearly, $|V| = 2n$ if $n$ is the number of vertices of $G'$.
We will now show how to modify $L$, such that it uses only variables for the vertices in $S \cup S'$.
To this end, we introduce the following function. For $v \notin (S\cup S')$, let $f_v(V) := g\big(\sum\nolimits_{u \in N(v)} r_u, |N(v)|\big)$, where
\[g(x,N) = -\frac{(N-x)(x-1)(N-2(x+1))}{N(N-2)}.\]
Note that for any fixed $N > 2$, $g(x,N)$ describes a degree-$3$ polynomial in $x$ over $\mathbb{Q}$. The following is easy to verify.
\begin{observation}\label{lem:properties-of-g}
$g(0,N) =g(N-1,N)= 1$, and $g(N,N) = g(1,N) = 0$ for all  $N\in \mathbb{Z} \setminus \{0,2\}$.
\end{observation}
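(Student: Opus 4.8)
The plan is to verify the four claimed evaluations by direct substitution into the closed form
\[g(x,N) = -\frac{(N-x)(x-1)(N-2(x+1))}{N(N-2)},\]
using that for $N \in \mathbb{Z}\setminus\{0,2\}$ the denominator $N(N-2)$ is nonzero, so that $g(\cdot,N)$ is a well-defined rational expression and the substitutions are meaningful.

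First I would dispose of the two zero-valued cases, which require no algebra. Setting $x=1$ makes the factor $(x-1)$ vanish, so the whole numerator is $0$ and hence $g(1,N)=0$. Setting $x=N$ makes the factor $(N-x)$ vanish, giving $g(N,N)=0$ for the same reason. Both conclusions hold for every $N$ with nonzero denominator, i.e.\ for $N\notin\{0,2\}$.

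Next I would handle the two unit-valued cases by computing the numerator and cancelling against the denominator. For $x=0$ the numerator is $(N)(-1)(N-2) = -N(N-2)$, so $g(0,N) = -\frac{-N(N-2)}{N(N-2)} = 1$. For $x=N-1$ the three factors become $(N-(N-1))=1$, $((N-1)-1)=N-2$, and $(N-2N)=-N$, whose product is $-N(N-2)$; dividing by $N(N-2)$ and negating gives $g(N-1,N)=1$. In each case the cancellation is legitimate precisely because $N(N-2)\neq 0$, which is exactly the excluded-value hypothesis.

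There is no genuine obstacle here: the observation is a routine computation, and the only point needing care is to record that the restriction $N\in\mathbb{Z}\setminus\{0,2\}$ is used solely to keep the denominator nonzero. It is worth remarking (though not needed for the proof) that $g(x,N)$ is, for fixed $N>2$, exactly the degree-$3$ interpolating polynomial in $x$ pinned down by these four conditions, which is why it appears in the definition of $f_v$; but establishing the observation itself requires nothing beyond the four substitutions above.
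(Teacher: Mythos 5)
Your verification is correct and is exactly what the paper intends: the paper states this as an observation that is "easy to verify" and gives no explicit proof, so direct substitution of the four values of $x$ into the formula for $g$, noting that $N(N-2)\neq 0$ for $N\in\mathbb{Z}\setminus\{0,2\}$, is the intended (and only) argument. All four computations check out.
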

Observe that $f_v$ only uses variables defined for vertices that are in $S$. As such, let $V' := \{r_v,b_v \mid v \in S\} \cup \{r_v,b_v \mid v \in S'\}$, and let $L'$ be equal to $L$ with every occurrence of $r_v$ for $v \notin (S\cup S')$ substituted by  $f_v$ and every occurrence of $b_v$ for $v \notin (S\cup S')$ substituted by $(1-f_v(V))$.

\begin{myclaim}\label{lem:Lprime-is-L-1}
If $\tau\colon V \rightarrow \{0,1\}$ is a satisfying assignment for $(L,V)$, then $\tau|_{V'}$ is a satisfying assignment for $(L',V')$.
\end{myclaim}

\begin{claimproof}
We show this by showing that for all $v \notin (S\cup S')$, $f_v(\tau(V)) = \tau(r_v)$ in this case. Since $\tau(b_v) = 1-\tau(r_v)$ by the constraints added in Step~\ref{step:equations:red-or-blue}, this will conclude the proof. Consider an arbitrary vertex $v \notin S$. Observe that by the equations added in Step \ref{step:equations:per-vertex}, we are in one of the following cases.
\begin{itemize}
\item $\sum_{u \in N[v]} \tau(r_u) = 1$ and $\tau(r_v) = 1$. In this case, $\sum_{u \in N(v)} \tau(r_u) = 0$ and thereby $f_v(V) = g(0,|N(v)|) = 1 = \tau(r_v)$, by Observation \ref{lem:properties-of-g}.
\item $\sum_{u \in N[v]} \tau(r_u) = 1$ and $\tau(r_v) = 0$. In this case, $\sum_{u \in N(v)} \tau(r_u) = 1$ and thereby $f_v(V) = g(1,|N(v)|) = 0 = \tau(r_v)$, using Observation \ref{lem:properties-of-g}.
\item $\sum_{u \in N[v]} \tau(b_u) = 1$ and $\tau(b_v) = 1$. In this case, since $\tau(b_v) = 1-\tau(r_v)$ for all $v$, we obtain that $\sum_{u \in N(v)} \tau(b_u) = 0$, and thus $\sum_{u \in N(v)} \tau(r_u) = |N(v)|$. Thereby, $f_v(V) = g(|N(v)|,|N(v)|) = 0 = 1-\tau(b_v) = \tau(r_v)$ by Observation \ref{lem:properties-of-g}.
\item $\sum_{u \in N[v]} \tau(b_u) = 1$ and $\tau(b_v) = 0$. Hereby, $\sum_{u \in N(v)} \tau(b_u) = 1$ and thus $\sum_{u \in N[v]} \tau(r_u) = |N(v)| - 1$. Thus, $f_v(V) = g(|N(v)|-1,|N(v)|) =1=1-\tau(b_v)  = \tau(r_v)$ using Observation \ref{lem:properties-of-g}.\qedhere
\end{itemize}
\end{claimproof}

The next claim shows the equivalence between $(L',V')$ and $(L,V)$.
\begin{myclaim}\label{lem:Lprime-is-L-2}
If $\tau \colon V' \rightarrow \{0,1\}$ is a satisfying assignment for $(L',V')$, then there exists a satisfying assignment $\tau'\colon V\rightarrow \{0,1\}$ for $(L,V)$ such that $\tau'|_{V'} = \tau$.
\end{myclaim}

\begin{claimproof}
Let $\tau$ be given, we show how to construct $\tau'$. For all $x \in V'$, let $\tau'(x) := \tau(x)$. Furthermore, for $r_v \in V\setminus V'$, let $\tau(r_v) := f_v(\tau(V))$ and let $\tau(b_v) := 1 - \tau(r_v)$. Since $L'$ was simply obtained from $L$ by substituting $r_v$ by $f_v(V)$ and $b_v$ by $(1-f_v(V))$ in all constraints, it is clear that $\tau'$ satisfied all equations in $L'$. It remains to show that $\tau(r_v) \in \{0,1\}$ for all $v \in V$. If $v \in V'$, this is obvious, so suppose $v \in V \setminus V'$ such that $\tau(r_v) = f_v(\tau(v))$. Observe that an equation was added for $v$ in Step~\ref{step:equations:ensure-nbhood}. Therefore, we know that $\sum_{u \in N(v)} r_u \in \{0,1,d_v-1,d_v\}$ and it follows from Lemma~\ref{lem:properties-of-g} that $f_v(\tau(V))$ takes a boolean value. \qed
\end{claimproof}

Using the method described above, we obtain an instance $(L,V)$ of \rootCSP{2} such that $(L,V)$ has a satisfying assignment if and only if $G$ is $2$-CNCF-colorable by Claims~\ref{claim:kernel:CNCF:preprocess} and~\ref{lem:2CNCF-to-rootCSP}. Then we obtain an instance $(L',V')$ such that $(L',V')$ is satisfiable if and only if $(L,V)$ is satisfiable by Claims \ref{lem:Lprime-is-L-1} and \ref{lem:Lprime-is-L-2}. As such, $(L',V')$ is a yes-instance if and only if $G$ is $2$-CNCF-colorable and it suffices to give a kernel for $(L',V')$. Observe that $|V'| = \Oh(k^2)$.

We start by partitioning $L'$ into three sets $L'_S$, $L'_1$ and $L'_2$. Let $L'_S$ contain all equalities created for a vertex $v \in S$. Let $L'_1$ contain all equations that contain at least one of the variables in $\{r_v,b_v \mid v \in S'\}$ and let $L_2$ contain the remaining equalities. Observe that $|L'_S| = k$ by definition. Furthermore, the polynomials in $L'_1$ have degree at most $2$, as they were created for vertices in $V(G') \setminus S$, and these are not connected. As such, we use Theorem~\ref{thm:subset_kernel_Q} to obtain $L''_1 \subseteq L'_1$ such that $|L''_1| = \Oh((k^2)^2) = \Oh(k^4)$ and any boolean assignment satisfying all equalities in $L''_1$ satisfies all equalities in $L'_1$.

Similarly, we observe that $L'_2$ by definition contains none of the variables in $\{r_v,b_v \mid v \in S'\}$, implying that the equations in $L'_2$ are equations over only $k$ variables. Since the polynomials in $L'_2$ have degree at most $6$, we can apply Theorem~\ref{thm:subset_kernel_Q} to obtain $L''_2 \subseteq L'_2$ such that $|L''_2| \leq \Oh(k^6)$ and any assignment satisfying all equations in $L''_2$ satisfies all equalities in $L'_2$.

We now define $L'' := L''_1 \cup L''_2 \cup L'_S$, and the output of our polynomial generalized kernel will be $(L'',V')$.
The correctness of the procedure is proven above, it remains to bound the number of bits needed to store instance $(L'',V')$.

By this definition, $|L''| \leq \Oh(k^{6})$. To represent a single constraint, it is sufficient to store the coefficients for each variable in $V'$. The storage space needed for a single coefficient is $\Oh(\text{log}(n))$, as the coefficients are bounded by a polynomial in $n$. Thereby, $(L'',V')$ can be stored in $\Oh(k^{6}\cdot k^2\log{n})$ bits. To bound this in terms of $k$, we observe that it is easy to solve $2$-\CNCF in time $\Oh(2^{k^2}\cdot \text{poly(n)})$. This is done by guessing the coloring of $S$, extending this coloring to the entire graph (observe $G \setminus S$ has no vertices of degree less than three) and verifying whether this results in a CNCF-coloring. Therefore, we can assume that $\log(n) \leq k^2$, as otherwise we can solve the $2$-\CNCF problem in $\Oh(2^{k^2}\text{poly}(n))$ time, which is then polynomial in $n$. Thereby we conclude that
$(L'',V')$ can be stored in $\Oh(k^{10})$ bits.
\qed\end{proof}

\subsection{Kernelization bounds for conflict-free coloring extension}
\label{sec:extension-main}
We furthermore provide kernelization bounds for the following extension problems.

\defproblem{\textsc{$q$-\CNCF-VC-Extension}}{A graph $G$ with vertex cover $S$ and partial $q$-coloring $c\colon S \rightarrow [q]$.}{Does there exist a $q$-CNCF-coloring of $G$ that extends $c$?}

\noindent We define \textsc{$q$-\ONCF-VC-Extension} analogously.

We obtain the following kernelization results when parameterized by vertex cover size, thereby classifying the situations where the extension problem has a polynomial kernel. The extension problem turns out to have a polynomial kernel in the same case as the normal problem. However, we manage to give a significantly smaller kernel.  Observe that the kernelization result is non-trivial, since \textsc{$2$-\CNCF-VC-Extension} is NP-hard (see Theorem \ref{thm:CV-extension-NP} below).

\begin{theorem}\label{thm:extension-kernel-bounds}
The following results hold.
\begin{enumerate}
\item \textsc{$2$-\CNCF-VC-Extension} has a kernel with $\Oh(k^2)$ vertices and edges that can be stored in $\Oh(k^2 \log k)$ bits. Here $k$ is the size of the input vertex cover $S$.
\item $q$-\textsc{CNCF-Coloring-VC-Extension} for any $q \geq 3$, and \textsc{$2$-\ONCF-VC-Extension} parameterized by the size of a vertex cover do not have a polynomial kernel, unless \containment.
\end{enumerate}
\end{theorem}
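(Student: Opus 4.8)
The plan is to prove the two parts separately. For the upper bound (Part~1), I would first exploit that $S$ is a vertex cover, so $V(G)\setminus S$ is independent and every $v\notin S$ has $N(v)\subseteq S$ with all of $N(v)$ already coloured by $c$. Hence the closed neighbourhood $N[v]$ of such a $v$ is determined by $c$ together with the single free choice $c(v)$. Writing $r_v$ and $b_v$ for the number of red and blue neighbours of $v$ in $S$, a short case analysis on $(r_v,b_v)$ shows that $N[v]$ can be made conflict-free only in restricted ways: $v$ is \emph{free} (either colour works) exactly when $(r_v,b_v)=(1,1)$, it is \emph{forced} to a single colour in all remaining cases with $\min(r_v,b_v)\le 1$, and it is \emph{impossible} when $r_v\ge 2$ and $b_v\ge 2$. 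If any vertex is impossible I output a trivial no-instance; otherwise the only free vertices have degree exactly two, with one red and one blue neighbour in $S$.

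Next I reduce the number of outside vertices. For the free (degree-two) vertices I mark, for each red/blue pair $\{a,d\}\subseteq S$, a constant number (three) of vertices whose neighbourhood is exactly $\{a,d\}$ and delete the rest, exactly as in the preprocessing of Claim~\ref{claim:kernel:CNCF:preprocess}; since there are at most $\Oh(k^2)$ such pairs this keeps $\Oh(k^2)$ free vertices. For a vertex $u\in S$, the only thing its closed-neighbourhood constraint can detect about the forced outside neighbours is whether $\hat\rho_u$ (resp. $\hat\beta_u$), the number of red (resp. blue) vertices that $N[u]$ contains once all forced outside neighbours are coloured, equals $0$, $1$, or at least $2$. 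I therefore keep, for each $u\in S$, at most two forced-red and two forced-blue neighbours; this retains only $\Oh(k)$ forced vertices, and preserves every threshold $\min(\hat\rho_u,2)$ and $\min(\hat\beta_u,2)$, because a kept count is always a subset of the true neighbours and hence can never exceed the true count while our marking guarantees it reaches the cap. Counting bits, the $\Oh(k^2)$ degree-two vertices cost $\Oh(k^2\log k)$ bits and the $\Oh(k)$ forced vertices of degree at most $k$ also cost $\Oh(k^2\log k)$ bits, giving the claimed bound.

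The delicate point of Part~1 is the equivalence proof, and in particular soundness of keeping three free vertices per pair. A free vertex is adjacent to both its red and its blue endpoint, so its colour contributes simultaneously to two vertices of $S$, and the free vertices form a bipartite ``transport'' structure between red and blue cover vertices. I expect the main work to be an exchange argument, in the spirit of the swap argument in Claim~\ref{claim:kernel:CNCF:preprocess}, showing that three representatives per pair suffice to realise, at every $u\in S$, each achievable free-contribution in $\{0,1,\ge 2\}$, simultaneously with the forced thresholds already fixed.

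For the lower bounds (Part~2) I would give \textsc{or}-cross-compositions from \textsc{Clique} into the two extension problems, reusing the machinery of Lemma~\ref{lem:3-CNCF-kernel-LB} (for $q$-\CNCF with $q\ge 3$) and of Lemma~\ref{lem:ONCF-kernel-LB} (for $2$-\ONCF), and then invoking Theorem~\ref{thm:cross-composition-implies-LB}. The key simplification offered by the extension setting is that the elaborate colour-forcing structures of those proofs (the palette-gadget of Definition~\ref{def:cncf-palette}, and the chains of degree-two vertices forcing \red/\blue) can largely be replaced by simply precolouring the corresponding vertices. The selection vertices (the instance selectors $Y$ and the clique selectors $P$) must stay uncoloured, so I would place them, together with all solution-dependent verification vertices, on the independent non-precoloured side, and put the remaining colour-determined verification structure into the precoloured vertex cover. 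The main obstacle here is arranging the construction so that the un-precoloured part stays independent while its neighbourhoods in the precoloured cover still faithfully test clique-ness; this requires re-routing each consistency check so that it sits on a single precoloured cover vertex whose closed neighbourhood is conflict-free precisely when the selected vertices form a clique in the selected instance.
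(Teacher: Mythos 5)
Your Part~1 follows essentially the same route as the paper. Your free/forced/impossible trichotomy on $(r_v,b_v)$ is the paper's preprocessing in different clothes (the paper phrases the forced case as ``degree at least three'' and extends the precolouring to those vertices before pruning), your ``keep two forced-\red{} and two forced-\blue{} neighbours per $u\in S$'' is exactly the paper's marking step, and the final ``three representatives per pair'' step is the paper's application of the marking procedure of Gargano and Rescigno. The one step you defer --- soundness of keeping three free vertices per pair --- is indeed discharged by the swap argument you anticipate (as in Claim~\ref{claim:kernel:CNCF:preprocess}); the case you should worry about, a class contributing at least two vertices of each colour to the neighbourhood of a common cover vertex $u$, cannot occur in a valid colouring, since it would leave no colour occurring exactly once in $N[u]$. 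So three representatives (one potential unique witness plus two certifying ``at least two'' of the other colour) suffice, and after a colour swap between an unmarked witness and a marked vertex of its class the restriction works. This is a plan with one lemma left unexecuted, but the lemma is true and is the same one the paper relies on.

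Part~2 has a genuine gap, and moreover you have made the problem harder than it is. The paper does \emph{not} redesign the constructions of Lemmas~\ref{lem:ONCF-kernel-LB} and~\ref{lem:3-CNCF-kernel-LB}: it reuses them verbatim and merely exhibits, in each, a polynomially sized vertex cover (excluding $Y$, $P$, and the few solution-dependent auxiliary vertices such as $v_j^3$ and the gadget vertices $g_1,g_2,g_3,g_6,g_7$) whose colouring in the completeness direction (Claims~\ref{claim:oncf:if} and~\ref{claim:CNCF:if}) is the \emph{same for every choice of yes-instance and clique}. Precolouring that cover with this canonical colouring immediately yields a cross-composition into the extension problem: completeness holds because the canonical colouring extends the precolouring whenever some $X_\ell$ has a clique, and soundness is inherited because any extension of the precolouring is in particular a valid $2$-ONCF- (resp.\ $q$-CNCF-) colouring of the whole graph, to which Claims~\ref{claim:oncf:only-if} and~\ref{claim:CNCF:iff} apply unchanged. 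Your proposal instead replaces the forcing gadgets by precolouring and ``re-routes'' the consistency checks, and you explicitly leave unresolved the obstacle of keeping the uncoloured part independent while the checks remain faithful --- so as written this is not a proof. The redesign is also unnecessary: the palette gadget and the degree-two forcing chains do no harm in the extension setting, they simply sit inside the precoloured cover, and the only thing that needs verifying is that the chosen cover's colouring is solution-independent.
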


We start by noting that the kernelization lower bounds given in the previous sections still apply. In particular, we obtain the following two Corollaries.
\begin{corollary}
$2$-\textsc{ONCF-Coloring-VC-Extension} parameterized by the size of a vertex cover does not have a polynomial kernel, unless \containment.
\end{corollary}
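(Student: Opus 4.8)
The plan is to give an \textsc{or}-cross-composition from \textsc{Clique} into \textsc{$2$-ONCF-VC-Extension}, reusing essentially the whole architecture of Lemma~\ref{lem:ONCF-kernel-LB} and the ONCF-gadget of Definition~\ref{def:oncf-gadget} (with its properties from Lemmas~\ref{lem:ONCF-gadget:all-red} and~\ref{lem:ONCF-gadget:not-all-red}). The one structural change forced by the extension setting is where the \emph{free} degrees of freedom live: since the coloring of the vertex cover $S'$ is now prescribed by $c'$, the vertices whose colors must encode the guessed solution — the instance selector $Y$ and the clique selector $P=\{p_{i,j}\}$ — must be placed in the \emph{independent} part $V(G)\setminus S'$, while the palette, all degree-reducing connectors, all checker vertices ($a$, the $h_i$, the $v_j$), the verification vertices $s^{i,i'}_{j,j'}$, and every internal gadget vertex are placed in $S'$ and get assigned exactly the colors they are \emph{forced} to take in the analysis of Lemma~\ref{lem:ONCF-kernel-LB}.

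Concretely I would: (1) use the same polynomial equivalence relation (equal $n$, equal clique size $k$); (2) rebuild the palette and the instance-selection mechanism (the fixed vertex $a$, adjacent to all of $Y$, whose ONCF-obligation forces exactly one $y_\ell$ to be blue), together with the position checkers $h_i$ and usage checkers $v_j$, exactly as before, but now realize the mutual constraints on the independent selectors $P$ \emph{only} through these fixed checkers rather than through direct edges among $P$; (3) attach the ONCF-gadgets to the verification vertices $s^{i,i'}_{j,j'}$ as in the original construction; (4) set $S' := V(G)\setminus(Y\cup P)$, prescribe $c'$ to be the forced coloring of these vertices, and observe that $|S'| = \Oh(n^2k^2)$, which is polynomial in $\max_i|X_i|$ and independent of $t$, so the parameter bound for a cross-composition holds; (5) re-derive the analogues of Claims~\ref{claim:ONCF:select-y}, \ref{claim:ONCF:select-clique}, \ref{claim:oncf:if}, and~\ref{claim:oncf:only-if} under the fixed cover coloring, and conclude via Theorem~\ref{thm:cross-composition-implies-LB}. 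In each of the forcing claims the ONCF-obligation of a \emph{fixed} checker now imposes a constraint on its \emph{free} selector neighbors (``exactly one red among the $p_{i,\cdot}$ seen by $h_i$'', ``at most one red per column at $v_j$'', ``not all three gadget inputs red at $s^{i,i'}_{j,j'}$''), and Lemma~\ref{lem:ONCF-gadget:all-red} still forbids the all-red configuration.

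The hard part will be redoing the forcing arguments once the cover coloring is \emph{prescribed} rather than \emph{deduced}: I must check (i) that the prescribed coloring is internally consistent, i.e.\ every fixed vertex's own ONCF-obligation can still be met even though its free neighbors are unconstrained, and (ii) that each fixed checker's ONCF-color is \emph{pinned down} by its fixed palette neighbors, so that its obligation really does translate into the intended constraint on the free selectors (a subtlety absent in Lemma~\ref{lem:ONCF-kernel-LB}, where a checker's ONCF-color could be supplied by any neighbor). It is worth noting why I do \emph{not} attempt the more obvious route of a polynomial parameter transformation from plain \textsc{$2$-ONCF}: freeing the color of a cover vertex requires a free vertex adjacent to that vertex's original neighborhood, hence adjacent to the (arbitrarily large) independent set of the input graph, and those free–free edges cannot be covered by any vertex cover of size polynomial in $k$. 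This obstruction is exactly what makes the cross-composition — in which the free part is a small, genuinely independent set of selectors — the right tool, and it is why the lower bound of Lemma~\ref{lem:ONCF-kernel-LB} ``still applies'' to the extension problem.
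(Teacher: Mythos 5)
Your overall strategy is the paper's: the corollary is proved by observing that the cross-composition of Lemma~\ref{lem:ONCF-kernel-LB} already works for the extension problem once one exhibits a vertex cover of the constructed graph whose coloring can be prescribed in advance, i.e.\ a cover whose vertices receive the \emph{same} color in the forward direction (Claim~\ref{claim:oncf:if}) no matter which instance contains the clique. Your remark on why a polynomial parameter transformation from plain $2$-\ONCF is not available is also sound, and matches why the paper goes through the cross-composition.

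However, your specific choice $S' := V(G)\setminus(Y\cup P)$ is wrong, and the error sits exactly at the point you flag as ``the hard part.'' Not every non-selector vertex has a solution-independent color: the vertices $v_j^3$, and the gadget vertices $g_3$, $g_6$, $g_7$ of each $G^{i,i'}_{j,j'}$, are colored in Claim~\ref{claim:oncf:if} in a way that \emph{depends} on the chosen instance $X_\ell$ and clique $j_1,\ldots,j_k$, so they cannot be precolored. Concretely, no fixed precoloring of $g_3$ works: if $g_1=p_{i,j}$ and $g_2=p_{i',j'}$ are both \red (which happens precisely when $j=j_i$ and $j'=j_{i'}$ are selected), then $g_3$ must be \blue, since otherwise Lemma~\ref{lem:ONCF-gadget:all-red} forces $c(g_9)=\red$ and $N(g_{10})=\{g_9,R\}$ is monochromatic; but if $s^{i,i'}_{j,j'}$ is adjacent to the blue vertex $y_{\ell}$ of the selected instance, then $g_3$ must be \red, since otherwise $N(s^{i,i'}_{j,j'})$ contains two \blue vertices and (for $t$ large enough) two \red ones. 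Which of these situations occurs varies with the solution, so $g_3$ must stay free; the colors of $g_6,g_7$ then vary with it via Lemma~\ref{lem:ONCF-gadget:not-all-red}, and $v_j^3$ varies with whether $j$ is a clique vertex. The repair is exactly the paper's choice of cover: $H \cup \{R,R',B,B',a,a_1,a_2,a_1',a_2'\} \cup \{v_j,v_j^1,v'^1_j,v''^1_j,v_j^2,v'^2_j,v''^2_j \mid j\in[n]\} \cup S \cup \{\text{gadget vertices } g_4,g_5,g_8,g_9,g_{10}\}$. One checks that its complement ($Y$, $P$, the $v_j^3$, and the $g_3,g_6,g_7$) is an independent set, so this is still a vertex cover of size $\Oh(n^2k^2)$, and all of its vertices receive a fixed color in Claim~\ref{claim:oncf:if}; the reverse direction is then immediate since any extension of the precoloring is in particular a $2$-ONCF-coloring. (A minor further correction: $P$ is already an independent set in the original construction --- $g_1$ and $g_2$ are not adjacent inside the gadget --- so no rerouting of constraints ``through the checkers'' is needed.)
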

\begin{proof}
Observe that in the cross-composition given in Lemma~\ref{lem:ONCF-kernel-LB}, the vertices
\begin{align*}
H &\cup \{R,R',B,B',a,a_1,a_2,a_1',a_2'\}\cup \{v_j,v_j^1,v'^1_j,v''^1_j,v_j^2,v'^2_j,v''^2_j\mid j \in [n]\}\\&\cup S \cup  \{\text{all vertices in gadgets labeled }g_{10}, g_9, g_8,g_4, \text{ or } g_5\}
\end{align*}
form a vertex cover of appropriately bounded size, and that these vertices receive always receive the same color in the proof of Claim~\ref{claim:CNCF:if}.
\qed\end{proof}

Furthermore, Lemma~\ref{lem:3-CNCF-kernel-LB} immediately gives us the following result on this extension problem.
\begin{corollary}
For any $q \geq 3$, $q$-\textsc{CNCF-Coloring-VC-Extension} parameterized by the size of a vertex cover does not have a polynomial kernel, unless \containment.
\end{corollary}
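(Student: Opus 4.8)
The plan is to show that the cross-composition from \textsc{Clique} built in the proof of Lemma~\ref{lem:3-CNCF-kernel-LB} upgrades, essentially verbatim, to a cross-composition into the extension problem, by additionally fixing the colors of a suitably chosen vertex cover. Recall that that construction turns $t$ equivalent \textsc{Clique} instances $X_1,\ldots,X_t$ into a graph $G$ in which $V(G)\setminus Y$ is a vertex cover of size $\Oh(n^2k^2)+f(q)$, where $f(q)$ is the (constant) size of the palette-gadget $C_q$. The idea is to re-examine the explicit \emph{yes}-coloring produced in Claim~\ref{claim:CNCF:if} and to separate out exactly the vertices whose assigned color depends on which input instance $X_\ell$ is selected; these are the ``selector'' vertices $Y$ (only $y_\ell$ gets color $2$), the set $P$ (the color-$1$ vertices encode the chosen clique), the vertices $v^3_j$, and the vertices $f^{i,i'}_{j,j'}$. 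Everything else receives a color that is the same for every choice of $\ell$.

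First I would set $I := Y \cup P \cup \{v^3_j \mid j\in[n]\} \cup \{f^{i,i'}_{j,j'}\mid i,i'\in[k],\ j,j'\in[n]\}$ and verify that $I$ is an independent set, so that $W := V(G)\setminus I$ is a vertex cover. This is routine bookkeeping from the construction: each $p_{i,j}$ is adjacent only to vertices of $H$, $V$, $S$ and (via Step~\ref{CNCF:step:connect-to-palette}) the palette; each $v^3_j$ only to $v_j$ and palette vertices $c_i,c_i'$ with $i>3$; each $f^{i,i'}_{j,j'}$ only to $s^{i,i'}_{j,j'}$ and such palette vertices; and each vertex of $Y$ only to $a$, to vertices of $S$, and to palette vertices with $i>3$. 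None of these four groups are adjacent to one another, so $I$ is independent, while all the listed neighbours lie in $W$. In particular $|W|\le|V(G)|=\Oh(n^2k^2)+f(q)$, which is independent of $t$ and hence an admissible parameter bound for a cross-composition.

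Next I would define the pre-coloring $c\colon W\to[q]$ to be the restriction to $W$ of the coloring of Claim~\ref{claim:CNCF:if}; by the choice of $I$ every vertex of $W$ receives there an $\ell$-independent color, so $c$ is well defined. The equivalence then follows from the two directions already established. If some $X_\ell$ has a clique of size $k$, Claim~\ref{claim:CNCF:if} yields a $q$-CNCF-coloring of $G$ whose restriction to $W$ equals $c$, so $(G,W,c)$ is a yes-instance of the extension problem. Conversely, any $q$-CNCF-coloring of $G$ extending $c$ is in particular a $q$-CNCF-coloring of $G$, so Claim~\ref{claim:CNCF:iff} produces a clique of size $k$ in some $X_\ell$; note that the adversary's freedom in coloring $I$ does not break soundness precisely because that claim constrains \emph{every} $q$-CNCF-coloring of $G$. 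Thus $(G,W,c)$ is a yes-instance if and only if some $X_\ell$ is, and Theorem~\ref{thm:cross-composition-implies-LB} gives the claimed lower bound.

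The main obstacle is not conceptual but is the careful cross-check against the \emph{explicit} coloring of Claim~\ref{claim:CNCF:if}: one must confirm that every vertex retained in $W$ genuinely gets an instance-independent color and that every instance-dependent vertex has been placed in $I$, and simultaneously that $I$ really is independent so that $W$ is a vertex cover. Once this correspondence is verified, the result follows with no further work, which is exactly why it is stated as a corollary of Lemma~\ref{lem:3-CNCF-kernel-LB}.
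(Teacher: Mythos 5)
Your proposal is correct and takes essentially the same approach as the paper: both reuse the cross-composition of Lemma~\ref{lem:3-CNCF-kernel-LB} unchanged and observe that there is a vertex cover of size $\text{poly}(n,k)+f(q)$ whose coloring in the yes-case construction of Claim~\ref{claim:CNCF:if} does not depend on which instance $X_\ell$ is selected (the paper picks the explicit set $C \cup H \cup \{v_j \mid j \in [n]\} \cup S \cup \{a,a_1,a_1',a_2,a_2',a_3\}$, while you take the complement of the independent set of instance-dependent vertices; the two choices are interchangeable). One cosmetic slip: $|V(G)|$ itself is \emph{not} $\Oh(n^2k^2)+f(q)$, since it contains the $t$ vertices of $Y$; the bound you need is $|W|\le|V(G)\setminus Y|=\Oh(n^2k^2)+f(q)$, which holds because $Y\subseteq I$.
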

\begin{proof}
The result follows immediately from the same cross-composition as given in the proof of Lemma \ref{lem:3-CNCF-kernel-LB}. Observe that the vertices $C \cup H \cup \{v_j \mid j \in [n]\} \cup \{s^{i,i'}_{j,j'} \mid i,i'\in[k], j,j'\in [n]\} \cup \{a,a_1,a_1',a_2,a_2',a_3\}$ form a vertex cover of the created graph $G$ of size $\text{poly}(n)$, and that they are always given the same coloring in the proof of Claim \ref{claim:CNCF:if}.\qed
\end{proof}

The results above prove part 2 of Theorem \ref{thm:extension-kernel-bounds}.
 We will now show that \textsc{$2$-\CNCF-VC-Extension} has a simple polynomial kernel of size  $\Oh(k^2\log{k})$, where $k$ is the size of the vertex cover. This proves part 1 of Theorem ~\ref{thm:extension-kernel-bounds}. We start by arguing that \textsc{$2$-\CNCF-VC-Extension} is indeed NP-hard.

\begin{theorem}\label{thm:CV-extension-NP}
$2$-\textsc{CNCF-Coloring-VC-Extension} is NP-hard.
\end{theorem}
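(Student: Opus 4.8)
The plan is to reduce from \textsc{Positive 1-in-3-SAT} (the monotone version of One-in-Three 3SAT), which is NP-complete. Given a monotone formula with variables $x_1,\dots,x_n$ and clauses $C_1,\dots,C_m$, each clause being a set of three positive variables, I would build a graph $G'$ together with a vertex cover $S$ and a partial $2$-coloring $c\colon S\to\{\red,\blue\}$ so that $G'$ has a $2$-CNCF-coloring extending $c$ if and only if the formula has an assignment making exactly one variable true in every clause. The colour \red encodes \emph{true}. Crucially, the only uncoloured vertices will form an independent set of ``copy'' vertices, one per occurrence of a variable in a clause; all remaining vertices go into $S$ and are precoloured.

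First the gadgets. For each variable $x_i$ I add an \emph{equalizer} vertex $J_i\in S$ coloured \red, with a pendant $w_i\in S$ coloured \blue adjacent only to $J_i$. For each occurrence of $x_i$ in a clause $C_j$ I add a free copy vertex $u_i^j\in V(G')\setminus S$ adjacent to $J_i$, so that all copies of $x_i$ are exactly the copy-neighbours of $J_i$. A short count shows $N[J_i]$ is conflict-free precisely when all copies of $x_i$ receive the same colour, so $J_i$ forces the copies of $x_i$ to agree; this common colour is the free truth value of $x_i$. For each clause $C_j$ I add a \emph{clause} vertex $s_j\in S$ coloured \blue with a precoloured \blue pendant $w_j$ (itself stabilised by a \red pendant $w_j'$), and I make the three copy vertices of the variables of $C_j$ adjacent to $s_j$. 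Then $N[s_j]=\{s_j,w_j\}\cup\{\text{three copies}\}$ carries two fixed \blue vertices, hence is conflict-free exactly when one of the three copies is \red, i.e.\ exactly one variable of $C_j$ is true. Each copy $u_i^j$ is adjacent only to $J_i$ (\red) and to the clause vertices it feeds (\blue); arranging that it sees exactly one \red and one \blue fixed neighbour makes $N[u_i^j]$ conflict-free for \emph{either} choice of its own colour, so each copy is genuinely free.

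With these gadgets $S$ (all the $J_i,w_i,s_j,w_j,w_j'$) is a vertex cover, the copies form an independent set, and $c$ colours all of $S$. For correctness, a satisfying assignment colours every copy of a true variable \red and of a false variable \blue; the equalizers are then conflict-free (copies agree), each clause vertex is conflict-free (exactly one \red neighbour), and all pendants and copies are conflict-free by construction, yielding an extension of $c$. Conversely, any extension forces the copies of each variable to agree (by the equalizers), defining an assignment, and forces each clause to have exactly one \red copy (by the clause vertices), so every clause has exactly one true variable.

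I expect the main obstacle to be the gadget design rather than the reduction skeleton: the conflict-free condition on a closed neighbourhood is the \emph{disjunctive} requirement ``some colour occurs exactly once'', yet a valid colouring must meet it at \emph{every} vertex simultaneously, so the equalizer and clause constraints must be engineered to interact correctly. The sharpest design constraint is that a vertex whose colour must stay a free choice can tolerate only a balanced set of fixed neighbours (one \red and one \blue, or none); this is what dictates the separate copy vertices and the stabilising pendants, and checking that every auxiliary vertex remains conflict-free regardless of the truth assignment is the part that needs the most care.
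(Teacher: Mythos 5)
Your proof is correct and takes essentially the same route as the paper's: a reduction from monotone exact satisfiability in which a precoloured variable vertex forces all of its uncoloured occurrence-copies to agree on a colour, a precoloured clause vertex paired with a like-coloured precoloured neighbour forces exactly one occurrence-copy to take the opposite colour, and the degree-two occurrence-copies (the only uncoloured vertices, forming an independent set outside the vertex cover) each see one neighbour of each colour so that their own colour is a genuinely free choice. The only differences are cosmetic: the paper uses shared palette vertices $R_1,R_2,B_1$ where you use per-gadget pendants, it swaps which colour encodes \emph{true}, and it starts from \textsc{Monotone Exact Sat} with clauses of arbitrary size rather than the 1-in-3 restriction.
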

\begin{proof}
We prove this by a reduction from \textsc{Monotone Exact Sat}, which is defined as follows.

\defproblem{\textsc{Monotone Exact Sat}}{A formula $\mathcal{F}$ over variable set $X$ that is a conjunction of clauses, where each clause consists of a number of variables from $X$. }{Does there exist an assignment $\tau\colon X \rightarrow \{0,1\}$ such that every clause in $\mathcal{F}$ contains exactly one variable that is set to $1$?}

It is known that the \textsc{Monotone Exact Sat} problem is NP-hard, as it generalizes problem \textsc{NP1} in \cite{SchaeferIntractability}. Let an instance $\mathcal{F} = C_1 \wedge \dots \wedge C_m$ over variables $X = \{x_1,\ldots,x_n\}$ be given, we show how to construct a graph $G$ with vertex cover $S$ and precoloring $f \colon S \rightarrow \{\red,\blue\}$ for $2$-\textsc{CNCF-Coloring-VC-Extension}. See Figure \ref{fig:2-CNCF-extension-NP} for a sketch of $G$.
\begin{figure}
\centering
\includegraphics{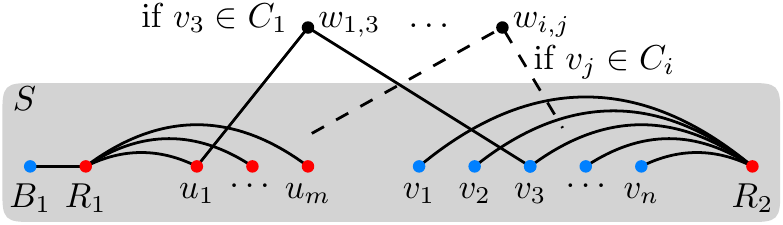}
\caption{The reduction from  \textsc{Monotone Exact Sat} to \textsc{$2$-\CNCF-VC-Extension}.}
\label{fig:2-CNCF-extension-NP}
\end{figure}
\begin{itemize}
\item Add vertices $R_1$, $R_2$, and $B_1$ to $G$ and to $S$. Let $c(R_1):= c(R_2) := \red$ and let $c(B_1) := \blue$. Connect $R_1$ to $B_1$.
\item For every clause $C_i$ in $\mathcal{F}$, add a vertex $u_i$ to $G$. Let $u_i$ be contained in $S$ and let $c(u_i) := \red$. Connect all vertices $u_i$ to $R_2$.
\item For every variable $x_j \in X$, add a vertex $v_j$ to $G$, let $v_j$ be contained in $S$ and set $c(v_j) := \blue$. Connect $v_j$ to $R_2$.
\item For every $i \in [m]$, $j \in [n]$, if variable $x_j$ is in clause $C_i$, construct a new vertex $w_{i,j}$ and connect $w_{i,j}$ to $v_j$ and $u_i$. $w_{i,j}$ is \emph{not} contained in $S$.
\end{itemize}
Clearly, $S$ is hereby a vertex cover of $G$ and it is colored by $c$. It is easy to see that the construction above can be done in polynomial time. It remains to show that $c$ can be extended to a $2$-CNCF-coloring of $G$ if and only if $\mathcal{F}$ is satisfiable.

$(\Rightarrow)$ Suppose $\mathcal{F}$ is satisfiable and has satisfying assignment $\tau\colon X\to \{0,1\}$ , we give a $2$-CNCF-coloring $c' \colon V(G) \to \{\red,\blue\}$ of $G$, such that $c'$ extends $c$. Naturally, for every vertex $s \in S$, let $c'(s) := c(s)$. For $i\in [m]$, $j \in [n]$ let $c(w_{i,j}) := \blue $ if $\tau(x_j) = 1$ and let $c(w_{i,j}) := \red$ otherwise. It remains to show that this is a valid CNCF-coloring of $G$. We check the neighborhoods of all vertices.

$N[R_1] = \{R_1,B_1\} \cup \{u_i \mid i \in [m]\}$. Here $B_1$ is a unique \blue vertex. $N[R_2] = \{R_2\} \cup \{v_j \mid j \in [n]\}$ and $R_2$ is a unique \red vertex in this set. $N[B_1] := \{B_1,R_1\}$ and these vertices are \red and \blue, as desired. For any vertex $u_i$, $N[u_i]$ contains one \blue vertex, namely $w_{i,j}$ where $j$ is such that $x_j$ is the \emph{unique} variable in clause $C_i$ with $\tau(x_j) = 1$.
For any $j \in [n]$, $N[v_j] := \{w_{i,j} \mid x_j \in C_i\} \cup \{v_j, R_2\}$. Since all vertices in $\{w_{i,j} \mid x_j \in C_i\}$ receive the same color, this set has a uniquely colored vertex which is either $v_j$ (which is \blue) or $R_2$ (which is \red). For any $i \in [m], j\in [n]$, vertex $w_{i,j}$ has exactly two neighbors and these receive different colors, and thus $N[w_{i,j}]$ is has a neighbor with a unique color.

$(\Leftarrow)$ Let $c' \colon V(G) \to \{\red,\blue\}$ be a CNCF-Coloring of $G$.  Let $j \in [n]$, then $N[v_j] = \{w_{i,j} \mid v_j \in C_i\} \cup \{R_2,v_j\}$. Thereby, we observe that the vertices in $W_j := \{w_{i,j} \mid v_j \in C_i\}$ all receive the same color since $c'(v_j) = \blue$ and $c'(R_2) = \red$. Let $\tau(x_j) := 1$ if all vertices in $W_j$ have color \blue, and let $\tau(x_j) := 0$ otherwise. We show that $\tau$ is a satisfying assignment for $\mathcal{F}$. Let $C_i$ be a clause of $\mathcal{F}$, we show that there is exactly one $x_j$ such that $\tau(x_j) = 1$, by showing that there is exactly one $j \in [m]$ such that vertex $c'(w_{i,j}) = \blue$. Consider vertex $u_i$, then $N[u_i] = \{u_i, R_1\}\cup \{w_{i,j} \mid v_j \in C_i\}$. Since $c'(u_i) = c'(R_1) = \red$ since $c'$ extends $c$, it trivially follows that there is indeed a unique $j$ such that $c'(w_{i,j}) = \blue$. This concludes the proof.\qed
\end{proof}

We now show that, unlike $3$-\textsc{CNCF-Coloring-VC-Extension}, $2$-\textsc{CNCF-Coloring-VC-Extension} has a simple polynomial kernel.

\begin{lemma}
$2$-\textsc{CNCF-Coloring-VC-Extension} parameterized by the size of the vertex cover has a kernel of size $\Oh(k^2\log k)$. \end{lemma}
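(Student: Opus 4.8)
The plan is to exploit that, since $S$ is a vertex cover, the set $I := V(G) \setminus S$ is independent and every $w \in I$ has $N(w) \subseteq S$, so the colours of $N(w)$ are already fixed by the precoloring $c$. First I would prove a structural dichotomy for each $w\in I$: writing $r_w$ and $b_w$ for the number of \red and \blue vertices in $N(w)$, the closed neighbourhood $N[w]$ can be made conflict-free by some colour of $w$ if and only if $\min(r_w,b_w)\le 1$; moreover the colour of $w$ is genuinely \emph{free} (both colours work) only when $\deg(w)=0$, or $\deg(w)=2$ with one \red and one \blue neighbour, and is otherwise \emph{forced} to a unique value. This is a short case check on $r_w,b_w$. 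It immediately yields a first reduction rule: reject the instance if some $w$ has $r_w\ge 2$ and $b_w\ge 2$, and delete all isolated vertices of $I$ (they are self-satisfied and influence no other closed neighbourhood).

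The heart of the kernelization is to bound the number of remaining vertices of $I$, which are now either forced or degree-$2$ free vertices joining a \red vertex $v_r\in S$ to a \blue vertex $v_b\in S$. For the forced vertices I would use a marking argument: for every $v\in S$ and every colour $\gamma$, mark up to two forced $I$-neighbours of $v$ coloured $\gamma$, and delete every forced vertex that receives no mark. The key observation making this correct is that each forced $w$ only contributes its fixed colour $\gamma$ to the counts of its neighbours, and a vertex $v$ with at least two $\gamma$-coloured neighbours can never have $\gamma$ as its unique colour under any colouring of the free vertices; hence if an unmarked $w$ is deleted, every neighbour $v$ still retains two $\gamma$-neighbours and its conflict-free status is unchanged, while a vertex that truly needs $w$ as its unique $\gamma$ (quota not yet full) keeps $w$ marked. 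This also automatically preserves the subtle ``disabling'' effect, where a \blue vertex adjacent to another \blue vertex is barred from using its own colour as the unique one. After this rule only $\Oh(k)$ forced vertices survive.

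For the free vertices I would group them by the pair $\{v_r,v_b\}$ of their two neighbours and keep only a constant number $c$ of them per pair, deleting the rest. Since there are $\binom{k}{2}+k+1=\Oh(k^2)$ neighbourhoods of size at most two, this leaves $\Oh(k^2)$ free vertices and hence $\Oh(k^2)$ vertices overall; each surviving vertex carries at most $k$ incident edges, so the graph has $\Oh(k^2)$ edges and can be encoded, using $\Oh(\log k)$ bits per endpoint reference and for the precoloring, in $\Oh(k^2\log k)$ bits. What remains is to prove that keeping $c$ free vertices per pair preserves the answer.

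This last step is the main obstacle. The difficulty is that a single free vertex is \emph{coupled}: colouring it \red adds a \red to both $v_r$ and $v_b$, and colouring it \blue adds a \blue to both, so the two endpoints cannot be treated independently; and because conflict-freeness asks for \emph{exactly one} uniquely coloured vertex, I must reproduce the precise value (capped at the threshold $2$) of the free contribution at each endpoint, simultaneously, rather than merely an inequality. I plan to resolve this by an exchange argument: given a colouring of $G$, the number of \blue free vertices on a pair determines the free contribution at both endpoints (the \red contribution being the complement), and the endpoints' constraints only distinguish whether this number is $0$, $1$, or at least $2$; keeping $c=\Oh(1)$ free vertices per pair (a small constant such as $4$ suffices, since it must realise ``at least two'' of each colour as well as the exact values $0$ and $1$) lets me recolour the retained vertices to match every endpoint's chosen option, and symmetrically lets me extend any colouring of the kernel by colouring the deleted free vertices consistently with the options used by $v_r$ and $v_b$. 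Verifying that a consistent joint choice always exists, across all pairs sharing a common endpoint, is the delicate core of the correctness proof; once it is established, composing the reduction rules and invoking the structural dichotomy yields the claimed $\Oh(k^2\log k)$ kernel.
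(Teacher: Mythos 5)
Your proposal follows essentially the same route as the paper: reject if some vertex outside $S$ already has two \red and two \blue precoloured neighbours, propagate the forced colours, keep a constant number of witnesses of each colour per vertex of $S$ and a constant number of independent-set vertices per neighbourhood $X \subseteq S$ with $|X|\le 2$, then count vertices and edges. The one step you flag as unresolved --- finding a consistent joint choice across pairs sharing an endpoint --- is not actually an obstacle: each group of free vertices attached to a pair $\{v_r,v_b\}$ contributes the same multiset of colours to both endpoints, the contributions of distinct groups to $N[v]$ simply add, and preserving each group's colour counts exactly when they are below $2$ while keeping them at least $2$ otherwise preserves whether the total count at $v$ is $0$, $1$, or at least $2$; hence keeping four free vertices per pair suffices (the paper keeps three vertices per neighbourhood set and cites Gargano and Rescigno for the correctness of that marking), and there is no cross-pair interaction to verify.
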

\begin{proof}
Let $G$ with partial coloring $c$ and vertex cover $S$ be an instance of the problem, with $|S| \leq k$. We first show how to obtain an equivalent instance $(G',S',c')$ such that $|S'| \leq 3|S|$ and such that every vertex in $V(G')\setminus S'$ has degree at most two. Then we can use a procedure given by Gargano and
Rescigno \cite{GarganoRR15} to further reduce the number of vertices in $V(G')\setminus S'$ to at most $\Oh(|S'|^2) = \Oh(k^2)$.

If there exists a vertex that has at least two \red and two \blue neighbors by this precoloring, output a trivial no-instance. For the rest of the kernelization, we can thus assume that this case does not occur.
Initialize $(G',S',c')$ as $(G,S,c)$. Observe that for any vertex of degree at least $3$, its coloring is now completely determined. While there exists a vertex $v$ of degree at least three in $G'\setminus S'$, we define $c'(v)$ as follows. If $v$ has only \red neighbors, let $c'(v) := \blue$. Otherwise, if $v$ has at least two \red and exactly one \blue neighbor, let $c'(v) := \red$. Similarly, if $v$ has only \blue neighbors let $c'(v):=\red$ and if $v$ has exactly one red neighbor let $c'(v):=\blue$. It is easy to see that there is a $2$-CNCF coloring of $G$ that extends $c$, if and only if there is one extending $c'$.

For each vertex $u \in S'$, mark two neighbors that are colored \red by $c'$, and two that are \blue (if these exist). Add all marked vertices to $S'$, and delete all vertices $v \in G \setminus S'$ that have degree at least three and are not marked. Observe that hereby, $c'$ is a coloring of the vertices of $S'$, $S'$ is a vertex cover of $G'$, and each vertex in $V(G')\setminus S'$ has degree at most two. We argue the following.

\begin{myclaim}
The graph $G'$ has a $2$-CNCF-coloring extension if and only if $G$ has a $2$-CNCF-coloring extension.
\end{myclaim}
\begin{claimproof}
Suppose $G$ has a $2$-CNCF-coloring extension of $c$, by the observation above there is also a $2$-CNCF-extension of $c'$, let this be $c''$. We show that $c''|_{V(G')}$ is a proper coloring of $G'$. Every vertex in $G' - S$ has the same neighborhood as in $G$, and thus this neighborhood is conflict-free colored by $c''$. For every vertex in $s \in S$, $N[s] \cap (S \cup \{v \notin S \mid d(v) \leq 2\})$ is the same in $G$ and $G'$. For the vertices in $S \cup \{v \notin S \mid d(v) = 2\})$ the color is the same for any $2$-CNCF-coloring of $G$ and we kept two \red and two \blue vertices. As such, $c''$ is a CNCF-coloring of $G'$.

Suppose $G'$ has a $2$-CNCF-coloring extension $c''$ of $c'$.
We define a $2$-CNCF-coloring $d$ of $G$ that extends $c$. Start by defining $d(v) = c''(v)$ for any vertex $v \in V(G) \cap V(G')$. Hereby, all vertices in the vertex cover $S$ of $G$ are colored. Let $v \in V(G) \setminus S$. Note that $v$ has at least three neighbors in $s$, as otherwise $v$ would have been a vertex in $G'$. Note that $N(v) \subseteq S$. Define $d(v)$ as \red if $N(v)$ has only \blue vertices. Furthermore, let $d(v) := \red$ if $N(v)$ contains exactly one \blue vertex. In all other cases, define $d(v):= \blue$. This concludes the definition of $d$, it remains to show that $d$ is indeed a CNCF-coloring.

Clearly, by this definition, for any $v \notin S$ we have that $N(v)$ is conflict-free colored by $d$, as we assumed that no such vertex had two \red and two \blue neighbors. It remains to show that for $v \in S$, $N(v)$ is conflict-free colored. Suppose for contradiction that it is not. Since any vertex $v \in S$ was conflict-free colored by $c''$ in $G'$, this implies that there exists a vertex $v \in S$ that has two \red and two \blue neighbors under $d$. Without loss of generality, suppose \red was the conflict-free color of $v$ in $G'$. Thus, there is a vertex $w \in V(G)\setminus V(G')$ that is a neighbor of $v$, with $d(w) := \red$. But this contradicts that $w$ is removed by the marking procedure, as we always keep at least two \red neighbors of $v$ if they exist. Thereby, $d$ is a CNCF-coloring of $G$.
\qed\end{claimproof}

To obtain the kernel, for every set $X \subseteq S'$ of size at most two, mark $3$ vertices $v \in V(G')$ with $X  = N(v)$, if less than three such vertices exist, mark all. Remove all unmarked vertices from $V(G')\setminus S'$. This concludes the procedure. It follows from \cite[Lemma 6]{GarganoRR15} that this last step does not change the $2$-CNCF-colorability of $G'$, observe that this still holds after predefining the coloring of the vertex cover.
It is easy to observe that $|S'|\leq 3|S|$ and $|V(G')| \leq |S'|^2 = \Oh(k^2)$. Furthermore, since any vertex in $G' \setminus S'$ has degree at most two, $|E(G')| \leq |S'|^2 + 2|V(G')\setminus S'| = \Oh(k^2)$. Using adjacency lists, this kernel can thus be stored in $\Oh(k^2 \log k)$ bits.
\qed\end{proof}

This completes the proof of Theorem ~\ref{thm:extension-kernel-bounds}.

\section{Combinatorial bounds}\label{sec:cb}
Given a graph $G$, it is easy to prove that $\chi_{\sf CN}(G) \leq \chi (G)$. However, there are examples that negate the existence of such bounds with respect to $\chi_{\sf ON}$~\cite{GarganoRR15}. In this section, we prove combinatorial bounds for $\chi_{\sf ON}$ with respect to common graph parameters like treewidth, feedback vertex set and vertex cover.

First, note that if $G$ is a graph with isolated vertices then the graph can have no ONCF-coloring. Therefore, in all the arguments below we assume that $G$ does not have any isolated vertices. We obtain the following result. Recall that for a graph $G$, ${\sf vc}(G)$, ${\sf fvs}(G)$ and ${\sf tw}(G)$ denote the size of a minimum vertex cover, the size of a minimum feedback vertex set and the treewidth of $G$, respectively.

\begin{theorem}\label{thm:comb-bds}
Given a connected graph $G$,
\begin{enumerate}
\item $\chi_{\sf ON}(G) \leq 2{\sf tw}(G)+1$,
\item $\chi_{\sf ON}(G) \leq {\sf fvs}(G)+3$,
\item $\chi_{\sf ON}(G) \leq {\sf vc}(G)+1$. Furthermore, if $G$ is not a star graph or an edge-star graph, then $\chi_{\sf ON}(G) \leq {\sf vc}(G)$.
\end{enumerate}
\end{theorem}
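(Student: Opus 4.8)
The plan is to treat all three bounds with a common template: fix the relevant \emph{modulator} $M$ (a minimum vertex cover, a minimum feedback vertex set, or the bag structure of a tree decomposition), colour $M$ so that its vertices can act as private unique witnesses for as many neighbourhoods as possible, and then spend the remaining colours on the sparse part $G-M$. The guiding observation is that if every vertex of $M$ receives its own distinct colour and $G-M$ is coloured from a disjoint set of colours, then any vertex $v$ having a neighbour $u\in M$ automatically has a conflict-free witness, since $c(u)$ differs from every other colour appearing in $N(v)$. Hence the only work left is to deal with vertices whose entire neighbourhood avoids $M$.

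For the vertex-cover bound I would let $S$ be a minimum vertex cover, $|S|={\sf vc}(G)=k$, colour $S$ rainbow with $\{1,\dots,k\}$, and give every vertex of the independent set $I=V(G)\setminus S$ the fresh colour $k+1$. By the observation above every $v\in I$ (whose neighbourhood is rainbow in $S$) and every $v\in S$ with an $S$-neighbour is already conflict-free; the only bad vertices are those $v\in S$ with $N(v)\subseteq I$ and $|N(v)|\ge 2$. For each such $v$ I would recolour one neighbour $w\in N(v)\cap I$ with the colour $c(v)$; since a bad vertex has no neighbour in $S$, this colour cannot equal any $S$-witness colour used by the other neighbours of $w$, and a short argument shows the recolouring keeps all neighbourhoods conflict-free. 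This yields the $k+1$ bound. The harder part is shaving the last colour: to obtain $\chi_{\sf ON}(G)\le k$ when $G$ is neither a star nor an edge-star, I would recolour the independent set using only the $k$ existing colours (possible once $k\ge 2$), choosing for each neighbourhood a colour that keeps some witness unique, and then argue by a structural case analysis that such an assignment always exists unless the neighbourhoods are too uniform, which forces $G$ to be exactly a star (the $k=1$ case) or an edge-star (two cover vertices competing for the same leaves). Pinning down that these two families are the only obstructions is the main obstacle of this part.

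For the feedback-vertex-set bound I would colour a minimum feedback vertex set $F$ rainbow with $\{1,\dots,{\sf fvs}(G)\}$ and reserve three further colours for the forest $G-F$. Again every vertex with an $F$-neighbour is automatically conflict-free, so it remains to $3$-colour the forest so that both each forest vertex whose whole neighbourhood lies in the forest, and each vertex of $F$ all of whose neighbours lie in the forest, receive a conflict-free neighbourhood. The first requirement alone is just ONCF-colouring a forest, which needs only a few colours; the delicate point is satisfying the extra apex constraints coming from feedback vertices simultaneously, for which I would process the trees of $G-F$ by a rooted, layer-by-layer assignment that reserves, for every demanding vertex, a child- or leaf-witness of locally unique colour. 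Showing that three colours give enough slack to meet all apex constraints at once is where the care lies; note this is a genuinely stronger statement than what the treewidth bound gives, since $\mathsf{tw}(G)\le {\sf fvs}(G)+1$ would only yield $2\,{\sf fvs}(G)+3$.

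For the treewidth bound I would reformulate ONCF-colouring as a proper colouring of an auxiliary conflict graph $H$: after fixing, for every vertex $v$, a witness neighbour $w(v)$, put an edge of $H$ between $w(v)$ and each other vertex of $N(v)$. A proper colouring of $H$ is then exactly an ONCF-colouring in which each $w(v)$ is the promised unique neighbour. Taking a tree decomposition of width $t$ and choosing each witness $w(v)$ inside a bag that contains $v$, one sees that the endpoints of every conflict edge share a bag with $v$, so $H$ inherits bounded width; the value $2t+1$ reflects the fact that a vertex participates in conflict edges both as a witness and as a witnessed neighbour. Concretely I would bound the degeneracy of $H$ by $2t$ and then extract a proper $(2t+1)$-colouring by greedy elimination. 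The main obstacle here is choosing the witness function so that $H$ provably has degeneracy at most $2t$ rather than merely bounded, which is exactly where the tree-decomposition structure, together with the standing assumption that $G$ has no isolated vertices, must be used.
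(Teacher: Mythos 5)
The vertex-cover and treewidth parts of your plan are workable, though with a repair needed in the latter. Your conflict-graph reformulation is correct, but the justification ``the endpoints of every conflict edge share a bag with $v$'' is not: the witness $w(v)$ and another neighbour $u\in N(v)$ each share \emph{some} bag with $v$, not necessarily the same one, so $H$ does not obviously inherit bounded width. The degeneracy bound $2t$ does hold, but only for the right witness choice: order the vertices by the depth of their topmost bag and let $w(v)$ be the earliest-coloured neighbour of $v$. Then $v$ has no back-edges to other neighbours of vertices it witnesses (it is their earliest neighbour), and its back-neighbours in $H$ are (a) the witnesses $w(u)$ of the at most $t$ neighbours $u$ of $v$ that precede $v$, and (b) witnesses $w(u)$ for later neighbours $u$ with $w(u)\prec v$; in case (b) both $w(u)$ and $v$ lie in the topmost bag of $u$ and hence $w(u)$ lies in the topmost bag of $v$, so there are at most $t$ of these. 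This is essentially the paper's greedy argument in different clothing. The vertex-cover sketch likewise matches the paper's proof (rainbow cover plus one fresh colour, then recolour one witness per cover vertex with all-independent neighbourhood, then a case analysis to shave the last colour); the case analysis you defer is exactly where the paper spends its effort, and the two obstructions you name are the right ones.

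The feedback-vertex-set part has a genuine gap. You insist that the forest $G-F$ be coloured with only three reserved colours and that the apex constraints --- each $x\in F$ with $N(x)$ entirely inside the forest must see a uniquely coloured neighbour --- be met within those three colours. This amounts to conflict-free colouring the hypergraph $\bigl\{N(x) : x\in F,\ N(x)\cap F=\emptyset\bigr\}$ over the forest with a constant palette, and that is false in general: take the $1$-subdivision of $K_n$ and let $F$ be the set of subdivision vertices (a valid FVS all of whose members are apexes); the apex constraints then say exactly that the $n$ branch vertices are properly coloured, which needs $n$ colours, not three. So any proof along your lines must exploit minimality of $F$ in an essential way, which your sketch does not, and it is not clear that even minimality rescues the claim. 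The paper avoids the issue by a different mechanism: it keeps two colours for the forest's own ONCF needs and then \emph{reuses the private colours $c_x$ of the feedback vertices inside the forest}, recolouring one forest neighbour of each problematic apex to a colour that occurs at most once in $G-F$ (with a separate while-loop to handle forest vertices all of whose neighbours are such apexes). The witnesses of apex neighbourhoods are thus globally rare colours rather than members of a constant palette. Without that idea, or a proof of your three-colour claim, the bound $\chi_{\sf ON}(G)\leq {\sf fvs}(G)+3$ does not follow from your plan.
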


In order to prove the above theorem, we prove each item separately. In the following lemma, we consider the bound on $\chi_{\sf ON}$ with respect to the treewidth of a given graph.
\begin{lemma}\label{lem:treewidth-bd}
If $G$ is a graph with treewidth $t$, then $\chi_{\sf ON}(G) \leq 2t+1$.
\end{lemma}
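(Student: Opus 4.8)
The plan is to exploit a width-$t$ tree decomposition to obtain an elimination ordering of $V(G)$, and then to $2$-color \dots rather, $(2t+1)$-color the vertices greedily in this order while simultaneously assigning to each vertex a \emph{protected witness}: a neighbor whose color is forced to be unique in its open neighborhood. First I would fix a chordal supergraph $H \supseteq G$ having the same treewidth $t$ (the fill-in associated to a nice tree decomposition), together with a perfect elimination ordering, and label the vertices $v_1,\dots,v_n$ by the order in which they are forgotten (bottom-up). The key structural property is that for each $i$ the set of later neighbors in $H$, namely $L_i^H := N_H(v_i)\cap\{v_{i+1},\dots,v_n\}$, is a clique of size at most $t$; consequently the later $G$-neighbors $L_i := N_G(v_i)\cap\{v_{i+1},\dots,v_n\}$ satisfy $L_i \subseteq L_i^H$ and $|L_i|\le t$. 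Throughout I assume $G$ has no isolated vertex, as noted in the section.

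Next I would color the vertices in the reverse order $v_n, v_{n-1}, \dots, v_1$ with the palette $[2t+1]$. When coloring a vertex $u$, the already-colored neighbors of $u$ are exactly $L_u$ (a subset of the clique $L_u^H$). I forbid two kinds of colors: (a) the colors of the already-colored $H$-neighbors of $u$, i.e. the colors used on $L_u^H$, which keeps the whole coloring a proper coloring of $H$ and hence makes every $L_i$ \emph{rainbow}; and (b) for each already-colored $G$-neighbor $z\in L_u$ that has been assigned a witness $w_z$, the single color $c(w_z)$, which protects $z$'s witness from being spoiled by $u$. Since $|L_u^H|\le t$ and $|L_u|\le t$, at most $2t$ colors are forbidden, so a free color in $[2t+1]$ always exists. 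After choosing $c(u)$, if $u$ has an already-colored neighbor I pick any $w_u\in L_u$ as its witness; the rainbow property of $L_u$ makes $c(w_u)$ unique among $L_u$, and I then record a demand forbidding $c(w_u)$ on every not-yet-colored neighbor of $u$ (these all lie in $E_u := N_G(u)\setminus L_u$), which keeps $c(w_u)$ unique in all of $N(u)$.

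The main obstacle is the vertices $u$ with $L_u=\emptyset$ — the ``local maxima'' of the forget order, including $v_n$ — which have no already-colored neighbor and thus cannot be given a protected witness at the moment they are colored. I would resolve this by \emph{deferring}: such a $u$ waits, and the first of its neighbors to be colored, say $x_0$, is declared to be $w_u$; at that point I forbid $c(x_0)$ on all remaining uncolored neighbors of $u$. Since $x_0$ is the first colored neighbor, no neighbor of $u$ was colored before it, so $c(x_0)$ is automatically unique in $N(u)=E_u$ once the deferred demands are imposed. The crucial bookkeeping to verify is that \emph{every} demand placed on a vertex $u'$ originates from a neighbor $z$ with $u'\in E_z$, equivalently $z\in L_{u'}$; hence $u'$ receives at most $|L_{u'}|\le t$ demands regardless of whether the demanding neighbors used the direct or the deferred rule, which is exactly what preserves the $\le 2t$ bound in step (b). Finally, the first-forgotten vertices (the ``local minima'', e.g.\ $v_1$) have all their neighbors already colored and lying in a clique of $H$, so their neighborhoods are rainbow and conflict-free for free.

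To finish, I would argue correctness by a single pass over the three vertex types: local minima are witnessed by the rainbow of their (entirely later) neighborhood; ordinary vertices are witnessed by the directly chosen $w_u\in L_u$, unique by rainbowness within $L_u$ and by the demands against $E_u$; and local maxima are witnessed by the deferred $w_u=x_0$, unique by the deferred demands. As the palette has size $2t+1$ and a valid color is always available, this yields a valid ONCF-coloring and hence $\chi_{\sf ON}(G)\le 2t+1$. The only genuinely delicate point — and the one I would write out in full — is the demand-counting argument showing that the online witness-protection never forbids more than $2t$ colors at any vertex, in particular that the deferred witnesses for local maxima charge only to later-forgotten neighbors.
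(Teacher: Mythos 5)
Your proof is correct and follows essentially the same strategy as the paper's: both greedily color the vertices in reverse forget order of the tree decomposition with palette $[2t+1]$, at each step avoiding the at most $t$ colors of the already-colored bag/clique neighbors and the at most $t$ witness colors that must be protected, and both handle vertices with no earlier-colored neighbor by deferring the witness assignment to the first neighbor that gets colored (the paper's $f(v)=0$ placeholder). Your perfect-elimination-ordering packaging and the explicit charging of each witness demand on $u'$ to a distinct later neighbor in $L_{u'}$ amount to a somewhat more carefully bookkept write-up of the same argument.
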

\begin{proof}
Consider a nice tree decomposition ${\mathcal{T}}=(T,\{X_\bu\}_{\bu \in V(T)}))$ of $G$. We give a vertex coloring $c \colon V(G) \rightarrow [2t+1]$ of $G$, which we will prove to be a $(2t+1)$-ONCF-coloring of $G$. Furthermore, we give a function $f \colon V(G) \rightarrow [2t+1]$ such that $f(v)$ is the color that is uniquely used in the neighborhood of $v$. We will color the graph such that
\begin{itemize}
\item If two vertices are in the same bag $X_{\bi}$ of $\mathcal{T}$, they receive distinct colors.
\item The graph induced by the colored vertices is ONCF-colored, with the exception of vertices that are isolated.
\end{itemize}
 Observe that by these two properties, $c$ is not only a ONCF-coloring, but also a proper coloring of $G$.

Let an arbitrary leaf $\br \in T$ be the root, and let the corresponding bag be denoted as $X_\br$. Note that $|X_\br| \leq t+1$. Color each of the vertices in $X_\br$ with a unique color from $[t+1]$. For each vertex $v \in X_\br$ , if $N(v) \cap X_\br \neq \emptyset$, pick one arbitrary vertex $u \in N(v) \cap X_\br$ and let $f(v) := c(u)$. Else, let $f(v) := 0$.

Now, let $\bi$ be a vertex of $T$ such that its parent $\bj$ has been handled. We show how to color the vertices of $X_\bi$. If $X_\bi \subseteq X_\bj$, all vertices of $X_\bi$ have already been taken care of. Given that $T$ is a nice tree decomposition, the only alternative is $X_\bj$ is a forget node in $T$. Hence, $X_\bi = X_\bj \cup \{v\}$ for some $v \in V(G)$. Let $C := \{c(v) \mid v \in X_\bj \}$ and let $F := \{f(v) \mid v \in X_\bj\}$. Since $|X_\bi| \leq t+1$, it follows that $|X_\bj| \leq t$. Thereby, $|F \cup C| \leq 2t$. We color $v$ with a color from $[2t+1]\setminus (F \cup C)$. If there is a vertex $u \in X_\bj$ for which $f(u) = 0$ and $\{u,v\} \in E(G)$, we let $f(u) = c(v)$. Furthermore, if $v$ is not isolated in the graph constructed thus far, we let $f(v) = c(u)$ for some $u \in X_\bi$. Note that in the graph colored so far, $N(v) \subseteq X_\bj$.

We argue that $c$ is a $(2t+1)$-ONCF-coloring of $G$. Consider a vertex $v \in V(G)$. Let $X_\bj$ be the vertex that is a forget node for $v$ and that is closest to $X_\br$. By definition of a nice tree decomposition rooted at $\br$, $X_\bj$ is unique for the vertex $v$. Let $X_\bi$ be the child of $X_\bj$ - by definition $X_\bj$ contains $v$. Let $X_\ba$ be a bag closest to $X_\br$ that contains both $v$ and a neighbor of $v$. Then, by the definition of our coloring $c$, there is a vertex $w \in N(v) \cap X_\ba$ that witnesses the conflict-free coloring of $N(v)$.

Thus, $\chi_{\sf ON}(G) \leq 2t+1$.
\qed\end{proof}

A larger parameter than the treewidth of a graph is the size of a minimum feedback vertex set of a graph. In the following result, we compare $\chi_{\sf ON}$ with the size of a minimum feedback vertex set.
\begin{lemma}\label{lem:fvs-bd}
If $G$ is a graph with a feedback vertex set of size $\ell$, then $\chi_{\sf ON}(G) \leq \ell+3$.
\end{lemma}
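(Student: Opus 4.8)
The plan is to split the palette into $\ell$ ``cover colors'' and $3$ ``forest colors.'' Let $F$ be a feedback vertex set with $|F| = \ell$, so that $T := G - F$ is a forest. First I would assign the $\ell$ vertices of $F$ pairwise distinct colors $1, \ldots, \ell$, and reserve the three colors $\ell+1, \ell+2, \ell+3$ exclusively for $V(T)$. Since $T$ is a forest, its treewidth is at most $1$, so Lemma~\ref{lem:treewidth-bd} supplies a $3$-ONCF-coloring of $T$ (every non-isolated vertex of $T$ gets a uniquely colored neighbor \emph{within} $T$); this is the source of the ``$+3$.'' The goal is then to argue that combining these two colorings yields a valid $(\ell+3)$-ONCF-coloring of $G$, and to patch the one type of vertex for which the naive combination does not immediately work.

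The verification splits by whether a vertex has a neighbor inside $F$. For \emph{any} vertex $v \in V(G)$ that has a neighbor $u \in F$, the color $c(u)$ is a uniquely used color in $N(v)$: the colors $1,\ldots,\ell$ are pairwise distinct on $F$ and are disjoint from the three forest colors, so no second vertex of $N(v)$ can repeat $c(u)$. This single observation settles \emph{all} forest vertices adjacent to $F$ and \emph{all} $F$-vertices adjacent to $F$. The only vertices left to check are (a) forest vertices whose entire neighborhood lies in $T$, and (b) vertices $v \in F$ with $N(v) \subseteq V(T)$. Case (a) is already handled, because for such a vertex $N_G(v) = N_T(v)$ and the forest coloring from Lemma~\ref{lem:treewidth-bd} provides a uniquely colored neighbor inside $T$.

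The remaining set $F_0 := \{v \in F : N(v)\cap F = \emptyset\}$ is the crux, and I expect it to be the main obstacle: each $v \in F_0$ must receive a uniquely colored witness drawn from its forest neighborhood using only the three forest colors, and this demand must be met \emph{simultaneously} with the forest's own internal ONCF requirements. The plan is to produce the forest coloring not as a black box but by a rooted, top-down process on a tree decomposition of $G$ obtained by taking a width-$1$ nice tree decomposition of $T$ and adding all of $F$ to every bag; the $F$-colors stay fixed throughout, and only forest vertices are introduced. When a forest vertex $w$ is introduced, I would pick $c(w) \in \{\ell+1,\ell+2,\ell+3\}$ so as to keep $T$ properly colored (one constraint from its at-most-one forest bag-mate) while avoiding the already-committed witness color of any adjacent $F_0$-vertex. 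Crucially, the witnesses of $F_0$-vertices must be assigned so that no forest vertex is forced to dodge more than two committed colors at once, leaving a free third color; I would achieve this by choosing, for each $v \in F_0$, a witness neighbor that can be reserved a private color (for instance a deepest neighbor in the rooted forest, whose descendants impose no competing demands) and by ordering the introductions accordingly. Bounding these simultaneous constraints to at most two is exactly the delicate point, and a careful analysis that the chosen witnesses never over-constrain a single forest vertex is where the real work lies; once that is established, the earlier case analysis shows $c$ is a valid $(\ell+3)$-ONCF-coloring, giving $\chi_{\sf ON}(G) \le {\sf fvs}(G) + 3$.
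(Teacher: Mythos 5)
Your skeleton matches the paper's high-level plan (distinct colors on the feedback vertex set $X$, a constant number of extra colors for the forest $G-X$, and the observation that any vertex with a neighbor in $X$ is automatically witnessed by that neighbor's globally-unique color), but the proof has a genuine gap exactly where you say "the real work lies": the vertices $F_0 = \{v \in F : N(v) \cap F = \emptyset\}$. You never establish the claim that witnesses can be chosen so that no forest vertex must simultaneously avoid all three forest colors, and it is not clear that your segregated-palette strategy can be made to work at all. Concretely, a forest vertex $u$ that is a non-witness neighbor of several $F_0$-vertices whose designated witnesses carry three distinct forest colors has no color left; and the natural fix of giving all witnesses one common reserved color fails because an $F_0$-vertex $v$ may contain in $N(v)$ the witnesses of several \emph{other} $F_0$-vertices, so that the reserved color appears twice in $N(v)$ while the remaining two forest colors (which must also internally ONCF-color the trees) repeat as well. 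Your "deepest neighbor / ordering of introductions" heuristic does not address these interactions, so the argument is incomplete at its crux.

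The paper avoids this difficulty by refusing to keep the two palettes segregated. After a cleanup loop, each remaining problematic $x \in F_0$ is itself recolored to a single shared extra color $r$, which frees up $x$'s personal color $c_x$; that color is then donated to one arbitrary forest neighbor of $x$. Since $c_x$ now occurs exactly once in the entire graph, it is automatically a valid witness for $N(x)$, and recoloring a forest vertex to a globally unique color cannot destroy the conflict-freeness of any other neighborhood containing it. The only residual danger is a forest vertex all of whose neighbors lie in $F_0$ and would all become $r$; the paper's while-loop handles precisely this case first by restoring the colors $c_x$ on such a neighborhood and recoloring the forest vertex itself. If you want to salvage your write-up, adopting this "donate the freed-up color $c_x$ to a forest neighbor" device is the missing idea; without it (or a completed version of your reservation argument) the proof does not go through.
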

\begin{proof}
Let $X$ be a FVS of size $\ell$ in $G$. Using the color set $\mathcal{C} := \{r,g,b\} \cup \{c_x \mid x \in X\}$, we define a vertex coloring $c:V(G) \rightarrow \mathcal{C}$, which we prove is a $(\ell+3)$-ONCF-coloring of $G$.
The general idea is as follows. Vertices in $X$ are colored with $c_x$, and vertices not in $X$ are conflict-free colored with $\{g,b\}$. The only problem with this coloring are vertices in $x \in X$ with no neighbors in $X$. We resolve this issue by a careful recoloring of some of these vertices with color $r$, and some of the vertices in $V(G)\setminus X$ with colors from $\{c_x \mid x \in X\}$.

Let $I := \{x \in X \mid N(x) \cap X = \emptyset\}$. Initialize $L = I$. Color the vertices in $G$ by the following procedure.
\begin{enumerate}
\item For all $x \in X \setminus L$, let $c(x) := c_x$.
\item\label{line:2} Each connected component $Y$ of $G-X$ is a tree. If $\vert Y \vert >1$, then ONCF-color $Y$ with colors $g$ and $b$. Otherwise, (when $|Y| = 1$), color the single vertex in $Y$ with $g$.
\item \textbf{while} there exists $u \in V(G) \setminus X$ such that $N(u) \subseteq L$
\item \quad For each $x \in N(u)$ let $c(x) := c_x$
\item \quad Let $L := L \setminus N(u)$
\item \quad Pick an arbitrary $x\in N(u)$ and let $c(u) := c_x$
\item\label{line:7} \textbf{for each} $x \in L$
\item \quad Let $c(x) := r$
\item\label{line:9} \quad If there is no $y \in L$, $v \in N(x)$ such that $c(v) = c_y$
\item \qquad Pick an arbitrary $v \in N(x)$ and recolor $v$ by letting $c(v) := c_x$
\end{enumerate}

We argue that $c$ is a ONCF-coloring. We consider the different vertices in $G$ and argue their ONCF-coloring by $c$. Let $L$ be the set $L$ as in line \ref{line:7} of the procedure.
\begin{itemize}
 \item Consider a vertex $v \in V(G) \setminus X$, if $N(v) \setminus X \neq \emptyset$. In line \ref{line:2},  $N(v) \setminus X$ is given a conflict-free coloring using the colors $\{b,g\}$, which are not used for vertices in $X$. In line \ref{line:9}, some of these vertices may be recolored to $c_x $ for $ x \in L$, but observe that any such color is used at most once in $G$. This implies that $N(v)$ is conflict-free colored by $c$.
 \item For a vertex $v \in V(G) \setminus X$, if $N(v) \subseteq X$, then it cannot be the case that $N(v) \subseteq \{x \in X \mid c(x) =r \}$ as this would contradict the while loop in the definition of the coloring. All other colors are used at most once on vertices in $X$ and therefore $N(v)$ is ONCF-colored by $c$.
 \item For a vertex $x \in X \setminus I$, there is a neighbor in $X$ the color of which is unique in $V(G)$. Hence, $N(x)$ is ONCF-colored by $c$.
 \item Consider a vertex $x \in I$ with $c(x) = c_x$. Note that $N(x) \subseteq V(G) \setminus X$. By definition of $c$, there is a vertex $v \in V(G) \setminus X$ and $y \in I$ such that $x,y \in N(v) \subseteq I$ and $c(v) = c_y$. The color $c_y$ is used at most once in $N(x)$, and thus it is ONCF-colored by $c$.
 \item Let $x \in I$ be a vertex with $c(x) = r$. Note that $N(x) \subseteq V(G) \setminus X$. By definition of $c$, there is a vertex $w \in N(x)$ that obtains a color $c_y$, for some $y \in I$. Since color $c_y$ is used only once in $V(G)\setminus X$, vertex $w$ witnesses the ONCF-coloring of $N(x)$.
\end{itemize}
Thus, $\chi_{\sf ON}(G) \leq \ell+3$.
\qed\end{proof}
Observe that the bound given in Lemma \ref{lem:fvs-bd} is close to being tight. If we start from a clique $K_k$ and subdivide each edge, the resulting graph has a feedback vertex set of size $k-2$ and needs $k$ colors to be ONCF-colored.

The next lemma bounds the value of $\chi_{\sf ON}(G)$ for graphs with a vertex cover of size~$k$. In particular, we improve the bound given by Gargano and Rescigno \cite[Lemma 4]{GarganoRR15}, who showed that $\chi_{\sf ON}(G) \leq 2k+1$.

\begin{lemma}\label{lem:vc-bd}
Let $G$ be a connected graph with ${\sf vc}(G) = k$. Then $\chi_{\sf ON}(G) \leq k+1$. Furthermore, if $G$ is not a star graph or an edge-star graph, then $\chi_{\sf ON}(G) \leq k$.
\end{lemma}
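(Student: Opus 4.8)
The plan is to start from a minimum vertex cover $S$ with $|S|=k$, colour the $k$ vertices of $S$ with pairwise distinct colours $1,\dots,k$, and then extend the colouring to the independent set $I := V(G)\setminus S$ using only these $k$ colours. The decisive simplification is that \emph{every} vertex $w\in I$ is automatically fine: since $S$ is a vertex cover we have $N(w)\subseteq S$, and the vertices of $S$ carry pairwise distinct colours, so every vertex of $N(w)$ is uniquely coloured (and $N(w)\neq\emptyset$ as $G$ has no isolated vertices). This holds no matter how $I$ is coloured, because we never recolour $S$. Hence the whole problem reduces to colouring $I$ so that every $v\in S$ has a uniquely coloured neighbour.

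As a first attempt I would colour all of $I$ with the colour $1$ of a distinguished vertex $v_1\in S$, choosing $v_1$ to be an endpoint of an edge of $G[S]$ whenever $G[S]$ has an edge. Under this colouring a vertex $v\in S$ has a unique neighbour as soon as it has an $S$-neighbour of colour $\neq 1$, since such a colour occurs exactly once in $N(v)$. Thus the only troublesome vertices, which I call \emph{bad}, are those with $N(v)\cap S\subseteq\{v_1\}$ and at least two neighbours of colour $1$; all of their neighbours lie in $I\cup\{v_1\}$. The key structural fact is that for a bad vertex $v\neq v_1$ its own colour $c(v)\neq 1$ occurs in the open neighbourhood of no vertex of $S$ other than $v_1$: indeed $v$ is adjacent to no cover vertex except possibly $v_1$, and apart from $v$ itself nothing currently carries colour $c(v)$.

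To repair a bad vertex $v\neq v_1$ I would choose a neighbour $w\in N(v)\cap I$ that is \emph{not} adjacent to $v_1$ and recolour it with $c(v)$. By the fact above, $c(v)$ then occurs exactly once, at $w$, in the neighbourhood of every $u\in N(w)\cap S$ (and $v_1\notin N(w)$, so $v_1$ is untouched). Consequently $w$ becomes a simultaneous unique witness for $v$ and for all of $w$'s cover-neighbours, while no vertex outside $N(w)$ is affected. This step is therefore \emph{monotone}: it makes $v$ happy and can never spoil an already-happy vertex. Iterating over all repairable bad vertices terminates and uses only $k$ colours; combined with the observation that $v_1$ is already happy when $G[S]$ has an edge (and that in the edgeless case every colour is absent from all $S$-neighbourhoods, so any second colour repairs every bad vertex), this yields $\chi_{\sf ON}(G)\le k$ in all non-degenerate situations.

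The step I expect to be the genuine obstacle --- and the one that produces the two exceptional families --- is the degenerate case in which the repair is blocked, i.e.\ some bad vertex $v$ has \emph{all} of its (at least two) neighbours adjacent to $v_1$ and there is no spare colour to recolour one of them safely. Here I would argue that, using connectivity, this tightly pins down the structure of $G$: when $k=1$ the cover is the single vertex $v_1$ and $G$ is a star, and when $k=2$ the blocked configuration forces $v_1$ and $v_2$ to share a large common neighbourhood, i.e.\ $G$ is an edge-star. For $k\ge 3$ I would show that there is always enough colour room (a colour outside $N(v_1)$ together with the other cover-neighbours of a chosen $w$) to complete a safe recolouring, so no exception arises. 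In every remaining case a single extra colour trivially absorbs the blocked bad vertices, giving the general bound $\chi_{\sf ON}(G)\le k+1$, and the sharper bound $\chi_{\sf ON}(G)\le k$ once stars and edge-stars are excluded. Making the $k\le 2$ case analysis exhaustive, and verifying that every blocked graph is genuinely a star or an edge-star, is where the careful work lies.
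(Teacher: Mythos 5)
Your setup is sound and overlaps substantially with the paper's proof: both colour the cover $S$ with (essentially) distinct colours, give the independent set a single default colour, and then recolour a few selected independent-set vertices to serve as unique witnesses for cover vertices; your observation that every vertex of $V(G)\setminus S$ is automatically ONCF-coloured is correct and is implicitly used in the paper as well. The genuine gap is in your treatment of the \emph{blocked} case, and it is not merely missing detail --- the structural claims you plan to verify are false. Take $G=K_{2,3}$ with cover $S=\{v_1,v_2\}$: here $G[S]$ is edgeless, $v_2$ is bad, and every neighbour of $v_2$ is adjacent to $v_1$, so your repair rule is blocked; yet $K_{2,3}$ is neither a star nor an edge-star (an edge-star requires the two central vertices to be \emph{adjacent}), and the lemma demands $\chi_{\sf ON}(K_{2,3})\le 2$, which your $k+1$ fallback does not deliver. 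Similarly, $K_{3,4}$ is blocked for every cover vertex, refuting the claim that for $k\ge 3$ ``no exception arises''; and the parenthetical that in the edgeless case ``any second colour repairs every bad vertex'' fails once two repair vertices of the same colour land in a third cover vertex's neighbourhood alongside two default-coloured vertices (e.g.\ $S=\{v_1,v_2,v_3\}$ independent, $N(v_1)=\{a,b,c,d\}$, $N(v_2)=\{a,b\}$, $N(v_3)=\{c,d\}$: recolouring $a$ and $c$ with the same new colour leaves $N(v_1)$ with no unique colour). So the plan cannot be completed as stated: the exceptional families are not characterised by blocking of your rule, and the fallback does not preserve the sharp bound.

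What these examples show is that the ``not adjacent to $v_1$'' restriction is sometimes exactly wrong --- in $K_{2,3}$ and $K_{3,4}$ the correct move is to recolour a \emph{common} neighbour, which simultaneously witnesses several cover vertices --- and that the choice of which independent-set vertex to recolour, and with which colour, must depend on the component structure of $G[S]$. This is precisely what the paper's proof does: it splits into the cases where $G[S]$ has a component of size at least three (choose a non-cut vertex $v^*$ of that component as the default colour), where $G[S]$ is edgeless (choose an independent-set vertex with two cover neighbours $u^*,w^*$ and colour it $c_{u^*}$), and where $G[S]$ has a size-two component (sub-case on whether its endpoints have a common neighbour), plus a separate two-colour argument for $k=2$ with $G[S]$ connected. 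To salvage your approach you would need to replace the single greedy rule by a case analysis of comparable granularity; the monotone-repair framing does not by itself absorb these configurations.
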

\begin{proof}
See Figure \ref{fig:vc-coloring} for a sketch of the colorings described in the proof.
\begin{figure}[t]
\centering
\includegraphics{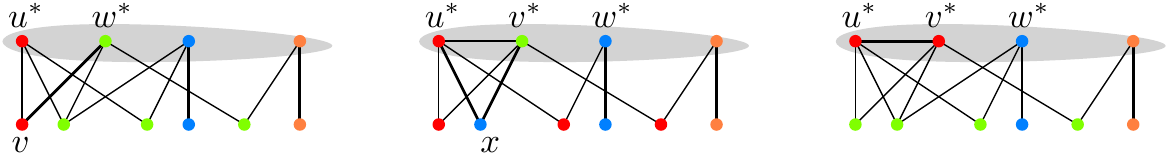}
\caption{(left) A coloring of the graph when all vertices in $G[S]$ are isolated. (middle) The case where $G[S]$ contains an edge and the endpoints have a common neighbor. (right) The case where $G[S]$ contains an edge and the endpoints have no common neighbors.}
\label{fig:vc-coloring}
\end{figure}

We start by proving the bounds for the case where $G$ is not a star and not an edge-star. Let $S$ be a minimum vertex cover of $G$ and let $k$ be the size of $S$. We do a case distinction on the size and connectedness of $S$.

\textbf{($k=2$ and $S$ connected)}
First, we prove the bounds for $k=2$ and $G[S]$ is an edge $\{u^*,v^*\}$. Note that $G$ is not an edge-star graph. Therefore at least one of $u^*$ or $v^*$ have neighbors with degree exactly $1$ in $G \setminus S$.  We show that it is possible to ONCF-color such a graph with $2$ colors, namely $r$ and $b$.
Without loss of generality, let $u^*$ have degree-$1$ neighbor $w^*$. We proceed as follows. Let $c(u^*) := c(w^*):= r$, and $c(v^*) := b$. For any other vertex in $V(G)$, let $c(v) := b$. It remains to verify that this is a ONCF-coloring. Any $v \notin S$ is clearly ONCF-colored by the fact that their neighborhood is a subset of $S$, and the vertices in $S$ receive different colors. Furthermore, $v^*$ has exactly one neighbor of color $r$ (namely $u^*$), and $u^*$ has one neighbor of color $r$, namely $w^*$, concluding this part of the proof.

\textbf{($G[S]$ disconnected or $k \geq 3$)}
We now prove the bounds for $k = 2$  and $G[S]$ is disconnected, and $k \geq 3$. We consider a number of cases.

\emph{(Suppose $G[S]$ contains a connected component $C$ of size at least three.)} Let $v^* \in C$ be a vertex such that $G[C\setminus \{v\}]$ remains connected. We color the vertices in $G$ as follows. For every vertex $u \in S$, let $c(u) := c_u$. For every vertex $u \in S$ that is isolated in $G[S]$, pick an arbitrary neighbor $v \notin S$ and (re)color $v$ such that $c(v) := c_u$. Notice that a vertex $v$ in $G\setminus S$ may be picked multiple times as the candidate for an arbitrary neighbor for an isolated vertex in $S$, and in this case the color of this vertex $v$ is set to the last color it is assigned.
For every vertex $v$ that is not yet colored, let $c(v) := c_{v^*}$.

 Note that by this definition, every vertex in $S$ has a distinct color. The colors that appear on vertices of $G\setminus S$ are either $c_{v^*}$ or the color of a vertex that is isolated in $G[S]$. Also, by the choice of $v^*$, every vertex in the component $C$ of $G[S]$ has at least one other neighbor in $C$. We verify that $c$ is an ONCF-coloring.
\begin{itemize}
\item From the above, every vertex $u \in S$ that belongs to a connected component of size at least two in $G[S]$ has a uniquely colored neighbor in S that witnesses the ONCF-coloring of $N(u)$. Note that this includes~$v^*$.
\item Every isolated vertex $u$ in $G[S]$ has $N(u) \subseteq V(G) \setminus S$. By description, every color $c_v$ for $v \neq v^*$ occurs at most once in $g \setminus S$, and $u$ sees at least one such vertex. Thus, $N(u)$ is ONCF-colored.
\item Finally, since all vertices in $S$ are distinctly colored, the neighborhoods of vertices in $V(G) \setminus S$ are ONCF-colored.
\end{itemize}
Also, notice that the number of colors used is $k$. Thus, we are done in this case.

\emph{(Suppose $G[S]$ only contains connected components of size one.)} Note that $|S| > 1$. Start by letting $c(v) := c_v$ for every vertex $v \in S$. Since $G$ is connected, there exists $v \notin S$ such that $|N(v)| \geq 2$. Pick two vertices $u^*,w^* \in N(v)$ with $u^* \neq w^*$. Let $c(v):= c_{u^*}$. For every vertex $u \in S\setminus \{u^*,w^*\}$ pick an arbitrary neighbor $v \notin S$ and recolor $v$ to $c_u$. Color the vertices that remained uncolored by this procedure with $c_{w^*}$. Note that by this procedure, every vertex in $S$ has a distinct color. To see that $c$ is a ONCF-coloring:
\begin{itemize}
\item Every vertex $v$ in $G[S]$ has $N(v) \subseteq V(G) \setminus S$. By description, every color $c_u$ for $u \neq w^*$ occurs at most once in $G \setminus S$. Furthermore, $v$ has at least one neighbor with a color in $\{c_u \mid u\neq w^* \wedge u \in S\}$. Thus, $N(v)$ is ONCF-colored.
\item Finally, since all vertices in $S$ are distinctly colored, the neighborhoods of vertices in $V(G) \setminus S$ are ONCF-colored.
\end{itemize}
Also, notice that the number of colors used is $k$. Thus, we are done in this case.

\emph{(Otherwise.)} In this case $G[S]$ has size at least $3$, contains multiple connected components, and at least one such component has size two. Let $C = \{u^*,v^*\}$ be a connected component in $G[S]$ and let $w^*$ be another arbitrarily chosen vertex. We do a further case distinction.
\begin{itemize}
\item Suppose there exists a vertex $x \notin S$ with $N(x) = \{u^*,v^*\}$, take one arbitrary such vertex. Then we let $c(v) := c_v$ for all $v \in S$, and we let $c(x) := c_{w^*}$. For every vertex $u \in S$ that is isolated in $G[S]$, pick an arbitrary neighbor $v \notin S$ and recolor $v$ to $c(v) := c_u$.
    Define $c(v) := c_{u^*}$ for all vertices $v$ that remained uncolored thus far. Notice that this coloring is very similar to the colorings in the previous cases. The only verification to be done is that for the sets $N(u^*)$ and $N(v^*)$ and both these neighborhoods have a vertex $x$ that is uniquely colored with $c_{w^*}$. Thus, with arguments similar to those in the previous cases we obtain a $k$-ONCF-coloring for $G$ in this case.
\item Alternatively, if there exists no vertex $x$ with $N(x) = \{u^*, v^*\}$, then we let $c(u^*) = c(v^*) = c_{u^*}$, and we let $c(v) := c_v$ for all vertices in $S \setminus \{u^*,v^*\}$. For every vertex $u \in S$ that is isolated in $G[S]$, pick an arbitrary neighbor $v \in N(u)$ and recolor $v$ to $c(v) := c_u$. Color all remaining vertices with $c_{v^*}$. Notice that $N(u^*)$ and $N(v^*)$ have $v^*$ and $u^*$ uniquely colored with $c_{u^*}$, respectively. Using arguments similar to previous cases, we can show that the described coloring is a $k$-ONCF-coloring.
    \end{itemize}

If $G$ is not a star and not an edge-star, we are in one of the cases above. Otherwise, it is easy to observe that stars have a vertex cover of size one and can always be colored with two colors, and edge-stars can be colored with three colors while having a minimum vertex cover size of two.
\qed\end{proof}

Observe that the bounds of Lemma~\ref{lem:vc-bd} are tight. First, a star graph requires~$2$ colors and has vertex cover size~$1$ while an edge-star graph requires~$3$ colors and has vertex cover size $2$. On the other hand, given an $q \geq 3$, taking the complete graph $K_q$ and subdividing each edge once results in a graph that requires~$q$ colors~\cite{GarganoRR15} for an ONCF-coloring and has a vertex cover of size~$q$.

Using Lemmas~\ref{lem:treewidth-bd}, \ref{lem:fvs-bd} and \ref{lem:vc-bd} we complete the proof of Theorem~\ref{thm:comb-bds}. 
 \section{Open Problems}\label{sec:conclusion}
The study in this paper leads to some interesting open questions. In this paper we only exhibit a generalized kernel of size $\Oh(k^{10})$ for $2$-\CNCF and it remains to resolve the size of tight polynomial kernels for the problem. 
 On the combinatorial side, with respect to minimum vertex cover, we obtain tight upper bounds on $\chi_{ON}(G)$. It would be interesting to obtain corresponding tight bounds for $\chi_{ON}(G)$ with respect to feedback vertex set and treewidth.

\bibliographystyle{splncs04}
\bibliography{main}
\end{document}